\newcommand{\tr}{\textnormal{tr}}
\newcommand{\ket}[1]{| #1 \rangle}
\newcommand{\bra}[1]{\langle #1 |}
\newcommand{\braket}[2]{\langle #1 | #2 \rangle}
\newcommand{\proj}[2]{| #1 \rangle\!\langle #2 |}
\newcommand{\proji}[3]{| #1 \rangle\!\langle #2 |_{#3}}
\newcommand{\id}{\ensuremath{\mathds{1}}}
\def\beq{\begin{equation}}
\def\eeq{\end{equation}}
\def\bq{\begin{quote}}
\def\eq{\end{quote}}
\def\ben{\begin{enumerate}}
\def\een{\end{enumerate}}
\def\bit{\begin{itemize}}
\def\eit{\end{itemize}}
\def\ra{\rightarrow}
\def\lb{\left(}
\def\rb{\right)}
\def\lset{\lbrace}
\def\rset{\rbrace}
\def\l|{\left|}
\def\r|{\right|}
\def\lbr{\left[}
\def\rbr{\right]}
\def\ident{\textnormal{id}}
\def\one{\id}
\newcommand\C{\mathbbm{C}}
\newcommand\R{\mathbbm{R}}
\newcommand\N{\mathbbm{N}}
\newcommand\M{\mathcal{M}}
\newcommand\D{\mathcal{D}}
\newcommand{\Tm}{\mathcal{T}}
\newcommand{\ketbra}[1]{|#1\rangle\langle#1|}
\theoremstyle{plain}
\newtheorem{thm}{Theorem}[section]
\newtheorem{lem}{Lemma}[section]
\newtheorem{cor}{Corollary}[section]
\newtheorem{defn}{Definition}[section]
\theoremstyle{definition}
\begin{document}
\title{\vspace{-1.0cm}{\textbf{Relative Entropy Bounds on Quantum, Private and Repeater Capacities}}}

\author{Matthias Christandl}
\email{christandl@math.ku.dk}
\affiliation{QMATH, Department of Mathematical Sciences, University of Copenhagen, Universitetsparken 5, 2100 Copenhagen $\o$, Denmark}

\author{Alexander M\"uller-Hermes}
\email{muellerh@posteo.net, muellerh@math.ku.dk}
\affiliation{QMATH, Department of Mathematical Sciences, University of Copenhagen, Universitetsparken 5, 2100 Copenhagen $\o$, Denmark}
\begin{abstract}
We find a strong-converse bound on the private capacity of a quantum channel assisted by unlimited two-way classical communication. The bound is based on the max-relative entropy of entanglement and its proof uses a new inequality for the sandwiched R\'{e}nyi divergences based on complex interpolation techniques. We provide explicit examples of quantum channels where our bound improves upon both the transposition bound (on the quantum capacity assisted by classical communication) and the bound based on the squashed entanglement. As an application we study a repeater version of the private capacity assisted by classical communication and provide an example of a quantum channel with high private capacity but negligible private repeater capacity. 
\end{abstract}
\maketitle
\date{\today}

\tableofcontents
\section{Introduction}

The goal of Shannon theory~\cite{Shannon1948} is to quantify the amount of information that can be reliably transmitted using many copies of a communication channel. To protect the information from errors induced by the channel, particular coding schemes may be applied. For a given class of coding schemes a capacity can be defined quantifying the optimal rate of reliable information transmission achievable using schemes from the class. In quantum Shannon theory there are many different capacities describing relevant coding scenarios where certain types of classical or quantum assistance are allowed. Here we are interested in capacities where arbitrary classical communication between the two communicating parties is allowed to assist the transmission of quantum or private information. 

For a quantum channel $T:\M_{d_A}\ra\M_{d_B}$ we denote by $\mathcal{Q}_{\leftrightarrow}(T)$ ($\mathcal{P}_\leftrightarrow(T)$) its quantum (private) capacity assisted by two-way classical communication. While it is true that $\mathcal{P}_{\leftrightarrow}(T)$ is an upper bound on $\mathcal{Q}_{\leftrightarrow}(T)$ it is important to have simpler upper bounds in terms of single-letter quantities only depending on the quantum channel $T$. Not many such bounds on  $\mathcal{Q}_{\leftrightarrow}$ and $\mathcal{P}_{\leftrightarrow}$ are known: In \cite{takeoka2014squashed} the squashed entanglement of a quantum channel has been defined and shown to be an upper bound on $\mathcal{P}_{\leftrightarrow}$ (and therefore also on $\mathcal{Q}_{\leftrightarrow}$). The transposition bound (see \cite{holevo2001evaluating}) has been shown to be a strong-converse bound on $\mathcal{Q}_{\leftrightarrow}$ in \cite{muller2016positivity}. Finally, in \cite{berta2013entanglement} the entanglement cost of a quantum channel has been defined and shown to be a strong-converse bound on $\mathcal{Q}_{\leftrightarrow}$.  
  
For particular classes of channels other upper bounds are known. Recently the class of teleportation covariant channels has received much attention in this context~\cite{pirandola2015ultimate,pirandola2016optimal,
takeoka2016unconstrained,wilde2016converse,pant2016rate,laurenza2016general}. Special cases of such channels have been considered in \cite{bennett1996mixed}, and recently more relevant examples have been identified. In particular this family contains the Gaussian channels in infinite dimensions as an important subclass~\cite{pirandola2015ultimate}. We will be interested mostly in the finite-dimensional case. For a finite-dimensional teleportation covariant channel $T$ the capacity $\mathcal{P}_{\leftrightarrow}(T)$ is equal to the distillable key of the Choi-Jamiolkowski state $C_T$~\cite{choi1975completely} corresponding to the channel (see teleportation stretching~\cite{pirandola2015ultimate} for a generalization of these arguments to the case of infinite-dimensional quantum channels). Using that the relative entropy of entanglement $E_R$ is an upper bound on the distillable key~\cite{horodecki2005secure} any finite-dimensional teleportation covariant channel fulfills the bound (see~\cite{pirandola2015ultimate})  
\begin{equation}
\mathcal{P}_{\leftrightarrow}(T) = \mathcal{K}_{\leftrightarrow}(C_T)\leq E_R(C_T)
\label{equ:Pirandola}
\end{equation}
and this is also a strong-converse bound (see \cite{wilde2016converse}). It is still an open problem whether a similar bound based on the relative entropy of entanglement (possibly involving an optimization over the input state of the partial channel) holds for arbitrary quantum channels $T$. 

In this article we establish an upper bound on $\mathcal{P}_{\leftrightarrow}$ for arbitrary quantum channels in terms of the max-relative entropy of entanglement. Given a quantum channel $T:\M_{d_A}\ra\M_{d_B}$ its max-relative entropy of entanglement is defined as
\begin{equation}
E_{\max}(T) = \sup\lset E^{A':B}_{\max}\lb T^{A\ra B}\lb \rho_{A'A}\rb\rb ~:~ \rho_{A'A}\in\D\lb\C^{d_{A'}}\otimes \C^{d_A}\rb, d_{A'}\in\N\rset.
\end{equation}
Here $E^{A':B}_{\max}$ denotes the max-relative entropy of entanglement of states~\cite{datta2009min,datta2009max}. Our paper is structured as follows:
\begin{itemize}
\item In Section \ref{sec:dataProc} we use complex interpolation techniques to prove a new inequality (the ``data-processed triangle inequality'') for the sandwiched $\alpha$-R\'{e}nyi divergence (see Section \ref{sec:SandWich} for a definition).  
\item Using the data-processed triangle inequality we show in Section \ref{sec:StrongConverse} that for any quantum channel $T:\M_{d_A}\ra \M_{d_B}$ the quantity $E_{\max}(T)$ is a strong-converse bound on $\mathcal{P}_{\leftrightarrow}(T)$. 
\item In Section \ref{sec:Lock} we show that $E_{\max}(T)$ is non-lockable (see Corollary \ref{cor:RedChanUp} for the precise statement). We use this feature of our bound in Section \ref{sec:Flower} to give examples of channels, where our bound improves upon the previously known bounds (transposition bound, squashed entanglement bound and entanglement cost).
\item In Section \ref{sec:SimplUpp} we give a weaker upper bound on $\mathcal{P}_{\leftrightarrow}(T)$ for any quantum channel $T:\M_{d_A}\ra \M_{d_B}$ that is slightly easier to evaluate than our original $E_{\max}$ bound. As an application we then study a repeater version of the private capacity in Section \ref{sec:non-rep}, where the communicating parties can use an intermediate repeater station to perform private communication. We show that there are quantum channels $T$ which have a high private capacity, but where the repeated private capacity can be arbitrarily close to zero. This is the channel version of a result demonstrated in \cite{bauml2015limitations} where states connecting the three parties are given.   
\item In the Appendix we give an example of a quantum channel that cannot be implemented via an LOCC-protocol from any state preparable by a single use of the channel (see Definition \ref{defn:ImpleFrIm}). This property is needed to obtain a bound similar to \eqref{equ:Pirandola} based on the relative entropy of entanglement using the arguments of \cite{pirandola2015ultimate}.      
\end{itemize}

\section{Preliminaries}

In the following we denote the complex $d\times d$-matrices by $\M_d$ and the cone of positive matrices by $\M^+_d$. The $d\times d$ identity matrix is denoted by $\one_d$. The set of $d\times d$ quantum states (i.e.\ positive $d\times d$ matrices with trace $1$) is called $\D_d = \D\lb\C^d\rb$. Pure states will be denoted as projectors using the notation $\ketbra{\psi}\in\D_d$ for $\ket{\psi}\in\C^d$ with $\braket{\psi}{\psi} = 1$. On multipartite systems we will often use indices $A,B,\ldots$ to indicate the different tensor factors. For example we would write $\rho_{ABC}\in\D\lb \C^{d_A}\otimes \C^{d_B}\otimes \C^{d_C}\rb$ for a tripartite state. We use the common notation of omitting indices to denote partial traces (i.e.\ the state $\rho_{A}$ would be the marginal of $\rho_{ABC}$ on the $A$ system). For general linear maps $T:\M_{d_A}\ra\M_{d_B}$ we write $T^{A\ra B}\lb\rho_{AA'}\rb\in\D\lb\C^{d_B}\otimes\C^{d_{A'}}\rb$ to denote its partial application to the $A$ system of the state $\rho_{AA'}$. In this sense the Choi matrix~\cite{choi1975completely} of a linear map $T:\M_{d_A}\ra\M_{d_B}$ is denoted by 
\begin{equation}
C_T = T^{A\ra B}\lb\omega_{A'A}\rb, 
\label{equ:Choi}
\end{equation}
where $\omega_{A'A}\in\D\lb\C^{d_{A'}}\otimes \C^{d_A}\rb$ for $d_A=d_{A'}$ denotes the maximally entangled state in the computational basis (i.e.\ $\omega_{A'A} = \ketbra{\Omega_{A'A}}$ for $\ket{\Omega_{A'A}} = \frac{1}{\sqrt{d_A}}\sum^{d_A}_{i=1}\ket{i_{A'}i_{A}}$ ). We will also use the notation $\omega_d\in\D\lb\C^{d}\otimes \C^d\rb$ to denote this state in the cases where the concrete systems are not important. Most linear maps we will use are quantum channels (i.e.\ trace-preserving and completely positive~\cite{choi1975completely}). A well-known example of a positive, but not completely-positive, map is the transposition $\vartheta_d:\M_d\ra\M_d$ given by $\vartheta_d(X)=X^T$ in the computational basis. We will also use the notation $\vartheta_A$ to denote the partial transposition on a particular system (named $A$ in this case).

\subsection{Sandwiched $\alpha$-R\'{e}nyi divergences}
\label{sec:SandWich}

For quantum states $\rho,\sigma\in\D\lb\C^d\rb$ and a parameter $\alpha\in(1,\infty)$, the \emph{sandwiched $\alpha$-R\'{e}nyi divergence} \cite{muller2013quantum,wilde2014strong} is defined as
\begin{equation}
 D_\alpha\lb \rho\| \sigma\rb=
\begin{cases} 
\frac{1}{\alpha-1}\log\lb\tr\left[\lb \sigma^{\frac{1-\alpha}{2\alpha}}\rho \sigma^{\frac{1-\alpha}{2\alpha}}\rb^\alpha\right]\rb, & \mbox{if }  
\text{supp}[\rho]\subseteq\text{supp}[\sigma]
\\ +\infty, & \mbox{otherwise}.
\end{cases}
\label{equ:Renyi}
\end{equation}
In \cite{muller2013quantum} it has been shown that the limiting cases $\alpha=1$ and $\alpha=\infty$ of $D_\alpha$ coincide with quantities studied before: In the limit $\alpha\ra 1$ we have 
\[
D_\alpha\lb \rho\| \sigma\rb\ra D\lb \rho\|\sigma\rb = \text{tr}\lbr\rho\lb \log(\rho) - \log(\sigma)\rb\rbr
\]
which is the usual relative entropy~\cite{umegaki1962}. We will sometimes write $D_1$ to denote the relative entropy. Taking the limit $\alpha\ra\infty$ gives $D_\alpha\lb \rho\| \sigma\rb\ra D_{\max}\lb \rho\| \sigma\rb$ which is the max-relative entropy~\cite{datta2009min}. For quantum states $\rho,\sigma\in\D\lb\C^d\rb$ this quantity can be defined in two equivalent ways as 
\begin{equation}
D_{\max}\lb \rho\| \sigma\rb = \inf\lset\lambda\in\R^+ : \rho\leq 2^{\lambda}\sigma \rset = \begin{cases} 
\log\lb \| \sigma^{-\frac{1}{2}}\rho \sigma^{-\frac{1}{2}}\|_\infty\rb & \mbox{if }  
\text{supp}[\rho]\subseteq\text{supp}[\sigma]
\\ +\infty, & \mbox{otherwise}
\end{cases}
\label{equ:defDmax}
\end{equation}
using the convention $\inf \emptyset = +\infty$.  

In~\cite{beigi} it has been noted that the sandwiched $\alpha$-R\'{e}nyi divergence $D_\alpha$ (see \eqref{equ:Renyi}) for $\alpha > 1$ can be written in terms of a non-commutative $L_{\alpha,\sigma}$-norm $\|\cdot\|_{\alpha,\sigma}$ defined as 
\[
\| X\|_{\alpha,\sigma} = \tr\lbr\left|\sigma^{\frac{1}{2\alpha}}X\sigma^{\frac{1}{2\alpha}}\right|^\alpha\rbr^{\frac{1}{\alpha}}
\]
for any $X\in\M_d$ and $\sigma\in\M^+_d$. With the function $\Gamma_\sigma:\M_d\ra\M_d$ given by $\Gamma_\sigma(X) = \sigma^{1/2}X\sigma^{1/2}$ we can write 
\begin{equation}
D_\alpha\lb\rho\|\sigma\rb = \frac{1}{\alpha-1}\log\lb\| \Gamma_\sigma^{-1}\lb\rho\rb\|_{\alpha,\sigma}^\alpha\rb
\label{equ:SandwichNormVersion}
\end{equation}
for any quantum states $\rho,\sigma\in\D\lb\C^d\rb$ with $\text{supp}\lb\rho\rb\subseteq\text{supp}\lb\sigma\rb$ using the Moore-Penrose pseudo-inverse~\cite{penrose1955generalized} in the case where $\sigma$ is not full-rank. 

For a linear map $L:\M_{d_1}\ra\M_{d_2}$ we will use norms of the form
\begin{equation}
\| L\|_{(p,\sigma)\ra (q,\sigma')} = \sup_{X\in\M_{d_1}}\frac{\|L(X)\|_{q,\sigma'}}{\| X\|_{p,\sigma}},
\label{equ:ptoqNorms}
\end{equation}
which are the operator norms of the operator $L$ as a mapping from the space $(\M_{d_1},\|\cdot\|_{p,\sigma})$ to $(\M_{d_2},\|\cdot\|_{q,\sigma'})$. For $\sigma = \one_{d_1}$ and $\sigma' = \one_{d_2}$ the above definition gives the usual $p\ra q$-norms and we will use the common notation $\|\cdot\|_{p\ra q}$ in this case. The main technical tool we will use, is the following non-commutative Riesz-Thorin-type theorem. It should be noted that similar interpolation theorems have a long history (see~\cite{bergh2012interpolation}).

\begin{thm}[Riesz-Thorin Theorem for $L_{p,\sigma}$ spaces~\cite{beigi}]
Let $L:\M_{d_1}\ra\M_{d_2}$ be a linear map. For $1\leq p_0\leq p_1\leq \infty$ and $1\leq q_0\leq q_1\leq \infty$ and $\theta\in (0,1)$ we define $p_\theta$ via
\[
\frac{1}{p_\theta} = \frac{\theta}{p_0} + \frac{1-\theta}{p_1}.
\]
and $q_\theta$ analogous. Then for positive definite matrices $\sigma\in\M^+_{d_1}$ and $\sigma'\in\M^+_{d_2}$ we have
\[
\| L\|_{(p_\theta,\sigma)\ra (q_\theta,\sigma')} \leq \| L\|^\theta_{(p_0,\sigma)\ra (q_0,\sigma')} \| L\|^{1-\theta}_{(p_1,\sigma)\ra (q_1,\sigma')} .
\]
\label{thm:RieszT}
\end{thm}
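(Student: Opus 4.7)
The plan is to mimic the classical Riesz--Thorin argument via Hadamard's three-lines lemma, after first reducing the weighted statement to one about Schatten classes. Since for positive definite $\sigma$ the map $\Phi^\sigma_p(X) := \sigma^{1/(2p)} X \sigma^{1/(2p)}$ is an isometry between $(\M_{d_1}, \|\cdot\|_{p,\sigma})$ and the Schatten class $(\M_{d_1}, \|\cdot\|_p)$ (and similarly for $\sigma'$), one has
$$\|L\|_{(p,\sigma)\to(q,\sigma')} = \|\tilde{L}_{p,q}\|_{p\to q}, \qquad \tilde{L}_{p,q}(A) := (\sigma')^{1/(2q)} \, L\bigl(\sigma^{-1/(2p)} A \sigma^{-1/(2p)}\bigr) \, (\sigma')^{1/(2q)}.$$
Thus the task reduces to controlling the unweighted Schatten operator norm of the analytic family $\tilde{L}_{p_z,q_z}$ as $z$ ranges over the closed strip $\bar S = \{z\in\C : 0\leq \Re(z)\leq 1\}$, with $q_z$ defined in the natural way analogous to $p_z$.

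I would then invoke Schatten duality: fix $A\in\M_{d_1}$ and $B\in\M_{d_2}$ with polar decompositions $A = u|A|$ and $B = v|B|$, and define the matrix-valued curves $A(z) := u|A|^{p_\theta/p_z}$ and $B(z) := v|B|^{q'_\theta/q'_z}$, where $q'_z$ denotes the H\"{o}lder conjugate of $q_z$. After standard regularization (replacing $|A|$, $|B|$ by $|A|+\eps\one$, $|B|+\eps\one$ and letting $\eps\to 0^+$), the scalar function
$$F(z) = \tr\bigl[B(z)^*\, \tilde{L}_{p_z,q_z}(A(z))\bigr]$$
is analytic on the interior of $\bar S$, continuous and bounded on $\bar S$, and satisfies $F(\theta) = \tr[B^*\tilde{L}_{p_\theta,q_\theta}(A)]$. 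Taking the supremum of $|F(\theta)|$ over $B$ with $\|B\|_{q'_\theta} = 1$ recovers $\|\tilde{L}_{p_\theta,q_\theta}(A)\|_{q_\theta}$ by Schatten duality, and a further supremum over normalized $A$ yields $\|\tilde{L}_{p_\theta,q_\theta}\|_{p_\theta\to q_\theta}$.

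For the boundary estimate on $\Re(z) = 0$, note that $p_\theta/p_{it}$ has real part $p_\theta/p_0$, so $|A|^{p_\theta/p_{it}} = |A|^{p_\theta/p_0}\cdot|A|^{i\beta(t)}$ with $|A|^{i\beta(t)}$ unitary on the support of $A$, giving $\|A(it)\|_{p_0} = \|A\|_{p_\theta}^{p_\theta/p_0}$; likewise $\|B(it)\|_{q'_0} = \|B\|_{q'_\theta}^{q'_\theta/q'_0}$, and the $\sigma$, $\sigma'$ factors inside $\tilde{L}_{p_{it},q_{it}}$ split as $\sigma^{-1/(2p_0)}$ (resp.\ $(\sigma')^{1/(2q_0)}$) times unitary imaginary powers. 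Unitary invariance of Schatten norms combined with H\"{o}lder's inequality then yields
$$|F(it)| \leq \|L\|_{(p_0,\sigma)\to(q_0,\sigma')}\,\|A\|_{p_\theta}^{p_\theta/p_0}\,\|B\|_{q'_\theta}^{q'_\theta/q'_0},$$
and an analogous bound with $p_1, q_1$ holds on $\Re(z) = 1$. Hadamard's three-lines lemma applied to $F$ combined with the normalizations $\|A\|_{p_\theta} = \|B\|_{q'_\theta} = 1$ then produces the asserted inequality. The main obstacle is the careful bookkeeping in the boundary calculation: one must track where the unitary imaginary powers $|A|^{it}$, $\sigma^{it}$, and $(\sigma')^{it}$ appear in $\tilde{L}_{p_z,q_z}(A(z))$ and verify that each factor can be absorbed via unitary invariance without enlarging the estimate.
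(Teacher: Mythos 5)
The paper itself does not prove this theorem --- it is imported from the cited reference \cite{beigi} --- but your outline reproduces exactly the argument given there: reduce the weighted norms to Schatten norms via the isometries $X\mapsto\sigma^{1/(2p)}X\sigma^{1/(2p)}$, then run a Stein--Hirschman three-lines argument on the analytic family $\tilde L_{p_z,q_z}$ with the complex-power test matrices $A(z)=u|A|^{p_\theta/p_z}$, $B(z)=v|B|^{q_\theta'/q_z'}$, handling the two points that genuinely need care (regularizing $|A|,|B|$ so the imaginary powers are unitary, and absorbing the factors $|A|^{i\beta}$, $\sigma^{i\gamma}$, $(\sigma')^{i\delta}$ by unitary invariance before applying H\"older and duality). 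The proof is correct. One bookkeeping remark: with the paper's convention $1/p_\theta=\theta/p_0+(1-\theta)/p_1$, the boundary $\mathrm{Re}\,z=0$ corresponds to $(p_1,q_1)$ and $\mathrm{Re}\,z=1$ to $(p_0,q_0)$, so $\mathrm{Re}(p_\theta/p_{it})=p_\theta/p_1$ rather than $p_\theta/p_0$ as you wrote; you have implicitly used the opposite (more standard) convention. Since the three-lines lemma pairs the exponent $\theta$ with the $\mathrm{Re}\,z=1$ boundary, the two swaps cancel and the stated inequality still emerges, but your boundary labels should be interchanged to be consistent with the theorem's definition of $p_\theta$.
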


A consequence of the previous theorem is the monotonicity of the sandwiched $\alpha$-R\'{e}nyi divergences under quantum channels for $\alpha >1$ (see \cite{beigi}), i.e.\ the inequality 
\begin{equation}
D_{\alpha}\lb T(\rho)\| T(\sigma)\rb\leq D_\alpha\lb\rho\|\sigma\rb
\label{equ:DataProc}
\end{equation} 
for any quantum channel $T:\M_{d_1}\ra\M_{d_2}$ and quantum states $\rho,\sigma\in\D_{d_A}$. Inequality~\eqref{equ:DataProc} also holds for trace-preserving positive maps $T$ as shown in \cite{muller2015monotonicity} and for quantum channels when $\alpha\geq\frac{1}{2}$  \cite{frank2013monotonicity, muller2013quantum}.

\subsection{$\alpha$-Relative entropies of entanglement and related measures}
 
For any $\alpha\geq 1$ we can introduce an $\alpha$-relative entropy of entanglement generalizing the usual relative entropy of entanglement (also introduced recently in \cite{wilde2016converse}). 

\begin{defn}[$\alpha$-Relative Entropy of Entanglement]
For a bipartite quantum state $\rho_{AB}\in\D(\C^{d_A}\otimes \C^{d_B})$ we define the $\alpha$-relative entropy of entanglement as 
\[
E^{A:B}_{\alpha}(\rho_{AB}) = \min\lset D_{\alpha}(\rho_{AB}\|\sigma_{AB}) ~:~ \sigma_{AB}\in\text{Sep}_{A:B}\lb \C^{d_A}\otimes \C^{d_B}\rb\rset
\]
where $\text{Sep}_{A:B}\lb \C^{d_A}\otimes \C^{d_B}\rb$ denotes the set of separable states w.r.t. the bipartition $A:B$.
\label{defn:alphaRelEntrEnt} 
\end{defn}
Using the convergence of $D_\alpha$ it is clear that $E^{A:B}_{\alpha}\ra E^{A:B}_R$ as $\alpha\ra 1$ for the relative entropy of entanglement denoted by $E_R$. Similarly we can take the limit $\alpha\ra\infty$ and obtain the max-relative entropy of entanglement\footnote{also known as log-robustness \cite{brandao2010reversible}.} 
\[
E^{A:B}_{\max}(\rho_{AB}) = \min\lset D_{\max}(\rho_{AB}\|\sigma_{AB}) ~:~ \sigma_{AB}\in\text{Sep}_{A:B}\lb \C^{d_A}\otimes \C^{d_B}\rb\rset, 
\]
which has been studied in \cite{datta2009min,datta2009max,brandao2010generalization}. For any $\alpha\geq 1$  the $\alpha$-relative entropy of entanglement can be used to quantify the transmission of entanglement over a quantum channel. We will focus on the case $\alpha = \infty$ and the following quantity (also recently introduced in \cite{wilde2016converse}): 

\begin{defn}[max-relative entropy of entanglement of a quantum channel]\hfill \\
For a quantum channel $T:\M_{d_A}\ra\M_{d_B}$ we define the max-relative entropy of entanglement of T as 
\begin{equation}
E_{\max}(T) = \sup\lset E^{A':B}_{\max}\lb T^{A\ra B}\lb \rho_{A'A}\rb\rb ~:~ \rho_{A'A}\in\D\lb\C^{d_{A'}}\otimes \C^{d_A}\rb, d_{A'}\in\N\rset.
\label{equ:maxRelEntrQChan}
\end{equation}
\end{defn}
Using quasi-convexity of $D_{\max}$ (see \cite[Lemma 9]{datta2009min}) and the Schmidt-decomposition of pure quantum states it is not hard to show, that the dimension $d_{A'}$ appearing in the supremum can be chosen as the input dimension of the quantum channel. More specifically, for any quantum channel $T:\M_{d_A}\ra\M_{d_B}$ we get the following equivalent expression
\[
E_{\max}\lb T\rb = \max\lset E^{A':B}_{\max}\lb T^{A\ra B}\lb \ketbra{\psi_{A'A}}\rb\rb ~:~ \ketbra{\psi_{A'A}}\in\D\lb C^{d_{A'}}\otimes \C^{d_A}\rb\text{ for } d_{A'} = d_A\rset. 
\] 
In particular this shows that the max-relative entropy of a quantum channel is well-defined and we will use a $\max$ instead of the $\sup$ in \eqref{equ:maxRelEntrQChan} to indicate that the optimum is attained.

\subsection{Quantum capacities assisted by classical communication}
   
A quantum channel on bipartite systems $L:\M_{d_A}\otimes\M_{d_{B}}\to\M_{d_{A'}}\otimes\M_{d_{B'}}$ is called implementable via local operations and classical communications (w.r.t.\ bipartitions $A:B$ and $A':B'$ of the input and output systems, respectively) if it can be written as a composition of any number of channels $L_{A_q:B_q\to A'_qA'_c:B'_qB'_c}$ of the following form ($X_{A_qB_q}\in\M_{d_{A_q}}\otimes\M_{d_{B_q}}$):
\begin{align}\label{LOCCkraus}
L_{A_q:B_q\to A'_qA'_c:B'_qB'_c}(X_{A_qB_q})=\sum_{i,j}(K^A_i\otimes K^B_j)X_{A_qB_q}(K^A_i\otimes K^B_j)^\dagger\otimes\ket{j}\bra{j}_{A'_c}\otimes\ket{i}\bra{i}_{B'_c}.
\end{align}
Here $K^A_i:\C^{|A_q|}\to\C^{|A'_q|}$ and $K^B_j:\C^{|B_q|}\to\C^{|B'_q|}$ $(i\in I, j\in J)$ are Kraus operators of quantum channels mapping system $A_q$ to $A'_q$ and system $B_q$ to $B'_q$ respectively (i.e.\ $\sum_i(K^A_i)^\dagger K^A_i=\id_{A_q}$ and $\sum_j(K^B_j)^\dagger K^B_j=\id_{B_q}$), and $\ket{j}_{A'_c}$ and $\ket{i}_{B'_c}$ are orthonormal bases belonging to (effectively classical) systems $A_c$ and $B_c$ of dimension $|J|$ and $|I|$ (see \cite{chitambar2014everything} for more details). In the following we will call a quantum channel implementable via local operations and classical communications simply an LOCC-operation. 

\begin{figure}
\includegraphics[scale=0.75]{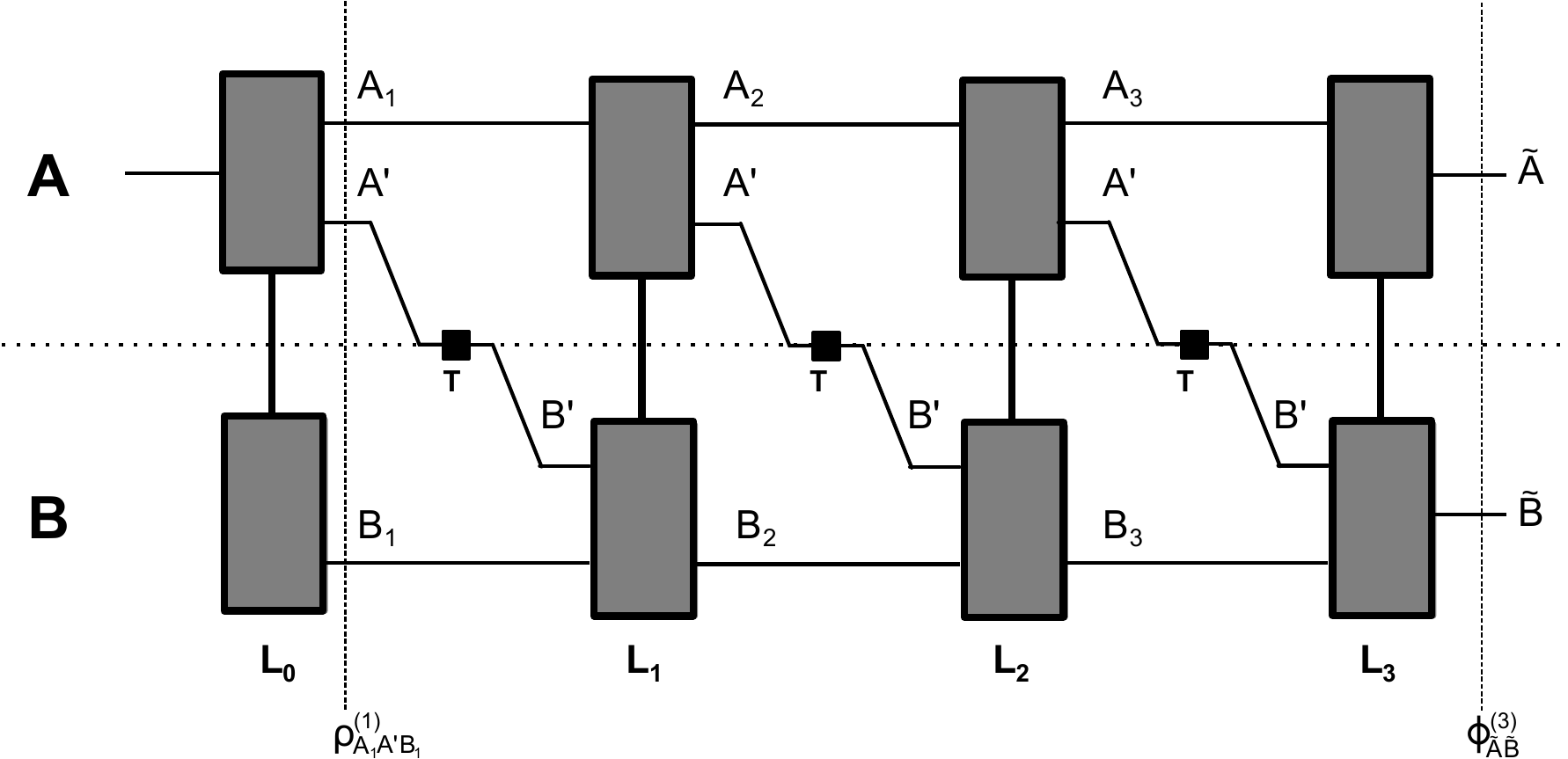}
\caption{Coding scheme assisted by classical communication (cf. Definition \ref{defn:CodingScheme}) in the case of $m=3$ uses of the channel $T:\M_{d_{A'}}\ra\M_{d_{B'}}$. Here $L_0$ denotes an LOCC-operation used to create the separable initial state $\rho^{(1)}_{A_1 A' B_1}$.}
\label{fig:CodingScheme}
\end{figure}

We can now define coding schemes assisted by classical communication:

\begin{defn}[Coding schemes assisted by classical communication]\hfill \\
Let $T:\M_{d_{A'}}\to\M_{d_{B'}}$ be a quantum channel. A coding scheme assisted by classical communication with $m$ uses of the channel $T$ is given by a separable initial state 
\[
\rho^{(1)}_{A_1 A' B_1}\in\text{Sep}_{A_1 A':B_1}\lb\C^{d_{A_1}d_{A'}}\otimes \C^{d_{B_1}}\rb
\]
and a set of LOCC-operations $\lset L_i\rset^{m}_{i=1}$ (see also Figure \ref{fig:CodingScheme}). Here
\[
L_i:\M_{d_{A_i}}\otimes \M_{d_{B'}}\otimes \M_{d_{B_i}}\ra \M_{d_{A_{i+1}}}\otimes \M_{d_{A'}}\otimes \M_{d_{B_{i+1}}}
\]  
for each $i\in\lset 1,\ldots ,m-1\rset$ and 
\[
L_{m}:\M_{d_{A_m}}\otimes \M_{d_{B'}}\otimes \M_{d_{B_m}}\ra \M_{d_{\tilde{A}}}\otimes \M_{d_{\tilde{B}}}
\]
are LOCC w.r.t. the bipartition into $A$ and $B$ systems for arbitrary dimensions $d_{A_i}, d_{B_i}, d_{\tilde{A}} , d_{\tilde{B}}$. The output state of the coding scheme will be denoted by 
\[
\phi^{(m)}_{\tilde{A}\tilde{B}} = L_m\circ\prod^{m-1}_{i=1} \lb T^{A'\ra B'}\circ L_i\rb\circ T^{A'\ra B'}\lb\rho^{(1)}_{A_1 A' B_1}\rb.
\]

\label{defn:CodingScheme}
\end{defn}

We will first state the definition of the quantum capacity assisted by two-way classical communication. In the presence of unlimited classical communication we can use quantum teleportation~\cite{bennett1993teleporting} to turn any entanglement generation protocol into a quantum communication protocol. Therefore, we can define the quantum capacity assisted by two-way communication in terms of entanglement generation.

\begin{defn}[Entanglement generation assisted by classical communication]\hfill \\
Given a quantum channel $T:\M_{d_{A'}}\to\M_{d_{B'}}$ consider a coding scheme assisted by classical communication with $m$ channel uses (as in Definition \ref{defn:CodingScheme}) given by LOCC-operations $\lset L_i\rset^{m+1}_{i=1}$, initial state $\rho^{(1)}_{A_1 A' B_1}$ and output state $\phi^{(m)}_{\tilde{A}\tilde{B}}\in\D\lb\C^{d_{\tilde{A}}}\otimes \C^{d_{\tilde{B}}}\rb$. Such a coding scheme is called an $(n,m,\epsilon)$-coding scheme for entanglement generation assisted by classical communication iff the output dimensions fulfill $d_{\tilde{A}} = d_{\tilde{B}} = 2^n$ and the output state satisfies
\[
\epsilon = \frac{1}{2}\|\phi^{(m)}_{\tilde{A}\tilde{B}} - \omega_{2^n}\|_1.
\] 

\label{defn:CodingSchemeEG}
\end{defn}

\begin{defn}[Quantum capacity assisted by classical communication]

We call $R\in\R^+$ an achievable rate for quantum communication over the channel $T$ assisted by classical communication iff for each $\nu\in\N$ there exists a $(n_\nu,m_\nu,\varepsilon_\nu)$-coding scheme for entanglement generation assisted by classical communication (as in Definition \ref{defn:CodingSchemeEG}) with $m_\nu\ra\infty$ as $\nu\ra\infty$ such that $R=\lim_{\nu\to\infty}\frac{n_\nu}{m_\nu}$ and $\lim_{\nu\to\infty}\varepsilon_\nu=0$. The quantum capacity of $T$ assisted by classical two-way communication ${\mathcal{Q}}_\leftrightarrow(\Tm)$ is defined to be the supremum of all such achievable rates.

\end{defn}

In a similar way we can define the private capacity assisted by classical two-way communication. It has been shown in \cite{horodecki2005secure} that the tasks of private communication using a quantum channel and public communication is equivalent to the task of distilling private states using a coding scheme assisted by classical communication (see also \cite{wilde2016converse}). We will begin by defining these states:

\begin{defn}[Private states~\cite{horodecki2005secure}]
A quantum state 
\[
\gamma_{A_kB_kA_sB_s}\in\D\lb\C^{d_{A_k}}\otimes \C^{d_{B_k}}\otimes \C^{d_{A_s}}\otimes \C^{d_{B_s}}\rb
\] 
with $d_{A_k} = d_{B_k} = K$ and $d_{A_s}=d_{B_s}$ is called a private state with $K$-dimensional key part iff it is of the form
\[
\gamma_{A_kB_kA_sB_s} = U^{\text{tw}}_{A_kB_kA_sB_s}\lb\omega_{A_kB_k}\otimes \sigma_{A_sB_s}\rb (U^{\text{tw}}_{A_kB_kA_sB_s})^\dagger 
\]
for some quantum state $\sigma_{A_sB_s}\in\D\lb\C^{d_{A_s}}\otimes \C^{d_{B_s}}\rb$ where we applied a twisting unitary of the form
\[
U^{\text{tw}}_{A_kB_kA_sB_s} = \sum^{d_{A_k}}_{i=1}\sum^{d_{B_k}}_{j=1} \ketbra{i}_{A_k}\otimes \ketbra{j}_{B_k}\otimes U^{ij}_{A_sB_s}
\]
with $U^{ij}_{A_sB_s}\in\mathcal{U}_{d_{A_s}d_{B_s}}$ unitary for any $i,j$. The systems $A_k,B_k$ are called the key part and $A_s,B_s$ the shield part of the private state.  
\label{defn:PrivState}
\end{defn}

It can be shown (see~\cite{horodecki2005secure}) that any private state with $K$-dimensional key part held by two parties $A$ and $B$ can be used to generate at least $\log_2(K)$ secret bits shared between the two parties (protected from any eavesdropper who might possess the purification of the state). Note that in the above definition there might be more than $\log_2(K)$ secret bits obtainable (i.e.\ the private state is not necessarily irreducible~\cite{horodecki2005secure}).   

Now we can define the private capacity assisted by classical communication as a private state generation capacity.

\begin{defn}[Coding scheme for private state generation assisted by classical communication]
Given a quantum channel $T:\M_{d_{A'}}\to\M_{d_{B'}}$ consider a coding scheme assisted by classical communication with $m$ channel uses (as in Definition \ref{defn:CodingScheme}) given by LOCC-operations $\lset L_i\rset^{m+1}_{i=1}$, initial state $\rho^{(1)}_{A_1 A' B_1}$ and output state $\phi^{(m)}_{\tilde{A}\tilde{B}}\in\D\lb\C^{d_{\tilde{A}}}\otimes \C^{d_{\tilde{B}}}\rb$. Such a coding scheme is called a $(k,m,\epsilon)$-coding scheme for private state generation assisted by classical communication iff the output dimensions factorize into $d_{\tilde{A}} = d_{A_k}d_{A_s}$ and $d_{\tilde{B}} = d_{B_k}d_{B_s}$ for $d_{A_k} = d_{B_k} = 2^k$ and $d_{A_s} = d_{B_s}$, and the output state satisfies
\[
\epsilon = \frac{1}{2}\|\phi^{(m)}_{\tilde{A}\tilde{B}} - \gamma_{A_kB_kA_sB_s}\|_1.
\]
for a private state $\gamma_{A_kB_kA_sB_s}\in\D\lb\C^{d_{\tilde{A}}}\otimes \C^{d_{\tilde{B}}}\rb$ with $2^k$-dimensional key part. 

\label{defn:PrivCodingScheme}
\end{defn}

\begin{defn}[Private capacity assisted by classical communication]
We call $R\in\R^+$ an achievable rate for private communication over the channel $T$ assisted by classical communication iff for each $\nu\in\N$ there exists a $(k_\nu,m_\nu,\varepsilon_\nu)$-coding scheme for private state generation assisted by classical communication (as in Definition \ref{defn:PrivCodingScheme}) with $m_\nu\ra\infty$ as $\nu\ra\infty$ such that $R=\lim_{\nu\to\infty}\frac{k_\nu}{m_\nu}$ and $\lim_{\nu\to\infty}\varepsilon_\nu=0$. The private quantum capacity of $T$ assisted by classical two-way communication ${\mathcal{P}}_\leftrightarrow(\Tm)$ is defined to be the supremum of all such achievable rates.
\label{defn:PrivCap}
\end{defn}

In the remaining part of this section we will discuss some general upper and strong-converse bounds on $\mathcal{Q}_\leftrightarrow$ and $\mathcal{P}_\leftrightarrow$. Recall that an upper bound $B\geq 0$ on either $\mathcal{Q}_\leftrightarrow$ or $\mathcal{P}_\leftrightarrow$ is called a \emph{strong converse bound} iff for any sequence of $(n_\nu,m_\nu,\varepsilon_\nu)$-coding schemes (for $\nu\in\N$) leading to a rate $R = \lim_{\nu\to\infty}\frac{n_\nu}{m_\nu} > B$ the error fulfills $\epsilon_\nu\ra 1$ as $\nu\ra\infty$.  We will start with the transposition bound (originally introduced in~\cite{holevo2001evaluating}), based on the matrix transposition $\vartheta_d :\M_{d}\ra\M_{d}$, i.e.\ $\vartheta_d(X) = X^T$ in any fixed basis.

\begin{thm}[Transposition bound~\cite{muller2016positivity}]
For any quantum channel $T:\M_{d_1}\ra\M_{d_2}$ we have
\[
\mathcal{Q}_\leftrightarrow(T)\leq \log_2\lb\|\vartheta_{d_2}\circ T\|_\diamond\rb
\]
and the above bound is a strong-converse bound.
\label{thm:transpBound}
\end{thm}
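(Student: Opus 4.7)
My plan is to establish the transposition bound by tracking the quantity $R(\rho_{AB}):=\|\rho_{AB}^{T_B}\|_1$ (the $1$-norm of the partial transpose, i.e.\ essentially the logarithmic negativity) through the coding scheme. Three facts drive the argument: (i) $R=1$ on separable states; (ii) $R$ is non-increasing under LOCC w.r.t.\ the Alice--Bob cut, a standard consequence of LOCC being PPT-preserving; and (iii) one use of the channel $T:\M_{d_1}\ra\M_{d_2}$ can inflate $R$ by at most a factor of $\|\vartheta_{d_2}\circ T\|_\diamond$.

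For the crucial point (iii), I would exploit that $\vartheta_B$ and $T^{A'\ra B'}$ act on disjoint tensor factors, so that for any $\rho_{AA'B}$ with Alice holding $AA'$ and Bob holding $B$,
\[
\bigl(T^{A'\ra B'}(\rho_{AA'B})\bigr)^{T_{B'B}} \;=\; (\vartheta_{d_2}\circ T)^{A'\ra B'}\!\bigl(\rho_{AA'B}^{T_B}\bigr).
\]
Since the diamond norm of the Hermitian-preserving map $\vartheta_{d_2}\circ T$ equals the supremum of the trace-norm-to-trace-norm operator norm of its ampliations, the right-hand side has $1$-norm bounded by $\|\vartheta_{d_2}\circ T\|_\diamond\cdot R(\rho)$. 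Iterating (i)--(iii) along the coding scheme of Definition~\ref{defn:CodingScheme}, starting from the separable $\rho^{(1)}$, yields $R(\phi^{(m)})\leq\|\vartheta_{d_2}\circ T\|_\diamond^m$.

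To close, I would connect $R$ to closeness to $\omega_{2^n}$. The identity $\omega_{2^n}^{T_B}=2^{-n}F$, where $F$ is the swap with $\|F\|_\infty=1$, gives $\tr\bigl[\omega_{2^n}\phi^{(m)}\bigr]\leq 2^{-n}R(\phi^{(m)})$, while $\epsilon=\tfrac{1}{2}\|\phi^{(m)}-\omega_{2^n}\|_1$ forces $\tr\bigl[\omega_{2^n}\phi^{(m)}\bigr]\geq 1-2\epsilon$ via H\"older. Combining,
\[
2^n(1-2\epsilon) \;\leq\; \|\vartheta_{d_2}\circ T\|_\diamond^{\,m},
\]
and the weak converse $\mathcal{Q}_\leftrightarrow(T)\leq\log_2\|\vartheta_{d_2}\circ T\|_\diamond$ follows on taking $m\ra\infty$ with $\epsilon\ra 0$.

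The hard part will be upgrading this to a strong converse in the strict sense of the excerpt ($\epsilon_\nu\ra 1$ whenever the rate strictly exceeds the bound): the soft fidelity inequality above only pushes $\epsilon_\nu\ra 1/2$. I would close this gap by replacing $R$ with a sandwiched R\'enyi analogue at some parameter $\alpha>1$, using the interpolation framework developed in Section~\ref{sec:SandWich}, which produces an exponential-in-$m$ bound on $1-\epsilon_\nu$ and hence forces $\epsilon_\nu\ra 1$. A more direct alternative, following~\cite{muller2016positivity}, is to combine multiplicativity $\|(\vartheta_{d_2}\circ T)^{\otimes m}\|_\diamond = \|\vartheta_{d_2}\circ T\|_\diamond^m$ with a PPT hypothesis-testing converse on the singlet fidelity of $\phi^{(m)}$.
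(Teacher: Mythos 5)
Your outline is essentially the standard argument (and, up to presentation, the one in the cited reference \cite{muller2016positivity}); the paper itself does not reprove Theorem \ref{thm:transpBound} but only cites it. Steps (i)--(iii) are sound: $\|\rho^{T_B}\|_1=1$ on separable states; for a separable channel $L$ with Kraus operators $K_i^A\otimes K_j^B$ the conjugated map $\vartheta_B\circ L\circ\vartheta_B$ has Kraus operators $K_i^A\otimes\overline{K_j^B}$ and is again CPTP, whence $\|L(\rho)^{T_B}\|_1\leq\|\rho^{T_B}\|_1$ (this is the precise mechanism, rather than ``LOCC is PPT-preserving''); and your commutation identity together with the definition of the diamond norm as the stabilized $1\to1$ norm gives the factor $\|\vartheta_{d_2}\circ T\|_\diamond$ per channel use.

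The one thing to fix is your closing step, and the ``hard part'' you flag is not actually hard: the loss of a factor of $2$ comes only from using the H\"older estimate $\tr[\omega_{2^n}\phi^{(m)}]\geq 1-2\epsilon$. Since $\omega_{2^n}$ is pure, $\tr[\omega_{2^n}\phi^{(m)}]=F(\phi^{(m)},\omega_{2^n})$ is exactly the fidelity, so the Fuchs--van de Graaf inequality (\cite[Theorem 1]{fuchs1999cryptographic}, used by the paper in exactly this way in Lemma \ref{Lem:BoundComErrPriv}) gives
\begin{equation*}
\epsilon\;\geq\;1-\sqrt{F(\phi^{(m)},\omega_{2^n})}\;\geq\;1-\sqrt{2^{-n}\,\|\vartheta_{d_2}\circ T\|_\diamond^{\,m}}\;=\;1-2^{-\frac{1}{2}\lb n-m\log_2\|\vartheta_{d_2}\circ T\|_\diamond\rb},
\end{equation*}
which tends to $1$ whenever $\liminf n_\nu/m_\nu>\log_2\|\vartheta_{d_2}\circ T\|_\diamond$. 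This already yields the strong converse in the strict sense of the paper's definition; neither the sandwiched R\'enyi machinery nor a PPT hypothesis-testing converse is needed, and no separate multiplicativity statement for $\|(\vartheta_{d_2}\circ T)^{\otimes m}\|_\diamond$ is required, since your round-by-round bookkeeping already produces the bound $\|(\phi^{(m)})^{T_{\tilde B}}\|_1\leq\|\vartheta_{d_2}\circ T\|_\diamond^{\,m}$ for adaptive protocols.
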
 

Another bound is based on the squashed entanglement introduced in~\cite{christandl2004squashed,tucci2002entanglement}. Recall the definition of the quantum conditional mutual information of a tripartite quantum state $\rho_{ABE}\in\D\lb\C^{d_A}\otimes\C^{d_B}\otimes \C^{d_E}\rb$ given by 
\[
I(A;B|E)_{\rho_{ABE}} = S(\rho_{AE}) + S(\rho_{BE}) - S(\rho_E ) - S(\rho_{ABE})
\] 
where $S(\sigma) = -\text{tr}\lb\sigma\log_2(\sigma)\rb$ denotes the von-Neumann entropy of a quantum state $\sigma$. Given a bipartite quantum state $\rho_{AB}\in\D\lb\C^{d_A}\otimes \C^{d_B}\rb$ a quantum state $\sigma_{ABE}\in\D\lb\C^{d_A}\otimes \C^{d_B}\otimes \C^{d_E}\rb$ is called an extension of $\rho_{AB}$ iff $\sigma_{AB} = \rho_{AB}$. For a bipartite quantum state $\rho_{AB}\in\D\lb\C^{d_A}\otimes \C^{d_B}\rb$ the squashed entanglement~\cite{christandl2004squashed,tucci2002entanglement} (w.r.t. the bipartition $A:B$) is defined as
\begin{equation}
E^{A:B}_\text{sq}\lb\rho_{AB}\rb = \frac{1}{2}\inf\lset I(A;B|E)_{\sigma_{ABE}}~:~\sigma_{ABE}\in\D\lb\C^{d_A}\otimes \C^{d_B}\otimes \C^{d_E}\rb \text{ extension of }\rho_{AB} \rset
\label{equ:squash}
\end{equation}
where the dimension $d_E\in\N$ is arbitrary. Now the following bound holds:

\begin{thm}[Squashed entanglement of a quantum channel~\cite{takeoka2014squashed}]
For any quantum channel $T:\M_{d_A}\ra\M_{d_B}$ we have
\[
\mathcal{Q}_\leftrightarrow(T)\leq \mathcal{P}_\leftrightarrow(T)\leq E_{\text{sq}}\lb T\rb
\]
where $E_{\text{sq}}\lb T\rb = \sup\lset E^{A':B}_{\text{sq}}\lb T^{A\ra B}\lb\rho_{A'A}\rb\rb~:~\rho_{A'A}\in\mathcal{D}\lb\C^{d_{A'}}\otimes \C^{d_{A}}\rb\rset$.
\label{thm:squashBound}
\end{thm}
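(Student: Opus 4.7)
The plan is to follow the standard ``entanglement measure'' approach to converse bounds on LOCC-assisted capacities, with squashed entanglement playing the role of the yardstick. I would work with a sequence of $(k_\nu,m_\nu,\varepsilon_\nu)$-coding schemes for private state generation (Definition \ref{defn:PrivCodingScheme}) achieving rate $R=\lim k_\nu/m_\nu$ with $\varepsilon_\nu\to 0$, and attempt to show $R\leq E_{\text{sq}}(T)$. The inequality $\mathcal{Q}_\leftrightarrow(T)\leq\mathcal{P}_\leftrightarrow(T)$ is handled separately by observing that the maximally entangled state $\omega_{2^n}$ is itself a private state with $2^n$-dimensional key part and trivial shield, so any $(n,m,\varepsilon)$-entanglement-generation scheme is also an $(n,m,\varepsilon)$-private-state scheme.

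First I would establish the \emph{private state lower bound}: for any private state $\gamma_{A_kB_kA_sB_s}$ with $2^k$-dimensional key part,
\[
E^{A_kA_s:B_kB_s}_{\text{sq}}(\gamma)\geq k.
\]
The idea is that any extension $\sigma_{A_kA_sB_kB_sE}$ of $\gamma$ can be ``untwisted'' by the inverse of $U^{\text{tw}}$ (which is local in $A_kA_s:B_kB_s$ and hence leaves $I(A_kA_s;B_kB_s|E)$ invariant), reducing the problem to extensions of a product of a maximally entangled state on $A_kB_k$ and an arbitrary state on $A_sB_s$; the conditional mutual information is then bounded below using the chain rule and the fact that $E_{\text{sq}}(\omega_{2^k})=k$.

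Next I would verify that squashed entanglement is an \emph{LOCC monotone} across the $A:B$ cut and then prove the key single-letter inequality: for any tripartite state $\rho_{\tilde{A}A'B}$ with $A'$ held by Alice, and any channel $T:\M_{d_{A'}}\to\M_{d_{B'}}$, the post-channel state $\tau_{\tilde{A}B'B}=T^{A'\to B'}(\rho_{\tilde{A}A'B})$ where $B'$ now sits with Bob satisfies
\[
E^{\tilde{A}:B'B}_{\text{sq}}(\tau)\leq E^{\tilde{A}A':B}_{\text{sq}}(\rho)+E_{\text{sq}}(T).
\]
This is the ``one channel use adds at most $E_{\text{sq}}(T)$'' lemma. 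It is proved by taking a near-optimal squeezing extension of $\rho$ and a near-optimal squeezing extension of the bipartite state $T^{A\to B}(\psi_{A'A})$ certifying $E_{\text{sq}}(T)$, combining them into an extension of $\tau$, and bounding the conditional mutual information using the chain rule $I(\tilde{A};B'B|E)\leq I(\tilde{A}A';B|E)+I(\tilde{A};B'|EBA')$ together with data processing. This is the main technical obstacle: constructing the joint squeezer and checking the chain-rule bound term by term is where all of the work lives, and it is what singles squashed entanglement out among entanglement measures.

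With these ingredients in place, the argument concludes telescopically. The initial state $\rho^{(1)}_{A_1A'B_1}$ is separable so $E^{A_1A':B_1}_{\text{sq}}=0$; each LOCC $L_i$ leaves $E_{\text{sq}}$ non-increasing; each of the $m_\nu$ applications of $T$ increases $E_{\text{sq}}$ by at most $E_{\text{sq}}(T)$ by the single-letter lemma; hence $E^{\tilde A:\tilde B}_{\text{sq}}(\phi^{(m_\nu)})\leq m_\nu E_{\text{sq}}(T)$. Combining with the private state lower bound applied to a nearby private state and invoking asymptotic continuity of squashed entanglement (Alicki--Fannes type, with the standard $\varepsilon\log d+h(\varepsilon)$-style correction in terms of $k_\nu$ and $\varepsilon_\nu$), I get
\[
k_\nu-o(m_\nu)\leq m_\nu E_{\text{sq}}(T),
\]
and dividing by $m_\nu$ and sending $\nu\to\infty$ yields $R\leq E_{\text{sq}}(T)$. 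Note this gives only a weak converse; a strong converse would require a strong-converse version of the private state lower bound, which is not available from these tools alone.
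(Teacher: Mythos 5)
First, a point of reference: the paper does not prove this theorem at all --- it is imported verbatim from \cite{takeoka2014squashed} and used as a benchmark --- so your proposal can only be measured against the argument in that reference. Your overall architecture is indeed the one used there: LOCC monotonicity of $E_{\text{sq}}$, a single-letter ``one channel use adds at most $E_{\text{sq}}(T)$'' lemma, a lower bound on the squashed entanglement of (approximate) private states, and a telescoping argument. Your observation that $\omega_{2^n}$ is itself a private state with $2^n$-dimensional key part and trivial shield correctly disposes of $\mathcal{Q}_\leftrightarrow(T)\leq\mathcal{P}_\leftrightarrow(T)$, and your closing caveat that this yields only a weak converse is consistent with the paper, which explicitly notes that a strong converse for $E_{\text{sq}}$ is open.

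However, two of your three key lemmas have genuine gaps. The most serious is the private-state lower bound: the claim that $U^{\text{tw}}$ ``is local in $A_kA_s:B_kB_s$'' is false. By Definition \ref{defn:PrivState}, $U^{\text{tw}}=\sum_{i,j}\ketbra{i}_{A_k}\otimes\ketbra{j}_{B_k}\otimes U^{ij}_{A_sB_s}$ is controlled on \emph{both} key registers and its targets $U^{ij}_{A_sB_s}$ act jointly on the two shields (they could, for instance, swap $A_s$ and $B_s$); generically it is a global unitary across the $A:B$ cut, so it neither preserves $I(A_kA_s;B_kB_s|E)$ nor reduces the problem to extensions of $\omega_{A_kB_k}\otimes\sigma_{A_sB_s}$. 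The bound $E^{A_kA_s:B_kB_s}_{\text{sq}}(\gamma)\geq k$ requires a genuinely different argument (Christandl's thesis for exact private states, and Wilde's later work for approximate ones). Second, the continuity step fails as described: the output state $\phi^{(m_\nu)}_{\tilde{A}\tilde{B}}$ lives on systems whose shield dimensions $d_{A_s}=d_{B_s}$ are completely uncontrolled by $k_\nu$ and $m_\nu$, so an Alicki--Fannes correction of order $\varepsilon_\nu\log d_{\tilde{A}}$ does not become $o(m_\nu)$; what is needed is the refined statement that a state $\varepsilon$-close to a private state with $2^k$-dimensional key satisfies $k\leq E_{\text{sq}}+f(\varepsilon)k+g(\varepsilon)$ with error depending only on the key dimension, and this is a separate nontrivial theorem. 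Finally, the chain-rule inequality you write for the single-letter lemma, $I(\tilde{A};B'B|E)\leq I(\tilde{A}A';B|E)+I(\tilde{A};B'|EBA')$, is not a generally valid manipulation of conditional mutual information (conditioning on the additional system can increase it); it holds only for the specifically constructed joint squashing extension, which is exactly the work you defer. As it stands the proposal is a correct roadmap to the cited proof rather than a proof.
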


To our knowledge it is currently not known, whether $E_{\text{sq}}\lb T\rb$ is a strong-converse bound on either $\mathcal{Q}_\leftrightarrow(T)$ or $\mathcal{P}_\leftrightarrow(T)$.

Finally, another bound is based on the entanglement cost of a quantum channel~\cite{berta2013entanglement}. For a bipartite quantum state $\rho_{AB}\in\D\lb\C^{d_A}\otimes \C^{d_B}\rb$ the entanglement of formation is defined as
\[
E^{A:B}_{F}\lb\rho_{AB}\rb = \sup_{\lset p_i, \ket{\psi_i}_{AB}\rset} \sum_i p_i S\lb\psi^A_i\rb
\] 
where the supremum is over all pure state decompositions $\rho_{AB} = \sum_i p_i \proj{\psi_i}{\psi_i}_{AB}$ and $\psi^A_i = \text{tr}_{B}\lb\proj{\psi}{\psi}_{AB}\rb$ denotes the reduced density matrix. The entanglement of formation of a quantum channel $T:\M_{d_A}\ra\M_{d_B}$ is defined as 
\[
E_F\lb T\rb = \sup_{\rho_{A'A}} E^{A':B}_{F}\lb T^{A\ra B}\lb\rho_{A'A}\rb\rb
\]
where the supremum is over bipartite states $\rho_{A'A}\in\D\lb\C^{d_{A'}}\otimes \C^{d_A}\rb$ for any dimension $d_{A'}$. The following theorem has essentially been proven in \cite{berta2013entanglement}:

\begin{thm}[Entanglement cost of a quantum channel]
For any quantum channel $T:\M_{d_A}\ra\M_{d_B}$ the entanglement cost of $T$ defined as 
\begin{equation}
E_{C}\lb T\rb = \lim_{n\ra\infty}\frac{1}{n} E_F\lb T^{\otimes n}\rb \leq E_F\lb T\rb
\label{equ:ChanEntCost}
\end{equation}
is a strong-converse bound on $\mathcal{P}_{\leftrightarrow}\lb T\rb$.
\label{thm:ECostBound}
\end{thm}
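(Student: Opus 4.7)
The plan is to follow the two-step strategy used in \cite{berta2013entanglement}: first, approximately simulate many uses of $T$ via LOCC assisted by pre-shared maximally entangled states at rate $E_C(T)$; second, invoke a strong-converse bound on private state distillation from ebits under LOCC.

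For the first step I would fix $\eta > 0$ and use the regularization in \eqref{equ:ChanEntCost} to pick a block size $k$ with $E_F(T^{\otimes k}) \leq k(E_C(T) + \eta)$. Let $\rho_{A'A}$ attain this bound, so that $E_F^{A':B}(T^{\otimes k}(\rho_{A'A})) \leq k(E_C(T)+\eta)$. Applying the Lo--Popescu entanglement dilution protocol to the optimal pure-state decomposition of $T^{\otimes k}(\rho_{A'A})$, for any $\varepsilon' > 0$ and sufficiently large $N$ one obtains an LOCC protocol that consumes $\lceil Nk(E_C(T)+2\eta)\rceil$ ebits and prepares an $\varepsilon'$-trace-norm approximation of $N$ independent copies of $T^{\otimes k}(\rho_{A'A})$. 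Since Alice can locally generate $\rho_{A'A}$ and feed the $A$ register through each channel use, this produces an LOCC simulation of $T^{\otimes Nk}$ on arbitrary Alice-held inputs, with diamond-norm error controlled by $\varepsilon'$.

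For the second step, consider any sequence of $(k_\nu, m_\nu, \varepsilon_\nu)$-coding schemes for private state generation with rate $R = \lim_\nu k_\nu / m_\nu > E_C(T)$, and pick $\eta$ so that $R > E_C(T) + 2\eta$. Replacing every block of $k$ channel uses by the simulation from Step 1 converts the scheme into a purely LOCC protocol that starts from $N_\nu := \lceil m_\nu(E_C(T)+2\eta)\rceil + O(1)$ ebits and outputs a state within trace distance $\varepsilon_\nu + (m_\nu/k)\varepsilon'$ of a private state with $2^{k_\nu}$-dimensional key part. The main theorem of Section~\ref{sec:StrongConverse} applied to the identity channel $\mathrm{id}_2$ (for which $E_{\max}(\mathrm{id}_2)=1$) says that $E_{\max}$ is a strong-converse bound on the private capacity of ebits under LOCC: any such LOCC protocol with $k_\nu > N_\nu$ has error tending to $1$. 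Since $k_\nu / N_\nu \to R/(E_C(T)+2\eta) > 1$, this forces $\varepsilon_\nu \to 1$, establishing the strong-converse property.

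The main obstacle is careful error bookkeeping: the per-block simulation error $\varepsilon'$ accumulates over $m_\nu/k$ applications, so one must take $\varepsilon'$ to shrink faster than $k/m_\nu$ while keeping $k$ large enough that $E_F(T^{\otimes k})/k$ is within $\eta$ of $E_C(T)$. A diagonal choice of parameters along $\nu$ (first fix $k$ depending on $\eta$, then let $\varepsilon'$ decay sufficiently fast in $\nu$) closes the argument, and sending $\eta \to 0$ yields the stated strong-converse bound.
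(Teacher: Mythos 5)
Your high-level strategy (simulate the channel from ebits at rate $E_C(T)$, then invoke a strong converse for private state distillation from ebits) is exactly the route the paper takes: it observes that the proof of \cite[Theorem 24]{berta2013entanglement} carries over once the strong-converse property of $\mathcal{Q}_\leftrightarrow\lb\ident_2\rb$ is replaced by that of $\mathcal{P}_\leftrightarrow\lb\ident_2\rb=1$. Your Step 2 is fine in spirit; indeed, instead of citing \cite{wilde2016converse} you could apply Lemma \ref{Lem:BoundComErrPriv} to $T=\ident_2$ (with $E_{\max}\lb\ident_2\rb=1$) to get the quantitative error bound $\epsilon\geq 1-2^{-\frac{\alpha-1}{2\alpha}(k_\nu-N_\nu)}$ for the simulated protocol, and the triangle-inequality bookkeeping you describe then closes the argument.

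However, your Step 1 has a genuine gap. Diluting ebits into $N$ copies of the \emph{fixed output state} $T^{\otimes k}\lb\rho_{A'A}\rb$ for an optimal input $\rho_{A'A}$ does not yield an LOCC simulation of the channel $T^{\otimes Nk}$ on arbitrary inputs. In the coding schemes of Definition \ref{defn:CodingScheme} the input to each channel use is chosen adaptively by the intermediate LOCC operations and may be entangled with registers held back by Alice, so what is needed is a diamond-norm approximation of the channel itself, valid for all inputs simultaneously. Possessing copies of a single output state only lets you recover the channel if the channel can be implemented by LOCC from that state --- precisely the property ``implementable from its image'' of Definition \ref{defn:ImpleFrIm}, which Theorem \ref{thm:nonImpl} in the Appendix shows can fail. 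The sentence ``Since Alice can locally generate $\rho_{A'A}$ and feed the $A$ register through each channel use'' does not bridge this gap: feeding a register through the channel uses the channel, whereas the simulation must replace it. The correct ingredient is the channel simulation theorem of \cite{berta2013entanglement}, which shows via the post-selection technique (not via entanglement dilution of a single output state) that $T^{\otimes n}$ can be $\varepsilon$-simulated in diamond norm by LOCC consuming $n\lb E_C\lb T\rb+\delta\rb$ ebits; this is the hard part of that paper and cannot be shortcut by the Lo--Popescu protocol alone. Once that theorem is granted, the rest of your argument goes through.
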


It has been shown in \cite[Proposition 5]{christandl2004squashed} that 
\[
E_{sq}\lb\rho_{AB}\rb\leq E_F\lb\rho_{AB}\rb
\]
for any bipartite quantum state $\rho_{AB}\in\D\lb \C^{d_A}\otimes \C^{d_B}\rb$. Therefore, it follows from Theorem \ref{thm:squashBound} that $E_{C}\lb T\rb$ is an upper bound on $\mathcal{P}_\leftrightarrow\lb T\rb$. That $E_C\lb T\rb$ is a strong-converse bound on $\mathcal{Q}_\leftrightarrow\lb T\rb$ has been shown in \cite[Theorem 24]{berta2013entanglement}. The inequality in the previous theorem is \cite[Lemma 14]{berta2013entanglement}. The fact that $E_C\lb T\rb$ is a strong-converse bound on $\mathcal{P}_\leftrightarrow\lb T\rb$ has \emph{not} been shown before, but follows easily from \cite{berta2013entanglement}. Specifically, the proof of \cite[Theorem 24]{berta2013entanglement} generalizes to the private capacity (possibly with modified error bounds) by simply using that $\mathcal{P}_\leftrightarrow\lb\ident_2\rb =1$ is a strong-converse capacity\footnote{this follows e.g.~from \cite[Proposition 18]{wilde2016converse} as $\ident_2$ is an erasure channel with erasure probability $0$.} instead of using \cite[Corollary 22]{berta2013entanglement} in the original proof.  

\section{The data-processed triangle inequality}
\label{sec:dataProc}

To establish new strong-converse bounds on the quantity $P_\leftrightarrow$ we need the following inequality for the sandwiched $\alpha$-R\'{e}nyi divergences (see Definition \ref{equ:Renyi}). We call this inequality the data-processed triangle inequality as it resembles a triangle inequality (although with changing distance measure) where some of the involved states are sent through a positive trace-preserving map.   

\begin{thm}[Data-processed triangle inequality]
Let $P:\M_{d_A}\ra\M_{d_B}$ be a positive and trace-preserving map. For any $\alpha \geq 1$ and any quantum states $\rho,\sigma'\in\D\lb\C^{d_A}\rb$ and $\sigma\in\D\lb\C^{d_B}\rb$ we have
\[
D_{\alpha}\lb P(\rho)\|\sigma\rb\leq D_{\alpha}(\rho\|\sigma') + D_{\max}(P(\sigma')\|\sigma).
\]
\label{thm:FundamentalLemma}
\end{thm}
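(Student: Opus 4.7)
The plan is to reformulate the inequality as an operator-norm bound on a specific linear map and to derive that bound via the noncommutative Riesz-Thorin theorem (Theorem~\ref{thm:RieszT}). Assuming for simplicity that $\sigma, \sigma'$ are full-rank (the singular case handled by the Moore-Penrose convention of \eqref{equ:SandwichNormVersion}), I introduce the Hermiticity-preserving, positive linear map
\begin{equation*}
L = \Gamma_\sigma^{-1} \circ P \circ \Gamma_{\sigma'}: \M_{d_A} \to \M_{d_B}, \qquad L(X) = \sigma^{-1/2}\, P\bigl(\sigma'^{1/2}\, X\, \sigma'^{1/2}\bigr)\, \sigma^{-1/2}.
\end{equation*}
Since $\Gamma_\sigma^{-1}(P(\rho)) = L(\Gamma_{\sigma'}^{-1}(\rho))$, the norm identity \eqref{equ:SandwichNormVersion} gives
\begin{equation*}
D_\alpha(P(\rho)\|\sigma) \leq \tfrac{\alpha}{\alpha-1}\log\|L\|_{(\alpha,\sigma')\to(\alpha,\sigma)} + D_\alpha(\rho\|\sigma'),
\end{equation*}
so it suffices to prove $\|L\|_{(\alpha,\sigma')\to(\alpha,\sigma)} \leq 2^{\frac{\alpha-1}{\alpha}\, D_{\max}(P(\sigma')\|\sigma)}$.

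For the norm bound I would apply Theorem~\ref{thm:RieszT} with $(p_0, q_0) = (1, 1)$, $(p_1, q_1) = (\infty, \infty)$ and $\theta = 1/\alpha$, so that $p_\theta = q_\theta = \alpha$, reducing the task to two endpoint estimates. For $\alpha = 1$: a direct computation gives $\|L(X)\|_{1,\sigma} = \|P(\Gamma_{\sigma'}(X))\|_1$ and $\|X\|_{1,\sigma'} = \|\Gamma_{\sigma'}(X)\|_1$, so the bound $\|L\|_{(1,\sigma')\to(1,\sigma)} \leq 1$ is equivalent to the trace-norm contraction $\|P(Y)\|_1 \leq \|Y\|_1$ for every $Y$; by duality this reduces to $\|P^*\|_{\infty\to\infty}\leq 1$, and since $P^*$ is positive and unital (as $P$ is trace-preserving), the Russo-Dye theorem yields $\|P^*\|_{\infty\to\infty} = \|P^*(\mathds{1})\|_\infty = 1$. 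For $\alpha = \infty$: for full-rank $\sigma, \sigma'$ the $(\infty,\cdot)$ norms reduce to the operator norm; the map $L$ is positive with $L(\mathds{1}) = \sigma^{-1/2}\, P(\sigma')\, \sigma^{-1/2}$, and another application of the Russo-Dye theorem delivers $\|L\|_{\infty\to\infty} = \|L(\mathds{1})\|_\infty = 2^{D_{\max}(P(\sigma')\|\sigma)}$, where the last equality is the definition \eqref{equ:defDmax} of $D_{\max}$.

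Combining the two endpoints through Theorem~\ref{thm:RieszT} yields
\begin{equation*}
\|L\|_{(\alpha,\sigma')\to(\alpha,\sigma)} \leq 1^{1/\alpha}\cdot 2^{(1-1/\alpha)\,D_{\max}(P(\sigma')\|\sigma)} = 2^{\frac{\alpha-1}{\alpha}\,D_{\max}(P(\sigma')\|\sigma)},
\end{equation*}
which on substitution establishes the data-processed triangle inequality for $\alpha > 1$; the case $\alpha = 1$ follows by taking $\alpha \to 1^+$ and invoking continuity $D_\alpha \to D$. The main obstacle is verifying the two endpoint norm bounds without appealing to complete positivity of $P$: the usual block-matrix and ampliation arguments fail when $P$ is only positive, and the critical input that makes the interpolation go through is the Russo-Dye norm identity $\|\Phi\|_{op} = \|\Phi(\mathds{1})\|$ for positive linear maps between unital $C^*$-algebras, which supplies sharp constants at both endpoints simultaneously and is precisely what allows the interpolation to close.
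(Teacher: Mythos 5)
Your proposal is correct and follows essentially the same route as the paper: the same map $\Gamma_\sigma^{-1}\circ P\circ\Gamma_{\sigma'}$, the same Riesz--Thorin interpolation between the $(1,1)$ and $(\infty,\infty)$ endpoints with $\theta=1/\alpha$, the same Russo--Dye endpoint evaluations, and the same limit $\alpha\to 1^+$ for the relative-entropy case. The only (harmless) differences are that you spell out the $1\to 1$ endpoint via duality in slightly more detail, while the paper is more careful about the degenerate support cases before reducing to full-rank $\sigma,\sigma'$.
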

\begin{proof}
Note that there is nothing to show whenever $\text{supp}\lb\rho\rb\nsubseteq \text{supp}\lb\sigma'\rb$ or $\text{supp}\lb P(\sigma')\rb\nsubseteq \text{supp}\lb\sigma\rb$. If $\text{supp}\lb\rho\rb\subseteq \text{supp}\lb\sigma'\rb$ holds true, then positivity of $P$ implies $\text{supp}\lb P(\rho)\rb\subseteq \text{supp}\lb P(\sigma')\rb$. Hence, $\text{supp}\lb P(\sigma')\rb\nsubseteq \text{supp}\lb\sigma\rb$ has to hold whenever both $\text{supp}\lb P(\rho)\rb\nsubseteq \text{supp}\lb\sigma\rb$ and $\text{supp}\lb\rho\rb\subseteq \text{supp}\lb\sigma'\rb$ are fulfilled. We can, therefore, restrict the proof to the cases where all the divergences in the inequality are finite, and w.l.o.g. to the case of full-rank $\sigma$ and $\sigma'$.  

Let $\rho,\sigma'\in\D\lb\C^{d_A}\rb$ and $\sigma\in\D\lb\C^{d_B}\rb$ be fixed quantum states with $\sigma$ and $\sigma'$ of full rank, and $P:\M_{d_A}\ra\M_{d_B}$ a positive trace-preserving map. Consider some fixed $\alpha >1$. By the definition of the $(\alpha,\sigma')\ra (\alpha,\sigma)$-norm (see \eqref{equ:ptoqNorms}) we have
\begin{equation}
\|\Gamma^{-1}_\sigma\circ P(\rho)\|_{(\alpha,\sigma)}\leq \|\Gamma^{-1}_\sigma\circ P\circ\Gamma_{\sigma'}\|_{(\alpha,\sigma')\ra (\alpha,\sigma)}\|\Gamma^{-1}_{\sigma'}(\rho)\|_{(\alpha,\sigma')}.
\label{equ:Proof1NormsBasicnn}
\end{equation}
Applying Theorem \ref{thm:RieszT} for $p_0 = q_0 = 1$ and $p_1 = q_1 = \infty$ and $\theta = \frac{1}{\alpha}$ gives
\begin{equation}
\|\Gamma^{-1}_\sigma\circ P\circ\Gamma_{\sigma'}\|_{(\alpha,\sigma')\ra (\alpha,\sigma)} \leq \|\Gamma^{-1}_\sigma\circ P\circ\Gamma_{\sigma'}\|^{\frac{1}{\alpha}}_{(1,\sigma')\ra (1,\sigma)}\|\Gamma^{-1}_\sigma\circ P\circ\Gamma_{\sigma'}\|^{1-\frac{1}{\alpha}}_{(\infty,\sigma')\ra (\infty,\sigma)} .
\label{equ:Proof1Riesznn}
\end{equation}
For any positive trace-preserving map we have
\begin{equation}
\|\Gamma^{-1}_\sigma\circ P\circ\Gamma_{\sigma'}\|_{(1,\sigma')\ra (1,\sigma)} = \|P\|_{1\ra 1} = 1
\label{equ:Proof11-normnn}
\end{equation}
and, using the Russo-Dye theorem~\cite[Corollary 2.9]{paulsen2002}, we have 
\begin{align}
\|\Gamma^{-1}_\sigma\circ P\circ\Gamma_{\sigma'}\|_{(\infty,\sigma')\ra (\infty,\sigma)} &= \|\Gamma^{-1}_\sigma\circ P\circ\Gamma_{\sigma'}\|_{\infty\ra \infty} \\
&= \|\Gamma^{-1}_\sigma\circ P\circ\Gamma_{\sigma'}(\mathbbm{1}_{d_A})\|_{\infty} = \|\Gamma^{-1}_\sigma\circ P(\sigma')\|_\infty .
\label{equ:Proof1inftynormnn}
\end{align}
Combining equations \eqref{equ:Proof1NormsBasicnn}, \eqref{equ:Proof1Riesznn},\eqref{equ:Proof11-normnn} and \eqref{equ:Proof1inftynormnn} we obtain
\[
\|\Gamma^{-1}_\sigma\circ P(\rho)\|_{(\alpha,\sigma)}\leq \|\Gamma^{-1}_\sigma\circ P(\sigma')\|^{1-\frac{1}{\alpha}}_\infty\|\Gamma^{-1}_{\sigma'}(\rho)\|_{(\alpha,\sigma')}
\]
Taking logarithms, dividing by $1 - \frac{1}{\alpha}$ and writing the resulting inequality in terms of the sandwiched $\alpha$-R\'{e}nyi divergence (see \eqref{equ:SandwichNormVersion}) finishes the proof for $\alpha >1$. Taking the limit $\alpha\ra 1$ gives the statement for $D_1 = D$.

\end{proof}

Note that Theorem \ref{thm:FundamentalLemma} contains some well-known inequalities as special cases. Setting $\sigma = P(\sigma')$ gives the data processing inequality \eqref{equ:DataProc} for the trace-preserving positive map $P$. In the case where $P=\ident_d$ the identity map we get the inequality
\begin{equation}
D_{\alpha}\lb \rho\|\sigma\rb\leq D_{\alpha}(\rho\|\sigma') + D_{\max}(\sigma'\|\sigma)
\label{equ:triangle}
\end{equation}
for any quantum states $\rho,\sigma,\sigma'\in\D\lb\C^d\rb$, which for $\alpha = \infty$ resembles a triangle inequality for $D_{\max}$ (which can easily be shown directly).

We will now apply Theorem \ref{thm:FundamentalLemma} to prove a statement quantifying how the $\alpha$-relative entropy of entanglement changes under partial application of a completely positive map:

\begin{thm}
For any quantum channel $T:\M_{d_{A'}}\ra\M_{d_{B'}}$ and any quantum state $\rho_{AA'B}\in\D(\C^{d_A}\otimes \C^{d_{A'}}\otimes \C^{d_B})$ for $d_A ,d_B\in\N$ arbitrary, we have
\[
E^{A:BB'}_{\alpha}\lb T^{A'\ra B'}(\rho_{AA'B})\rb\leq E_{\max}\lb T\rb + E^{AA':B}_\alpha(\rho_{AA'B}) 
\]
for any $\alpha \geq 1$.
\label{thm:Shifting}
\end{thm}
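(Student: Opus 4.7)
The plan is to reduce the statement to the data-processed triangle inequality by constructing the ``right'' separable test state on the output side from the optimal separable state on the input side.

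First I would pick a separable state $\sigma_{AA'B}\in\text{Sep}_{AA':B}$ attaining the minimum that defines $E^{AA':B}_{\alpha}(\rho_{AA'B})$, so that $D_\alpha(\rho_{AA'B}\|\sigma_{AA'B}) = E^{AA':B}_\alpha(\rho_{AA'B})$ (this minimum is attained since the set of separable states is compact and $D_\alpha(\rho\|\cdot)$ is lower semicontinuous; otherwise one takes an approximating sequence). I then fix a decomposition into product states across $AA':B$,
\[
\sigma_{AA'B} = \sum_i p_i\, \tau^i_{AA'}\otimes \tau^i_B.
\]

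Next, for each index $i$, I use the definition of $E_{\max}(T)$ (applied with reference dimension $d_{A}$): there exists $\eta^i_{AB'}\in\text{Sep}_{A:B'}$ such that
\[
T^{A'\ra B'}(\tau^i_{AA'})\leq 2^{E_{\max}(T)}\,\eta^i_{AB'},
\]
which is just the characterization \eqref{equ:defDmax} of $D_{\max}$. I then assemble the candidate separable state on the output side,
\[
\sigma_{ABB'} := \sum_i p_i\, \eta^i_{AB'}\otimes \tau^i_B.
\]
This state lies in $\text{Sep}_{A:BB'}$, because expanding each $\eta^i_{AB'}$ into its $A:B'$ product decomposition turns every summand into a product across $A:BB'$. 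Summing the defining inequalities with weights $p_i$ and tensoring with $\tau^i_B$ yields
\[
T^{A'\ra B'}(\sigma_{AA'B}) = \sum_i p_i\, T^{A'\ra B'}(\tau^i_{AA'})\otimes \tau^i_B \leq 2^{E_{\max}(T)}\,\sigma_{ABB'},
\]
hence $D_{\max}\bigl(T^{A'\ra B'}(\sigma_{AA'B})\,\bigl\|\,\sigma_{ABB'}\bigr)\leq E_{\max}(T)$.

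Now I apply Theorem \ref{thm:FundamentalLemma} with $P=T^{A'\ra B'}$ (which is positive and trace-preserving on the combined system $AA'B$), taking $\rho=\rho_{AA'B}$, $\sigma'=\sigma_{AA'B}$, and $\sigma=\sigma_{ABB'}$:
\[
D_\alpha\bigl(T^{A'\ra B'}(\rho_{AA'B})\,\bigl\|\,\sigma_{ABB'}\bigr) \leq D_\alpha(\rho_{AA'B}\|\sigma_{AA'B}) + D_{\max}\bigl(T^{A'\ra B'}(\sigma_{AA'B})\,\bigl\|\,\sigma_{ABB'}\bigr).
\]
Since $\sigma_{ABB'}\in\text{Sep}_{A:BB'}$, the left-hand side upper-bounds $E^{A:BB'}_\alpha(T^{A'\ra B'}(\rho_{AA'B}))$, and by construction the two terms on the right are at most $E^{AA':B}_\alpha(\rho_{AA'B})$ and $E_{\max}(T)$ respectively, giving the claim. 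The main delicate point I anticipate is verifying that $\sigma_{ABB'}$ is indeed separable across the desired cut $A:BB'$ (the ``rearrangement'' step above), and handling the case $\alpha=1$, which follows by taking $\alpha\downarrow 1$ in the already-established inequality together with the continuity of $D_\alpha$ in $\alpha$ on supported pairs.
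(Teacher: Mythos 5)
Your proof is correct and follows essentially the same route as the paper: both apply the data-processed triangle inequality (Theorem \ref{thm:FundamentalLemma}) with a separable reference state on the input side, decompose that state into products across $AA':B$, and control the resulting $D_{\max}$ term using the definition of $E_{\max}(T)$ on each product component. The only difference is presentational — you build the dominating separable output state $\sigma_{ABB'}$ directly from the operator inequalities $T^{A'\ra B'}(\tau^i_{AA'})\leq 2^{E_{\max}(T)}\eta^i_{AB'}$, which amounts to an in-place proof of the joint quasi-convexity and tensor-additivity of $D_{\max}$ that the paper instead cites as lemmas.
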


\begin{proof}
Let $T:\M_{d_{A'}}\ra\M_{d_{B'}}$ be a quantum channel and $\rho_{AA'B}\in\D(\C^{d_A}\otimes \C^{d_{A'}}\otimes \C^{d_B})$ a quantum state for some $d_A ,d_B\in\N$. Furthermore let $\alpha > 1$. For any $\sigma_{AB'B}\in\D(\C^{d_A}\otimes \C^{d_{B'}}\otimes \C^{d_B})$ and $\sigma_{AA'B}'\in\D(\C^{d_A}\otimes \C^{d_{A'}}\otimes \C^{d_B})$ an application of Theorem \ref{thm:FundamentalLemma} for $P = T^{A'\ra B'}$ leads to 
\[
D_\alpha\lb T^{A'\ra B'}(\rho_{AA'B})\|\sigma_{AB'B}\rb\leq D_{\max}\lb T^{A'\ra B'}(\sigma_{AA'B}')\|\sigma_{AB'B}\rb + D_{\alpha}\lb\rho_{AA'B}\|\sigma_{AA'B}'\rb .
\]
Minimizing over $\sigma_{AB'B}\in\text{Sep}_{A:B'B}\lb \C^{d_A}\otimes \C^{d_{B'}}\otimes \C^{d_B}\rb$ and restricting to states $\sigma_{AA'B}'\in\text{Sep}_{AA':B}\lb \C^{d_A}\otimes \C^{d_{A'}}\otimes \C^{d_B}\rb$ leads to 
\begin{align*}
E^{A:B'B}_\alpha\lb T^{A'\ra B'}(\rho_{AA'B})\rb\leq  E^{A:B'B}_{\max}\lb T^{A'\ra B'}(\sigma_{AA'B}')\rb + D_{\alpha}\lb\rho_{AA'B}\|\sigma_{AA'B}'\rb \\
\leq \max_{\tilde{\sigma}_{AA'B}\in\text{Sep}(AA':B)} E^{A:B'B}_{\max}\lb T^{A'\ra B'}(\tilde{\sigma}_{AA'B})\rb + D_{\alpha}\lb\rho_{AA'B}\|\sigma_{AA'B}'\rb.
\end{align*}
Now minimizing over $\sigma_{AA'B}'\in\text{Sep}_{AA':B}\lb \C^{d_A}\otimes \C^{d_{A'}}\otimes \C^{d_B}\rb$ yields
\[
E^{A:B'B}_\alpha\lb T^{A'\ra B'}(\rho_{AA'B})\rb \leq \max_{\tilde{\sigma}_{AA'B}\in\text{Sep}(AA':B)} E^{A:B'B}_{\max}\lb T^{A'\ra B'}(\tilde{\sigma}_{AA'B})\rb + E^{AA':B}_\alpha\lb\rho_{AA'B}\rb .
\]

Let $\sigma_0'\in\text{Sep}_{AA':B}\lb \C^{d_A}\otimes \C^{d_{A'}}\otimes \C^{d_B}\rb$ attain the maximum in the previous equation. As $\sigma_0'$ is separable there is a decomposition of the form $\sigma_0' = \sum^k_{i=1}p_i \gamma^i_{AA'}\otimes \phi^i_{B}$ with $k\in\N$, probabilities $p_i\in\lbr 0,1\rbr$ such that $\sum^k_{i=1} p_i = 1$, and states $\lset\gamma^i_{AA'}\rset^k_{i=1}\subset \D(\C^{d_A}\otimes \C^{d_{A'}})$ and $\lset\phi^i_{B}\rset^k_{i=1}\subset \D(\C^{d_B})$. Now let $\tau^i_{AB'} \in\text{Sep}_{A:B'}\lb \C^{d_A}\otimes \C^{d_{B'}}\rb$ be such that 
\[E^{A:B'}_{\max}\lb T^{A'\ra B'}(\gamma^i_{AA'})\rb = D_{\max}\lb T^{A'\ra B'}(\gamma^i_{AA'})\| \tau^i_{AB'}\rb\]
for each $i\in\lset 1,\ldots ,k\rset$. With these states we get
\begin{align*}
\max_{\tilde{\sigma}_{AA'B}\in\text{Sep}(AA':B)}& E^{A:B'B}_{\max}\lb T^{A'\ra B'}(\tilde{\sigma}_{AA'B})\rb = \min_{\sigma_{AB'B}\in \text{Sep}(A:B'B)} D_{\max}\lb T^{A'\ra B'}(\sigma_0')\| \sigma_{AB'B}\rb \\
& \leq D_{\max}\lb \sum^k_{i=1} p_i T^{A'\ra B'}(\gamma^i_{AA'})\otimes \phi^i_{B}\| \sum^k_{i=1} p_i \tau^i_{AB'}\otimes \phi^i_{B}\rb \\
& \leq \max_{i\in\lset 1,\ldots , k\rset} D_{\max}\lb T^{A'\ra B'}(\gamma^i_{AA'})\otimes \phi^i_{B}\| \tau^i_{AB'}\otimes \phi^i_{B}\rb \\ 
& = \max_{i\in\lset 1,\ldots , k\rset} D_{\max}\lb T^{A'\ra B'}(\gamma^i_{AA'})\| \tau^i_{AB'}\rb \\
& = \max_{i\in\lset 1,\ldots , k\rset} E^{A:B'}_{\max}\lb T^{A'\ra B'}(\gamma^i_{AA'})\rb \\
&\leq E_{\max}\lb T\rb.
\end{align*}  
In the second line of the above computation we used that $\sum^k_{i=1} p_i \tau^i_{AB'}\otimes \phi^i_{B}\in \text{Sep}_{A:B'B}\lb\C^{d_A}\otimes \C^{d_{B'}}\otimes\C^{d_B}\rb$ as $\tau^i_{AB'}\in\text{Sep}_{A:B'}\lb\C^{d_A}\otimes \C^{d_{B'}}\rb$ by definition. In the third line we used that $D_{\max}$ is joint quasi-convex \cite[Lemma 9]{datta2009min} and in the fourth line that $D_{\max}\lb\rho_1\otimes \sigma_1\|\rho_2\otimes \sigma_2\rb = D_{\max}\lb\rho_1\|\rho_2\rb + D_{\max}\lb\sigma_1\|\sigma_2\rb$ for any quantum states $\rho_1,\rho_2\in\D\lb\C^d\rb$ and $\sigma_1,\sigma_2\in\D\lb\C^{d'}\rb$ with $\text{supp}\lbr\rho_1\rbr\subseteq \text{supp}\lbr\rho_2\rbr$ and $\text{supp}\lbr\sigma_1\rbr\subseteq \text{supp}\lbr\sigma_2\rbr$ (see for instance \cite[Theorem 2]{muller2013quantum}).

\end{proof}

The following corollary bounds the $\alpha$-relative entropy of entanglement of the state obtained from alternately applying an LOCC-operation and a partial quantum channel to some tripartite initial state. 

\begin{cor}
Let $T:\M_{d_{A'}}\ra\M_{d_{B'}}$ be a quantum channel, $m\in\N$ ,and $d_{A_i},d_{B_i}\in\N$ dimensions for each $i\in\lset 1,\ldots ,m\rset$. Consider LOCC-operations $\lset L_i\rset^{m}_{i=1}$ w.r.t. the bipartition into $A$ and $B$ systems acting as 
\[
L_{i}:\M_{d_{A_i}}\otimes \M_{d_{B'}}\otimes \M_{d_{B_i}}\ra \M_{d_{A_{i+1}}}\otimes\M_{d_{A'}}\otimes \M_{d_{B_i}} 
\]
for any $i\in\lset 1,\ldots ,m\rset$ and 
\[
L_{m}:\M_{d_{A_{m}}}\otimes \M_{d_{B'}}\otimes \M_{d_{B_m}}\ra \M_{d_{\tilde{A}}}\otimes \M_{d_{\tilde{B}}}
\]
for arbitrary $d_{\tilde{A}},d_{\tilde{B}}\in\N$. For any quantum state $\rho_{A_1A'B_1}\in\D\lb\C^{d_{A_1}}\otimes \C^{d_{A'}}\otimes \C^{d_{B_1}}\rb$ consider the state 
\[
\phi_{\tilde{A}\tilde{B}} = L_{m}\circ \prod^{m-1}_{i=1}\lb T^{A'\ra B'}\circ L_i\rb \circ T^{A'\ra B'}\lb\rho_{A_1 A' B_1}\rb .
\]
Then we have
\[
E_\alpha^{\tilde{A}:\tilde{B}}\lb \phi_{\tilde{A}\tilde{B}}\rb \leq m E_{\max}\lb T\rb + E_\alpha^{A_1 A': B_1}\lb\rho_{A_1 A' B_1}\rb
\]  
for any $\alpha >1$.
\label{Lem:UpperBoundEntPhi}
\end{cor}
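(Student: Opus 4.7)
The plan is to prove this by induction on $m$, at each step combining Theorem \ref{thm:Shifting} with the fact that the $\alpha$-relative entropy of entanglement is monotone under LOCC operations. To keep the bookkeeping clean, I would introduce notation for the intermediate states: let $\tau^{(k)}_{A_k A' B_k}$ denote the state just before the $k$-th channel use (so $\tau^{(1)} = \rho_{A_1 A' B_1}$), and $\mu^{(k)}_{A_k B' B_k} = T^{A'\ra B'}(\tau^{(k)})$ the state just after. The crucial point is that after $T^{A'\ra B'}$ acts, the output register $B'$ sits on Bob's side, so the natural bipartition of $\mu^{(k)}$ is $A_k : B' B_k$, not $A_k A' : B_k$.

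The inductive claim I would establish is
\[
E_\alpha^{A_k A':B_k}\lb\tau^{(k)}\rb \leq (k-1)\, E_{\max}(T) + E_\alpha^{A_1 A':B_1}\lb\rho_{A_1 A' B_1}\rb
\]
for each $k \in \lset 1,\ldots ,m\rset$. The base case $k=1$ is trivial. For the inductive step, I would first apply Theorem \ref{thm:Shifting} with the roles $A \mapsto A_k$, $B \mapsto B_k$, producing
\[
E_\alpha^{A_k:B' B_k}\lb\mu^{(k)}\rb \leq E_{\max}(T) + E_\alpha^{A_k A':B_k}\lb\tau^{(k)}\rb.
\]
I would then invoke LOCC monotonicity of $E_\alpha$ under the map $L_k$, which takes Alice's $A_k$ and Bob's $B' B_k$ to Alice's $A_{k+1}A'$ and Bob's $B_{k+1}$. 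This monotonicity follows because each $L_k$ has the Kraus form \eqref{LOCCkraus} and therefore maps separable states to separable states (tested against the bipartition after the operation), combined with data processing of $D_\alpha$ under quantum channels (Equation \eqref{equ:DataProc}). Chaining these two inequalities advances the induction from level $k$ to level $k+1$.

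For the final round, the argument is the same but with the output labels $\tilde A \tilde B$ in place of $A_{m+1} A':B_{m+1}$: applying Theorem \ref{thm:Shifting} once more to $\tau^{(m)}$ and then LOCC monotonicity to $L_m$ yields
\[
E_\alpha^{\tilde A:\tilde B}\lb\phi_{\tilde A \tilde B}\rb \leq E_\alpha^{A_m:B' B_m}\lb\mu^{(m)}\rb \leq m\, E_{\max}(T) + E_\alpha^{A_1 A':B_1}\lb\rho_{A_1 A' B_1}\rb,
\]
which is the desired bound. The main obstacle is not a deep one but a bookkeeping concern: one must carefully match the bipartition labels at each stage so that the form of Theorem \ref{thm:Shifting} (which explicitly shifts the $A'$ register from Alice's to Bob's side and charges at most $E_{\max}(T)$) and the LOCC-monotonicity invocation (which requires the LOCC partition to coincide with the entanglement partition being measured) align with one another. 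Once this is set up correctly, the proof is just iteration of the two monotonicity statements.
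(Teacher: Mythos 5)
Your proposal is correct and follows essentially the same route as the paper: the paper's proof is exactly the alternating application of Theorem \ref{thm:Shifting} and LOCC-monotonicity of $E_\alpha$ (the latter justified, as you say, by data processing together with the fact that LOCC operations preserve separability), written as a telescoping chain from the output state back to the input rather than as a forward induction. Your bipartition bookkeeping ($A_k:B'B_k$ after the channel use, $A_{k+1}A':B_{k+1}$ after the LOCC step) matches the paper's.
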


\begin{proof}
For $i\in\lset 1,\ldots ,m\rset$ we define the states 
\[
\sigma^{(i)}_{A_i B' B_i} = \prod^{i-1}_{k=1}\lb T^{A'\ra B'}\circ L_i\rb \circ T^{A'\ra B'}\lb\rho_{A_1 A' B_1}\rb. 
\]
By the data-processing inequality \eqref{equ:DataProc} it is easy to see that $E_\alpha$ is non-increasing under LOCC-operations as such operations preserve the set of separable states. Using this fact and Theorem \ref{thm:Shifting} alternately gives  
\begin{align*}
E_\alpha^{\tilde{A}:\tilde{B}}\lb \phi_{\tilde{A}\tilde{B}}\rb &= E_\alpha^{\tilde{A}:\tilde{B}}\lb L_{m}\lb \sigma_{A_mB'B_m}^{(m)}\rb\rb \\
&\leq E_\alpha^{A_m:B'B_m}\lb \sigma_{A_mB'B_m}^{(m)}\rb \\
&\leq E_{\max}\lb T\rb + E_\alpha^{A_{m-1}:B'B_{m-1}}\lb \sigma_{A_{m-1}B'B_{m-1}}^{(m-1)}\rb \\
&\vdots \\
&\leq (m-1) E_{\max}\lb T\rb + E^{A_1 B' B_1}_\alpha\lb T^{A'\ra B'}\lb\rho_{A_1 A' B_1}\rb\rb \\
&\leq m E_{\max}\lb T\rb + E^{A_1 A' B_1}_\alpha\lb \rho_{A_1 A' B_1}\rb.
\end{align*}

\end{proof}

In the next section we will apply the previous corollary to the output state of a protocol for $\mathcal{P}_\leftrightarrow$. This will establish the strong-converse bound in terms of $E_{\max}$.

\section{Strong converse bound on $\mathcal{P}_\leftrightarrow$}
\label{sec:StrongConverse}

To prove a strong converse bound on  $\mathcal{P}_\leftrightarrow$ we will use some notions introduced in~\cite{wilde2016converse}. Consider a private state  
\[
\gamma_{A_kB_kA_sB_s} = U^{\text{tw}}_{A_kB_kA_sB_s}\lb\omega_{A_kB_k}\otimes \sigma_{A_sB_s}\rb (U^{\text{tw}}_{A_kB_kA_sB_s})^\dagger 
\]
where $U^{\text{tw}}_{A_kB_kA_sB_s}$ denotes a twisting unitary (see Definition \ref{defn:PrivState}). A privacy test corresponding to $\gamma_{A_kB_kA_sB_s}$ (see~\cite[Definition 6]{wilde2016converse}) is a 2-outcome measurement given by the POVM
\begin{equation}
\lset \Pi_{A_kB_kA_sB_s} , \one_{A_kB_kA_sB_s} -\Pi_{A_kB_kA_sB_s} \rset
\label{equ:PrivTest}
\end{equation}
for the projector $\Pi_{A_kB_kA_sB_s} = U^{\text{tw}}_{A_kB_kA_sB_s}\lb\omega_{A_kB_k}\otimes \one_{A_sB_s}\rb \lb U^{\text{tw}}_{A_kB_kA_sB_s}\rb^\dagger$. It can be shown that a separable state only has a low probability of passing a privacy test (i.e.\ the measurement \eqref{equ:PrivTest} giving the outcome corresponding to $\Pi_{A_kB_kA_sB_s}$) corresponding to a private state. More specifically (see \cite[equation (281)]{horodecki2009general} or \cite[Lemma 8]{wilde2016converse}) for a privacy test \eqref{equ:PrivTest} corresponding to a private state $\gamma_{A_kB_kA_sB_s}$ with $K$-dimensional key part (i.e.\ $d_{A_k}=d_{B_k}=K$ as in Definition \ref{defn:PrivState}) it holds that
\begin{equation}
\text{tr}\lb \Pi_{A_kB_kA_sB_s}\sigma_{A_kB_kA_sB_s}\rb\leq \frac{1}{K}
\label{equ:PrivTestSep}
\end{equation}
for any separable state $\sigma_{A_kB_kA_sB_s}\in\text{Sep}_{A_kA_s:B_kB_s}\lb\C^{d_{A_k} d_{A_s}}\otimes \C^{d_{B_k} d_{B_s}}\rb$. At the same time the probability of a state passing the privacy test can be related to its distance to the private state. The following Lemma has been shown in~\cite{wilde2016converse}:

\begin{lem}[Lemma 7 in \cite{wilde2016converse}]
Let $\rho_{A_kB_kA_sB_s}\in\D\lb\C^{d_{A_k}}\otimes \C^{d_{B_k}}\otimes \C^{d_{A_s}}\otimes \C^{d_{B_s}}\rb$ be a quantum state and $\Pi_{A_kB_kA_sB_s}$ the projector appearing in the privacy test (see \eqref{equ:PrivTest}) corresponding to a private state $\gamma_{A_kB_kA_sB_s}$. Then 
\[
\text{tr}\lb \Pi_{A_kB_kA_sB_s}\rho_{A_kB_kA_sB_s}\rb \geq F(\rho_{A_kB_kA_sB_s},\gamma_{A_kB_kA_sB_s} )
\] 
where $F(\rho,\sigma) = \|\sqrt{\rho}\sqrt{\sigma}\|^2_1$ denotes the fidelity.
\label{lem:WildeLem}
\end{lem}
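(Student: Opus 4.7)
The plan is to reduce to the untwisted case and then apply monotonicity of fidelity under the partial trace. First I would use the unitary invariance of both the trace and the fidelity to simplify: defining $\rho' = (U^{\text{tw}})^\dagger \rho_{A_kB_kA_sB_s} U^{\text{tw}}$, we get
\begin{equation*}
\tr\lb\Pi_{A_kB_kA_sB_s} \rho_{A_kB_kA_sB_s}\rb = \tr\lb\lb\omega_{A_kB_k}\otimes\one_{A_sB_s}\rb\rho'\rb
\end{equation*}
and $F(\rho_{A_kB_kA_sB_s},\gamma_{A_kB_kA_sB_s}) = F(\rho', \omega_{A_kB_k}\otimes\sigma_{A_sB_s})$. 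So it suffices to establish the inequality with $U^{\text{tw}}$ absorbed into $\rho$, i.e.\ to show that
\begin{equation*}
\tr\lb\lb\omega_{A_kB_k}\otimes\one_{A_sB_s}\rb\rho'\rb \geq F(\rho',\omega_{A_kB_k}\otimes\sigma_{A_sB_s}).
\end{equation*}

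Next I would handle the left-hand side by pushing the identity on $A_sB_s$ into a partial trace: since $\omega_{A_kB_k} = \proj{\Omega_{A_kB_k}}{\Omega_{A_kB_k}}$ is a pure state,
\begin{equation*}
\tr\lb\lb\omega_{A_kB_k}\otimes\one_{A_sB_s}\rb\rho'\rb = \tr\lb\omega_{A_kB_k}\,\rho'_{A_kB_k}\rb = \bracket{\Omega_{A_kB_k}}{\rho'_{A_kB_k}}{\Omega_{A_kB_k}},
\end{equation*}
where $\rho'_{A_kB_k} = \ptr{A_sB_s}\rho'$. Using that the fidelity between a pure state and a mixed state $\tau$ satisfies $F(\proj{\psi}{\psi},\tau) = \bra{\psi}\tau\ket{\psi}$, this last expression is exactly $F(\rho'_{A_kB_k},\omega_{A_kB_k})$.

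Finally, I would invoke the monotonicity of the fidelity under quantum channels, applied to the partial trace $\ptr{A_sB_s}$. Since $\ptr{A_sB_s}\lb\omega_{A_kB_k}\otimes\sigma_{A_sB_s}\rb = \omega_{A_kB_k}$, monotonicity gives
\begin{equation*}
F(\rho',\omega_{A_kB_k}\otimes\sigma_{A_sB_s}) \leq F(\rho'_{A_kB_k},\omega_{A_kB_k}),
\end{equation*}
which combined with the identity above closes the argument. There is really no hard step here: the only thing one needs to verify is the bookkeeping that absorbing $U^{\text{tw}}$ into $\rho$ does preserve both sides, which follows immediately because $\Pi$ and $\gamma$ are defined by conjugating the same unitary on the pair $(\omega\otimes\one,\omega\otimes\sigma)$. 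The substantive content is simply that the fidelity cannot increase under the discarding of the shield systems.
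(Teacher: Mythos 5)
Your proof is correct. Note that the paper does not prove this lemma itself but imports it from \cite{wilde2016converse}; your argument --- undo the twisting unitary, evaluate the overlap with the pure key part as a fidelity of the reduced state, and invoke monotonicity of fidelity under the partial trace over the shield systems --- is essentially the standard proof given in that reference, and every step (unitary invariance of trace and fidelity, $F(\proj{\psi}{\psi},\tau)=\bra{\psi}\tau\ket{\psi}$, and data processing for fidelity) checks out.
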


Now we can prove a bound on the error of private state generation for protocols assisted by classical communication. The proof follows a method given in \cite{wilde2014strong} and uses ideas from \cite{wilde2016converse}.

\begin{lem}[Bound on private communication error]
Let $T:\M_{d_{A'}}\ra\M_{d_{B'}}$ be a quantum channel and $\alpha\in (1,\infty)$. For any $k,m\in\N$ the error $\epsilon >0$ in an $(k,m,\epsilon)$-coding scheme for private state generation assisted by classical communication (as in Definition \ref{defn:PrivCodingScheme}) fulfills
\[
\epsilon \geq 1-2^{-\frac{\alpha-1}{2\alpha}\lb k - m E_{\max}(T)\rb}.
\]
\label{Lem:BoundComErrPriv}
\end{lem}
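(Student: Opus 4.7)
The plan is to show that the output state $\phi^{(m)}_{\tilde{A}\tilde{B}}$ of the coding scheme has small $\alpha$-relative entropy of entanglement (by Corollary~\ref{Lem:UpperBoundEntPhi}), while any state close to a private state with $K$-dimensional key part must have \emph{large} $\alpha$-relative entropy of entanglement (via the privacy-test argument from \cite{wilde2016converse}). Comparing these two inequalities yields the bound on $\varepsilon$.

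First, by Corollary~\ref{Lem:UpperBoundEntPhi} applied to the protocol of Definition~\ref{defn:PrivCodingScheme}, together with the fact that the initial state $\rho^{(1)}_{A_1 A' B_1}$ is separable across $A_1 A' : B_1$ (so that $E^{A_1A':B_1}_{\alpha}(\rho^{(1)}_{A_1A'B_1}) = 0$), I obtain
\begin{equation}
E^{\tilde{A}:\tilde{B}}_{\alpha}\!\lb\phi^{(m)}_{\tilde{A}\tilde{B}}\rb \;\leq\; m\, E_{\max}(T).
\end{equation}

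Next, let $\gamma_{A_kB_kA_sB_s}$ be the target private state with $K = 2^k$ and let $\Pi$ be the projector of its privacy test~\eqref{equ:PrivTest}. For any separable $\sigma_{\tilde{A}\tilde{B}}\in\text{Sep}_{\tilde{A}:\tilde{B}}$ inequality~\eqref{equ:PrivTestSep} gives $\tr(\Pi\sigma_{\tilde{A}\tilde{B}})\le 2^{-k}$, while Lemma~\ref{lem:WildeLem} and the Fuchs-van de Graaf inequality applied to $\frac{1}{2}\|\phi^{(m)}-\gamma\|_1 = \varepsilon$ give
\begin{equation}
f := \tr\!\lb\Pi\,\phi^{(m)}_{\tilde{A}\tilde{B}}\rb \;\geq\; F\!\lb\phi^{(m)}_{\tilde{A}\tilde{B}},\gamma_{A_kB_kA_sB_s}\rb \;\geq\; (1-\varepsilon)^2.
\end{equation}
Now apply the data-processing inequality~\eqref{equ:DataProc} for $D_\alpha$ to the binary measurement channel $X\mapsto \tr(\Pi X)\ketbra{0} + \tr((\one-\Pi)X)\ketbra{1}$. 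Keeping only the first term in the resulting classical Rényi divergence yields, for every separable $\sigma_{\tilde{A}\tilde{B}}$,
\begin{equation}
D_\alpha\!\lb\phi^{(m)}_{\tilde{A}\tilde{B}}\,\big\|\,\sigma_{\tilde{A}\tilde{B}}\rb \;\geq\; \frac{1}{\alpha-1}\log\!\lb f^{\alpha}\,g^{1-\alpha}\rb \;\geq\; \frac{\alpha}{\alpha-1}\log f + k,
\end{equation}
where $g := \tr(\Pi\sigma_{\tilde{A}\tilde{B}})\leq 2^{-k}$. Minimizing the left-hand side over separable $\sigma_{\tilde{A}\tilde{B}}$ gives the matching lower bound
\begin{equation}
E^{\tilde{A}:\tilde{B}}_{\alpha}\!\lb\phi^{(m)}_{\tilde{A}\tilde{B}}\rb \;\geq\; \frac{\alpha}{\alpha-1}\log f + k.
\end{equation}

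Combining the two bounds on $E_\alpha^{\tilde{A}:\tilde{B}}(\phi^{(m)})$ and inserting $f\geq (1-\varepsilon)^2$ yields
\begin{equation}
(1-\varepsilon)^{2} \;\leq\; f \;\leq\; 2^{\frac{\alpha-1}{\alpha}(mE_{\max}(T)-k)},
\end{equation}
and solving for $\varepsilon$ gives the stated inequality. The only nontrivial ingredient is the lower bound on $E_\alpha^{\tilde{A}:\tilde{B}}(\phi^{(m)})$ via the privacy test; the main obstacle is correctly matching the exponent $\frac{\alpha-1}{2\alpha}$, which comes from the square in Fuchs-van de Graaf together with the $\alpha/(\alpha-1)$ factor from the classical Rényi divergence evaluation.
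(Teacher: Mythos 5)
Your proposal is correct and follows essentially the same route as the paper's proof: the upper bound $E_\alpha^{\tilde A:\tilde B}(\phi^{(m)})\le mE_{\max}(T)$ from Corollary \ref{Lem:UpperBoundEntPhi}, the lower bound via the privacy-test measurement channel and data processing, and Fuchs--van de Graaf to convert the fidelity bound into the error bound. The only (immaterial) difference is that you invoke Fuchs--van de Graaf at the start to get $f\ge(1-\varepsilon)^2$ and solve at the end, whereas the paper upper-bounds $F$ first and applies the inequality last.
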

\begin{proof}
Let $\phi^{(m)}_{\tilde{A}\tilde{B}}\in\D\lb\C^{d_{\tilde{A}}}\otimes \C^{d_{\tilde{B}}}\rb$ denote the output state of the $(k,m,\epsilon)$-coding scheme, i.e. 
\[
\phi^{(m)}_{\tilde{A}\tilde{B}} = L_m\circ\prod^{m-1}_{i=1} \lb T^{A'\ra B'}\circ L_i\rb\circ T^{A'\ra B'}\lb\rho^{(1)}_{A_1 A' B_1}\rb
\]
for LOCC-operations $\lset L_i\rset^{m}_{i=1}$ and initial state $\rho_{A_1 A' B_1}\in \text{Sep}_{A_1 A':B_1}\lb\C^{d_{A_1}}\otimes \C^{d_{A'}}\otimes \C^{d_{B_1}}\rb$ as in Definition \ref{defn:CodingScheme}. Note that by the form of $\phi^{(m)}_{\tilde{A}\tilde{B}}$ we can apply Lemma \ref{Lem:UpperBoundEntPhi} to show
\begin{equation}
E^{\tilde{A}:\tilde{B}}_\alpha\lb\phi^{(m)}_{\tilde{A}\tilde{B}}\rb \leq m E_{\max}\lb T\rb 
\label{equ:phiState}
\end{equation}
where we used that by separability $E^{A_1A':B_1}_\alpha\lb\rho^{(1)}_{A_1 A' B_1}\rb = 0$.

By assumption (from Definition \ref{defn:PrivCodingScheme}) we have $d_{\tilde{A}} = d_{A_k}d_{A_s}$ and $d_{\tilde{B}} = d_{B_k}d_{B_s}$ with $d_{A_k}=d_{B_k}=2^k$ and $d_{A_s}=d_{B_s}$, and there exists a private state $\gamma_{A_kB_kA_sB_s}\in\D\lb\C^{d_{\tilde{A}}}\otimes \C^{d_{\tilde{B}}}\rb$ with $2^k$-dimensional key part (see Definition \ref{defn:PrivState}) such that 
\begin{equation}
\epsilon = \frac{1}{2}\|\phi^{(m)}_{\tilde{A}\tilde{B}} - \gamma_{A_kB_kA_sB_s}\|_1.
\label{equ:epsilon}
\end{equation}
Let $\Pi^{(k)}_{A_kB_kA_sB_s}$ denote the projector in the privacy test corresponding to $\gamma_{A_kB_kA_sB_s}$ (see \eqref{equ:PrivTest}). Then by Lemma \ref{lem:WildeLem} we have 
\begin{equation}
F := \text{tr}\lb\Pi^{(k)}_{A_kB_kA_sB_s} \phi^{(m)}_{\tilde{A}\tilde{B}}\rb \geq F(\phi^{(m)}_{\tilde{A}\tilde{B}},\gamma_{2^k})
\label{equ:F}
\end{equation}
Now we define a binary flag channel $B:\D\lb\C^{d_{\tilde{A}}}\otimes \C^{d_{\tilde{B}}}\rb\ra \M_2$ by 
\[
B(X) = \text{tr}\lb \Pi^{(k)}_{A_kB_kA_sB_s}X\rb\ketbra{1} + \text{tr}\lb (\one_{A_kB_kA_sB_s} - \Pi^{(k)}_{A_kB_kA_sB_s}) X\rb\ketbra{0}
\]
where $\ket{0},\ket{1}\in\C^2$ denote the computational basis states. For any separable state $\sigma_{\tilde{A}\tilde{B}}\in \text{Sep}_{\tilde{A}:\tilde{B}}\lb \C^{d_{\tilde{A}}}\otimes \C^{d_{\tilde{B}}}\rb$ we can compute
\begin{align*}
D_\alpha\lb \phi^{(m)}_{\tilde{A}\tilde{B}}\| \sigma_{\tilde{A}\tilde{B}}\rb &\geq D_\alpha\lb B\lb\phi^{(m)}_{\tilde{A}\tilde{B}}\rb\| B\lb\sigma_{\tilde{A}\tilde{B}}\rb\rb \\
&= \frac{1}{\alpha -1}\log_2\lb F^\alpha p^{1-\alpha} + (1-F)^\alpha(1-p)^{1-\alpha}\rb \\
&\geq \frac{1}{\alpha -1}\log_2\lb F^\alpha p^{1-\alpha}\rb \\
&\geq \frac{1}{\alpha-1}\log_2\lb F^\alpha \lb\frac{1}{2^k}\rb^{1-\alpha}\rb \\
&= \frac{\alpha}{\alpha - 1}\log_2(F) + k.
\end{align*}
Here we introduced $p = \text{tr}\lb\sigma_{\tilde{A}\tilde{B}}\Pi^{(k)}_{\tilde{A}\tilde{B}}\rb$ and used the data-processing inequality \eqref{equ:DataProc} of the sandwiched $\alpha$-R\'{e}nyi divergences for the first inequality. In the last inequality we used $p\leq \frac{1}{2^k}$ which follows from \eqref{equ:PrivTestSep} and separability of $\sigma_{\tilde{A}\tilde{B}}$. 
 
Minimizing over all separable states $\sigma_{\tilde{A}\tilde{B}}\in \text{Sep}_{\tilde{A}:\tilde{B}}\lb \C^{d_{\tilde{A}}}\otimes \C^{d_{\tilde{B}}}\rb$ on the left-hand-side of the previous equation and using \eqref{equ:phiState} gives
\[
m E_{\max}\lb T\rb \geq E_\alpha^{\tilde{A}:\tilde{B}}\lb \phi^{(m)}_{\tilde{A}\tilde{B}}\rb \geq \frac{\alpha}{\alpha - 1}\log_2(F) + k .
\] 
By applying the Fuchs-van-de-Graaf inequality~\cite[Theorem 1]{fuchs1999cryptographic} and \eqref{equ:F} we get 
\[
\epsilon \geq 1-\sqrt{F(\phi^{(m)}_{\tilde{A}\tilde{B}},\gamma_{2^k})} \geq 1-\sqrt{F} \geq 1 - 2^{-\frac{\alpha-1}{2\alpha}\lb k - m E_{\max}\lb T\rb\rb}
\]
for the communication error $\epsilon$ from \eqref{equ:epsilon}.

\end{proof}

\begin{thm}
Let $T:\M_{d_{A'}}\ra\M_{d_{B'}}$ be a quantum channel. Then the quantity $E_{\max}\lb T\rb$ is a strong-converse bound on $\mathcal{P}_\leftrightarrow(T)$.
\label{thm:SCBound}
\end{thm}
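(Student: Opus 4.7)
The plan is to deduce the theorem directly from the error bound in Lemma \ref{Lem:BoundComErrPriv}. Recall that to show $E_{\max}(T)$ is a strong-converse bound on $\mathcal{P}_\leftrightarrow(T)$, I must show that for every sequence of $(k_\nu, m_\nu, \varepsilon_\nu)$-coding schemes for private state generation assisted by classical communication with $m_\nu \to \infty$ and rate $R = \lim_{\nu\to\infty} \frac{k_\nu}{m_\nu} > E_{\max}(T)$, the error $\varepsilon_\nu$ tends to $1$.

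First I would fix any such rate $R > E_{\max}(T)$ and set $\delta = R - E_{\max}(T) > 0$. By the definition of the limit, for all sufficiently large $\nu$ we have $\frac{k_\nu}{m_\nu} \geq E_{\max}(T) + \tfrac{\delta}{2}$, i.e.\ $k_\nu - m_\nu E_{\max}(T) \geq \tfrac{\delta}{2} m_\nu$. Next I would fix any parameter $\alpha \in (1,\infty)$ (e.g.\ $\alpha = 2$) and apply Lemma \ref{Lem:BoundComErrPriv} to the $\nu$-th coding scheme, yielding
\[
\varepsilon_\nu \;\geq\; 1 - 2^{-\frac{\alpha-1}{2\alpha}\lb k_\nu - m_\nu E_{\max}(T)\rb} \;\geq\; 1 - 2^{-\frac{\alpha-1}{2\alpha}\cdot \frac{\delta}{2}\, m_\nu}.
\]
Since $m_\nu \to \infty$ and the exponent's prefactor $\frac{\alpha-1}{2\alpha}\cdot\frac{\delta}{2}$ is a strictly positive constant, the right-hand side tends to $1$, so $\varepsilon_\nu \to 1$ as $\nu \to \infty$. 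This establishes the strong-converse property.

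There is essentially no obstacle in this step: all the hard work has already been done in Corollary \ref{Lem:UpperBoundEntPhi} (via the data-processed triangle inequality, Theorem \ref{thm:FundamentalLemma}) and in Lemma \ref{Lem:BoundComErrPriv}, which together give the exponential decay of $1-\varepsilon_\nu$ whenever $k > m E_{\max}(T)$. The only subtlety worth highlighting is that one must choose $\alpha > 1$ \emph{before} taking the limit in $\nu$ (one need not optimize over $\alpha$ here, since any fixed $\alpha > 1$ already yields a strictly positive exponent), and one must rely on $m_\nu \to \infty$ from the definition of an achievable rate to turn the gap $k_\nu - m_\nu E_{\max}(T) \geq \tfrac{\delta}{2}m_\nu$ into an arbitrarily large positive number.
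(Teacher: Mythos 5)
Your argument is correct and coincides with the paper's own proof: both fix a rate gap $\delta>0$, invoke Lemma \ref{Lem:BoundComErrPriv} for an arbitrary fixed $\alpha>1$, and use $m_\nu\to\infty$ to force $\varepsilon_\nu\to 1$. The only cosmetic difference is your use of $\delta/2$ in place of the paper's $\delta$, which is immaterial.
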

\begin{proof}
Consider $R>E_{\max}\lb T\rb$ such that for each $\nu\in\N$ there exists an $(k_\nu,m_\nu,\varepsilon_\nu)$-coding scheme for private state generation assisted by classical communication (as in Definition \ref{defn:CodingSchemeEG}) with $m_\nu\ra\infty$ as $\nu\ra\infty$ and $R=\lim_{\nu\to\infty}\frac{k_\nu}{m_\nu}$. There exists a $\delta >0$ and a $\nu_0\in\N$ such that 
\[
\frac{k_\nu}{m_\nu} > E_{\max}\lb T\rb + \delta
\] 
for all $\nu\geq \nu_0$. Therefore, using Lemma \ref{Lem:BoundComErrPriv} we have for any $\nu\geq \nu_0$ and $\alpha >1$ that
\begin{align*}
\epsilon_\nu &\geq 1 - 2^{-\frac{\alpha-1}{2\alpha}\lb k_\nu - m_\nu E_{\max}\lb T\rb\rb} \\
&\geq 1 - 2^{-\frac{\alpha-1}{2\alpha} m_\nu \delta} \ra 1
\end{align*} 
as $\nu\ra\infty$. 
\end{proof}

Finally we can regularize the above bound. Consider the regularized max-relative entropy of a quantum channel $T:\M_{d_{A'}}\ra\M_{d_{B'}}$ 
\[
E^\infty_{\max}\lb T\rb := \lim_{n\ra \infty}\frac{1}{n} E_{\max}\lb T^{\otimes n}\rb.
\]
As a special case of \cite[Theorem 13]{wilde2016converse} (which can also be shown directly following the proof of \cite[Theorem 6]{7282883} for the quantity $E_{\max}$) we have for any $n\in\N$
\[
E_{\max}\lb T^{\otimes n}\rb \leq n E_{\max}\lb T\rb + d_{A'}\log_2(n).
\]
Dividing by $n$ and taking the limit $n\ra \infty$ implies
\[
E^\infty_{\max}\lb T\rb \leq E_{\max}\lb T\rb.
\]
We can therefore improve the bound from Theorem \ref{thm:SCBound} (which is in particular an upper bound on $\mathcal{P}_\leftrightarrow$) by regularization. Note that by Definition \ref{defn:PrivCap} we have $\mathcal{P}_{\leftrightarrow}\lb T\rb = \lim_{n\ra\infty}\frac{1}{n}\mathcal{P}_{\leftrightarrow}\lb T^{\otimes n}\rb$. Applying the bound from Theorem \ref{thm:SCBound} for the channels $T^{\otimes n}$ and noting that $\mathcal{Q}_\leftrightarrow \leq \mathcal{P}_\leftrightarrow$ (by Definition \ref{defn:PrivCap}) implies:
 
\begin{cor}[Regularized upper bound on $\mathcal{P}_\leftrightarrow$]
For any quantum channel $T:\M_{d_{A'}}\ra\M_{d_{B'}}$ we have
\[
\mathcal{Q}_\leftrightarrow(T)\leq \mathcal{P}_\leftrightarrow(T)\leq E^\infty_{\max}\lb T\rb \leq E_{\max}\lb T\rb.
\]
\label{cor:RegUpperBoundP2}
\end{cor}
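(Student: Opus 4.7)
The plan is to stitch together three inequalities, each of which follows from ingredients already in hand.

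First, the inequality $\mathcal{Q}_\leftrightarrow(T)\le \mathcal{P}_\leftrightarrow(T)$ comes directly from comparing Definitions \ref{defn:CodingSchemeEG} and \ref{defn:PrivCodingScheme}: a maximally entangled state $\omega_{2^n}$ is a private state with $2^n$-dimensional key part (take the twisting unitary in Definition \ref{defn:PrivState} to be the identity and the shield subsystems to be trivial), so any $(n,m,\epsilon)$-coding scheme for entanglement generation is automatically an $(n,m,\epsilon)$-coding scheme for private state generation, and every achievable rate of the former is an achievable rate of the latter.

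Next, for $\mathcal{P}_\leftrightarrow(T)\le E^\infty_{\max}(T)$, I would combine Theorem \ref{thm:SCBound} with the $n$-letter form of the private capacity. A standard manipulation of Definition \ref{defn:PrivCap} yields $\mathcal{P}_\leftrightarrow(T^{\otimes n}) = n\,\mathcal{P}_\leftrightarrow(T)$ for every $n\in\N$: the inequality ``$\ge$'' comes from running $n$ independent protocols in parallel, while ``$\le$'' comes from viewing $m$ uses of $T^{\otimes n}$ as $nm$ sequential uses of $T$ interleaved with trivial LOCC operations. Applying Theorem \ref{thm:SCBound} to $T^{\otimes n}$ then gives
\[
\mathcal{P}_\leftrightarrow(T) = \frac{1}{n}\mathcal{P}_\leftrightarrow(T^{\otimes n}) \le \frac{1}{n}E_{\max}(T^{\otimes n}),
\]
and passing to the limit $n\to\infty$ yields $\mathcal{P}_\leftrightarrow(T)\le E^\infty_{\max}(T)$.

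Finally, the inequality $E^\infty_{\max}(T)\le E_{\max}(T)$ is the subadditivity statement $E_{\max}(T^{\otimes n})\le nE_{\max}(T) + d_{A'}\log_2(n)$ cited in the excerpt: dividing by $n$ and letting $n\to\infty$ kills the $d_{A'}\log_2(n)/n$ correction term.

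I do not expect any substantial obstacle. The whole argument is a routine assembly of tools built earlier in the paper (the data-processed triangle inequality, the privacy-test-based error bound giving Theorem \ref{thm:SCBound}, and the subadditivity of $E_{\max}$). The only step warranting a brief justification is the identity $\mathcal{P}_\leftrightarrow(T^{\otimes n})=n\,\mathcal{P}_\leftrightarrow(T)$, and even that is a standard fact about LOCC-assisted coding.
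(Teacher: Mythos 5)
Your proposal is correct and follows essentially the same route as the paper: the paper likewise chains $\mathcal{Q}_\leftrightarrow\le\mathcal{P}_\leftrightarrow$ (from the definitions, since a maximally entangled state is a private state with trivial shield), the identity $\mathcal{P}_\leftrightarrow(T)=\lim_n\frac{1}{n}\mathcal{P}_\leftrightarrow(T^{\otimes n})$ combined with Theorem \ref{thm:SCBound} applied to $T^{\otimes n}$, and the subadditivity estimate $E_{\max}(T^{\otimes n})\le nE_{\max}(T)+d_{A'}\log_2(n)$. Your justification of the $n$-letter identity is in fact slightly more explicit than the paper's, which simply asserts it; note that only the direction $n\,\mathcal{P}_\leftrightarrow(T)\le\mathcal{P}_\leftrightarrow(T^{\otimes n})$ (parallel repetition) is actually needed for the bound.
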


\section{Properties of $E_{\max}\lb T\rb$}

\subsection{Non-lockability}
\label{sec:Lock}

An entanglement measure is called non-lockable~\cite{horodecki2005locking} if tracing out a subsystem of dimension $d\in\N$ can only change the measure by an amount logarithmic in $d$. Here we show that this is the case for the max-relative entropy of entanglement (cf. Theorem \ref{thm:NonLockEmax}). As a consequence we show that for a quantum channel $T^{A\ra BC}:\M_{d_A}\ra\M_{d_{B}d_{C}}$ the difference of the quantities $E_{\max}\lb T^{A\ra BC}\rb$ and $E_{\max}\lb \text{tr}_C\circ T^{A\ra BC}\rb$ can be at most logarithmic in $d_C$. We start with an elementary lemma which is probably known:

\begin{lem}
For some $k\in\N$ consider a convex combination $\rho_{AB} = \sum^k_{i=1} p_i \rho_{AB}^i$ of bipartite quantum states $\rho_{AB}^i\in\D\lb\C^{d_A}\otimes \C^{d_B}\rb$ and $p_i\in \lbr 0,1\rbr$ for $i\in\lset 1,\ldots ,k\rset$ such that $\sum^k_{i=1} p_i = 1$. Then we have 
\[
\sum^k_{i=1} p_i E^{A:B}_{\max}\lb\rho_{AB}^i\rb \leq E^{A:B}_{\max}\lb\rho_{AB}\rb + \sum^k_{i=1} p_i D_{\max}\lb\rho_{AB}^i\|\rho_{AB}\rb .
\]
\label{lem:SimpleLem}
\end{lem}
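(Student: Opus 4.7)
The plan is to reduce the inequality to the triangle inequality for $D_{\max}$ (which appears as the $\alpha=\infty$ case of \eqref{equ:triangle} in the preliminaries) applied along the chain $\rho_{AB}^i \;\to\; \rho_{AB} \;\to\; \sigma_{AB}$, where $\sigma_{AB}$ is the separable state that is optimal for $\rho_{AB}$.

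First I would pick $\sigma_{AB}\in\text{Sep}_{A:B}(\C^{d_A}\otimes\C^{d_B})$ attaining the minimum in the definition of $E^{A:B}_{\max}(\rho_{AB})$, so that $E^{A:B}_{\max}(\rho_{AB}) = D_{\max}(\rho_{AB}\|\sigma_{AB})$. The key observation is that this same separable $\sigma_{AB}$ is then a (not necessarily optimal) candidate witness for every $\rho_{AB}^i$, giving the variational bound $E^{A:B}_{\max}(\rho_{AB}^i)\le D_{\max}(\rho_{AB}^i\|\sigma_{AB})$ for each $i$.

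Next I would invoke the triangle-type inequality \eqref{equ:triangle} at $\alpha=\infty$, namely
\[
D_{\max}(\rho_{AB}^i\|\sigma_{AB}) \le D_{\max}(\rho_{AB}^i\|\rho_{AB}) + D_{\max}(\rho_{AB}\|\sigma_{AB}),
\]
which follows directly from the operator inequality characterization in \eqref{equ:defDmax}: if $\rho_{AB}^i\le 2^{\lambda_i}\rho_{AB}$ and $\rho_{AB}\le 2^{\mu}\sigma_{AB}$ then $\rho_{AB}^i\le 2^{\lambda_i+\mu}\sigma_{AB}$. Combining this with the optimality choice for $\sigma_{AB}$ yields
\[
E^{A:B}_{\max}(\rho_{AB}^i) \le D_{\max}(\rho_{AB}^i\|\rho_{AB}) + E^{A:B}_{\max}(\rho_{AB})
\]
for every $i\in\{1,\dots,k\}$.

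Finally I would multiply the $i$-th inequality by $p_i$, sum over $i$, and use $\sum_i p_i = 1$ to arrive at the stated bound. There is no real obstacle here — the argument is entirely algebraic and does not require any convexity or continuity arguments on $D_{\max}$ beyond the elementary triangle inequality; the only step worth double-checking is the support condition (if $D_{\max}(\rho_{AB}^i\|\rho_{AB})$ is finite then $\operatorname{supp}(\rho_{AB}^i)\subseteq\operatorname{supp}(\rho_{AB})\subseteq\operatorname{supp}(\sigma_{AB})$, so every $D_{\max}$ on the right is well-defined), while if any $D_{\max}(\rho_{AB}^i\|\rho_{AB}) = +\infty$ the inequality is trivial.
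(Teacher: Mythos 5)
Your proposal is correct and follows essentially the same route as the paper: the paper also derives $D_{\max}(\rho^i_{AB}\|\sigma_{AB})\leq D_{\max}(\rho_{AB}\|\sigma_{AB})+D_{\max}(\rho^i_{AB}\|\rho_{AB})$ (as the $P=\ident$, $\alpha=\infty$ case of Theorem \ref{thm:FundamentalLemma}), minimizes over separable $\sigma_{AB}$, and then averages with the weights $p_i$. The only cosmetic difference is that you prove the $D_{\max}$ triangle inequality directly from the operator characterization \eqref{equ:defDmax} rather than citing the data-processed triangle inequality, which the paper itself notes can easily be shown directly.
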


\begin{proof}
Given states $\rho_i\in\D\lb\C^{d_A}\otimes \C^{d_B}\rb$ for $i\in\lset 1,\ldots ,k\rset$ and $\rho = \sum^k_{i=1} p_i \rho_{AB}^i$  with probabilities $\lset p_i\rset^k_{i=1}\subset \lbr 0,1\rbr$ fulfilling $\sum^k_{i=1} p_i = 1$. Note that applying Theorem \ref{thm:FundamentalLemma} for $P = \ident_d$ and $\alpha = \infty$ gives
\[
D_{\max}\lb\rho_{AB}^i\|\sigma_{AB}\rb\leq D_{\max}\lb\rho_{AB}\|\sigma_{AB}\rb + D_{\max}\lb\rho_{AB}^i\|\rho_{AB}\rb
\]
for any states $\sigma_{AB}\in\D\lb\C^{d_A}\otimes \C^{d_B}\rb$. Minimizing over $\sigma_{AB}\in\text{Sep}_{A:B}\lb\C^{d_A}\otimes \C^{d_B}\rb$ leads to
 \[
E^{A:B}_{\max}\lb\rho_{AB}^i\rb\leq E^{A:B}_{\max}\lb\rho_{AB}\rb + D_{\max}\lb\rho_{AB}^i\|\rho\rb.
\]
Finally multiplying the above inequalities by $p_i$ for each $i\in\lset 1,\ldots k\rset$ and summing over $i$ leads to the statement of the lemma. 
\end{proof}

With the previous lemma we can show that the max-relative entropy of entanglement is non-lockable. The argument is similar to an argument given in~\cite{horodecki2005locking} for the relative entropy of entanglement.

\begin{thm}[Non-lockability of the max-relative entropy of entanglement]
For any tripartite state $\rho_{ABB'}\in\D\lb\C^{d_A}\otimes \C^{d_B}\otimes \C^{d_{B'}}\rb$ we have
\[
E^{A:BB'}_{\max}\lb\rho_{ABB'}\rb - E^{A:B}_{\max}\lb\rho_{AB}\rb\leq 2\log_2(d_{B'}).
\]
\label{thm:NonLockEmax}
\end{thm}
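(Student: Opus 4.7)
The plan is to construct an explicit separable state on $A{:}BB'$ that witnesses the desired upper bound on $E^{A:BB'}_{\max}(\rho_{ABB'})$. First I would pick an optimizer $\sigma_{AB}\in\text{Sep}_{A:B}\lb\C^{d_A}\otimes\C^{d_B}\rb$ realizing $\rho_{AB}\leq 2^{E^{A:B}_{\max}(\rho_{AB})}\sigma_{AB}$, and then form the candidate
\[
\tau_{ABB'}:=\sigma_{AB}\otimes \id_{B'}/d_{B'},
\]
which is manifestly separable across $A{:}BB'$. Since $E^{A:BB'}_{\max}(\rho_{ABB'})\leq D_{\max}(\rho_{ABB'}\|\tau_{ABB'})$ by definition of the infimum, the whole task reduces to showing $D_{\max}(\rho_{ABB'}\|\tau_{ABB'})\leq E^{A:B}_{\max}(\rho_{AB})+2\log_2 d_{B'}$.

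The strategy then boils down to a single operator inequality,
\[
\rho_{ABB'}\leq d_{B'}\,\rho_{AB}\otimes\id_{B'}.
\]
Granting this, combining with $\rho_{AB}\leq 2^{E^{A:B}_{\max}(\rho_{AB})}\sigma_{AB}$ yields $\rho_{ABB'}\leq d_{B'}^{\,2}\cdot 2^{E^{A:B}_{\max}(\rho_{AB})}\tau_{ABB'}$, which via \eqref{equ:defDmax} gives $D_{\max}(\rho_{ABB'}\|\tau_{ABB'})\leq E^{A:B}_{\max}(\rho_{AB})+2\log_2 d_{B'}$, as required. The two factors of $\log_2 d_{B'}$ in the final bound come from two distinct sources: one from the operator inequality itself, and one from the $1/d_{B'}$ normalization inside $\tau_{ABB'}$.

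The main obstacle is thus to prove $\rho_{ABB'}\leq d_{B'}\rho_{AB}\otimes\id_{B'}$. Restricting (via the Moore--Penrose pseudoinverse) to the support of $\rho_{AB}$ and setting $M_{ABB'}:=(\rho_{AB}^{-1/2}\otimes\id_{B'})\,\rho_{ABB'}\,(\rho_{AB}^{-1/2}\otimes\id_{B'})\geq 0$, one has $\text{tr}_{B'}M_{ABB'}=\id_{AB}$, so the claim reduces to showing $\|M\|_\infty\leq d_{B'}$. For this, I would pick any unit vector $|\phi\rangle\in\HA\otimes\HB\otimes\cH_{B'}$, Schmidt-decompose it across the $AB{:}B'$ cut as $|\phi\rangle=\sum_{k=1}^r\sqrt{q_k}|e_k\rangle_{AB}|f_k\rangle_{B'}$ with $\sum_k q_k=1$ and $r\leq d_{B'}$, and extend $\{|f_k\rangle\}_{k=1}^r$ to an orthonormal basis of $\cH_{B'}$. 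The partial-trace identity $\sum_{k=1}^{d_{B'}}\langle f_k|M|f_k\rangle_{B'}=\id_{AB}$ together with positivity of each summand gives $\langle f_k|M|f_k\rangle_{B'}\leq\id_{AB}$, whence $\langle e_kf_k|M|e_kf_k\rangle\leq 1$ for every $k$. The Gram-type matrix $N_{lk}:=\langle e_lf_l|M|e_kf_k\rangle$ is positive semidefinite (since $M\geq 0$) with diagonal entries $\leq 1$, and therefore
\[
\langle\phi|M|\phi\rangle=\sum_{k,l}\sqrt{q_kq_l}\,N_{lk}\leq\|N\|_\infty\leq\text{tr}(N)\leq r\leq d_{B'}.
\]
Taking the supremum over $|\phi\rangle$ yields $\|M\|_\infty\leq d_{B'}$, closing the argument.
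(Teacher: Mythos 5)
Your proposal is correct, and it reaches the bound by a genuinely different and more elementary route than the paper. The paper's proof twirls the $B'$ system with a unitary design to write $\rho_{AB}\otimes\id_{d_{B'}}/d_{B'}$ as a finite convex combination of local-unitary rotations of $\rho_{ABB'}$, then invokes Lemma \ref{lem:SimpleLem} (itself a consequence of the data-processed triangle inequality with $P=\ident$) together with unitary invariance of $D_{\max}$ and $E_{\max}$; the $2\log_2 d_{B'}$ there comes from $D_{\max}\lb\rho_{ABB'}\|\rho_{AB}\otimes\id_{d_{B'}}/d_{B'}\rb\leq 2\log_2 d_{B'}$, which rests on exactly the operator inequality $\rho_{ABB'}\leq d_{B'}\,\rho_{AB}\otimes\id_{d_{B'}}$ that the paper asserts without proof. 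You instead exhibit the explicit separable witness $\tau_{ABB'}=\sigma_{AB}\otimes\id_{d_{B'}}/d_{B'}$ and simply chain operator inequalities, $\rho_{ABB'}\leq d_{B'}\rho_{AB}\otimes\id_{d_{B'}}\leq d_{B'}^2\,2^{E^{A:B}_{\max}(\rho_{AB})}\tau_{ABB'}$, so the triangle inequality for $D_{\max}$ is absorbed into the transitivity of $\leq$ and no design or convexity lemma is needed. Your approach buys self-containedness and transparency about where each factor of $\log_2 d_{B'}$ originates, and your Schmidt-rank/Gram-matrix proof of $\|M\|_\infty\leq d_{B'}$ is a correct and welcome addition (the only cosmetic point being that on a non-full-rank $\rho_{AB}$ one has $\tr_{B'}M=\Pi_{\mathrm{supp}(\rho_{AB})}\leq\idi{AB}$ rather than $\idi{AB}$ itself, which does not affect the argument); the paper's approach, in exchange, exercises the Lemma \ref{lem:SimpleLem} machinery that it has already built and reuses elsewhere.
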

\begin{proof}

Note that 
\begin{align*}
\rho_{AB}\otimes \frac{\one_{d_{B'}}}{d_{B'}} &= \int_{\mathcal{U}_{d_{B'}}} \lb\one_{d_{AB}}\otimes U\rb\rho_{ABB'}\lb\one_{d_{AB}}\otimes U\rb^\dagger \text{dU} \\
&=\frac{1}{k}\sum^k_{i=1} \lb\one_{d_{AB}}\otimes U_i\rb\rho_{ABB'}\lb\one_{d_{AB}}\otimes U_i\rb^\dagger
\end{align*}
where the integral is with respect to the Haar-measure on the unitary group $\mathcal{U}_{d_{B'}}$, and where we used unitaries $\lb U_i\rb^k_{i=1}$ forming a unitary $2$-design (see \cite{roy2009unitary}). Applying Lemma \ref{lem:SimpleLem} for the above convex combination gives
\begin{align*}
\sum^k_{i=1} \frac{1}{k} &E^{A:BB'}_{\max}\lb\lb\one_{d_{AB}}\otimes U_i\rb\rho_{ABB'}\lb\one_{d_{AB}}\otimes U_i\rb^\dagger\rb - E^{A:BB'}_{\max}\lb \rho_{AB}\otimes \frac{\one_{d_{B'}}}{d_{B'}}\rb \\
&\leq \sum^k_{i=1} \frac{1}{k} D_{\max}\lb\lb\one_{d_{AB}}\otimes U_i\rb\rho_{ABB'}\lb\one_{d_{AB}}\otimes U_i\rb^\dagger\|\rho_{AB}\otimes \frac{\one_{d_{B'}}}{d_{B'}}\rb \\
& = D_{\max}\lb\rho_{ABB'}\|\rho_{AB}\otimes \frac{\one_{d_{B'}}}{d_{B'}}\rb \\
&\leq 2\log_2\lb d_{B'}\rb.
\end{align*}
Here we used that $D_{\max}$ is invariant under unitary transformations applied to both of its arguments, and that $\rho_{ABB'}\leq d_{B'}\rho_{AB}\otimes \one_{d_{B'}}$, which by \eqref{equ:defDmax} implies the last inequality. Finally note that 
\[
E^{A:BB'}_{\max}\lb \rho_{AB}\otimes \frac{\one_{d_{B'}}}{d_{B'}}\rb = E^{A:B}_{\max}\lb \rho_{AB}\rb
\]
by monotonicity under local operations.

\end{proof}

Finally by applying Theorem \ref{thm:NonLockEmax} to the quantity $E_{\max}$ (see \eqref{equ:maxRelEntrQChan}) we obtain the following:

\begin{cor}
Let $T^{A\ra BC}:\M_{d_A}\ra\M_{d_B}\otimes \M_{d_C}$ be a quantum channel and consider the reduced quantum channel $T^{A\ra B} = \text{tr}_{C}\circ T^{A\ra BC}$. Then 
\[
E_{\max}\lb T^{A\ra BC}\rb\leq 2\log_2(d_C) + E_{\max}\lb T^{A\ra B}\rb
\]
\label{cor:RedChanUp}
\end{cor}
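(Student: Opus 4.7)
The plan is to reduce the statement to the non-lockability of the max-relative entropy of entanglement, i.e.\ Theorem \ref{thm:NonLockEmax}, applied pointwise to the output states of the channel $T^{A\ra BC}$, and then take a supremum over input states.

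Concretely, I would start by fixing an arbitrary input state $\rho_{A'A}\in\D\lb\C^{d_{A'}}\otimes\C^{d_A}\rb$ (with $d_{A'}$ arbitrary) and forming the output
\[
\tau_{A'BC} = T^{A\ra BC}\lb\rho_{A'A}\rb \in \D\lb\C^{d_{A'}}\otimes\C^{d_B}\otimes\C^{d_C}\rb.
\]
Its partial trace over $C$ is exactly
\[
\tau_{A'B} = \ptrace{C}{\tau_{A'BC}} = \lb\text{tr}_C\circ T^{A\ra BC}\rb^{A\ra B}\lb\rho_{A'A}\rb = T^{A\ra B}\lb\rho_{A'A}\rb.
\]
Theorem \ref{thm:NonLockEmax}, applied to $\tau_{A'BC}$ with the roles $A\leftrightarrow A'$, $B\leftrightarrow B$, $B'\leftrightarrow C$, then yields
\[
E^{A':BC}_{\max}\lb\tau_{A'BC}\rb \leq 2\log_2(d_C) + E^{A':B}_{\max}\lb\tau_{A'B}\rb.
\]

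Next, the right-hand side is bounded above by $2\log_2(d_C) + E_{\max}\lb T^{A\ra B}\rb$ directly from the definition of $E_{\max}$ for a channel, since $\tau_{A'B}$ is one particular output of $T^{A\ra B}$. Taking the supremum of the left-hand side over all input states $\rho_{A'A}$ and all dimensions $d_{A'}\in\N$, the bound is preserved and yields $E_{\max}\lb T^{A\ra BC}\rb \leq 2\log_2(d_C) + E_{\max}\lb T^{A\ra B}\rb$, which is the claim. There is no real obstacle here: all of the work has already been done in Theorem \ref{thm:NonLockEmax}, so the corollary is a direct specialization of non-lockability from tripartite states to channels whose output carries an extra tensor factor that is being discarded.
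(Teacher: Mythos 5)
Your proposal is correct and is exactly the argument the paper intends: the corollary is stated as an immediate application of Theorem \ref{thm:NonLockEmax}, and your pointwise application to each output state $T^{A\ra BC}(\rho_{A'A})$ followed by the supremum over inputs is the right way to fill in that omitted step. Nothing further is needed.
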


The previous corollary is used in Section \ref{sec:Flower} to show that the bound from Corollary \ref{cor:RegUpperBoundP2} improves on both the transposition bound (see Theorem \ref{thm:transpBound}) and the squashed entanglement bound (see Theorem \ref{thm:squashBound}).

\subsection{Simplified upper bounds}
\label{sec:SimplUpp}

The optimizations over input states and separable states make it hard to compute $E_{\max}\lb T\rb$ (see \eqref{equ:maxRelEntrQChan}) for a concrete quantum channel $T:\M_{d_A}\ra\M_{d_B}$. In the following we will give a slightly simpler bound in terms of the quantity
\[
B_{\max}\lb T\rb = \min\lset D_{\max}\lb C_T\| C_S\rb ~:~ S:\M_{d_A}\ra\M_{d_B}\text{ entanglement breaking quantum channel}\rset.
\]
Here $C_T$ and $C_S$ denote Choi matrices (see \eqref{equ:Choi}) of the channels $T$ and $S$. Recall that a quantum channel $S:\M_{d_A}\ra\M_{d_B}$ is called entanglement breaking~\cite{horodecki2003entanglement} iff $S^{A\ra B}\lb\rho_{A'A}\rb$ is separable for any bipartite state $\rho_{A'A}\in\D\lb\C^{d_{A'}}\otimes \C^{d_A}\rb$ and where $A'$ is a system of any dimension. This is equivalent to separability of the Choi matrix $C_S$. Note that since $\text{tr}_B\lb C_S\rb = \one_{d_{A'}}/d_{A'}$ (where we used that $C_S = S^{A\ra B}\lb\omega_{A'A}\rb$ now for $d_{A'}=d_A$) the above quantity is in general different from $E^{A':B}_{\max}\lb C_T\rb$.

\begin{thm}[Simplified upper bound]

For a quantum channel $T:\M_{d_A}\ra\M_{d_B}$ we have
\[
E_{\max}\lb T\rb\leq B_{\max}\lb T\rb.
\]
\label{thm:SimplerUpperBound}
\end{thm}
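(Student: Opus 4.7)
The plan is to convert the max-relative entropy inequality on Choi matrices into an operator inequality on channel outputs via the Choi--Jamio\l kowski correspondence, and then exploit the entanglement-breaking property of $S$ to upper bound the max-relative entropy of entanglement of each output.

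First I would dispose of the trivial case: if $D_{\max}(C_T\|C_S)=\infty$ for every entanglement breaking $S$, then $B_{\max}(T)=\infty$ and there is nothing to prove. So fix an arbitrary entanglement breaking channel $S:\M_{d_A}\to\M_{d_B}$ with $D_{\max}(C_T\|C_S)=\lambda<\infty$. By the definition of $D_{\max}$ in \eqref{equ:defDmax} this means $C_T\leq 2^{\lambda}C_S$ as operators, and hence $2^{\lambda}C_S - C_T\geq 0$.

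The central observation is that, by the Choi--Jamio\l kowski isomorphism, the Choi matrix of $2^{\lambda}S - T$ is precisely $2^{\lambda}C_S - C_T$, so $2^{\lambda}S - T$ is a completely positive (albeit not trace preserving) map. Applied on the $A$ system of an arbitrary bipartite input state $\rho_{A'A}\in\D(\C^{d_{A'}}\otimes\C^{d_A})$, complete positivity yields
\[
0 \leq (2^{\lambda}S - T)^{A\to B}(\rho_{A'A}) = 2^{\lambda}\,S^{A\to B}(\rho_{A'A}) - T^{A\to B}(\rho_{A'A}),
\]
which by \eqref{equ:defDmax} is equivalent to
\[
D_{\max}\!\bigl(T^{A\to B}(\rho_{A'A})\,\big\|\, S^{A\to B}(\rho_{A'A})\bigr) \leq \lambda.
\]

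Now I use that $S$ is entanglement breaking: by definition $S^{A\to B}(\rho_{A'A})\in\text{Sep}_{A':B}(\C^{d_{A'}}\otimes\C^{d_B})$ for every bipartite input. Therefore $S^{A\to B}(\rho_{A'A})$ is a feasible point in the minimization defining $E^{A':B}_{\max}$, and we conclude
\[
E^{A':B}_{\max}\!\bigl(T^{A\to B}(\rho_{A'A})\bigr) \leq D_{\max}\!\bigl(T^{A\to B}(\rho_{A'A})\,\big\|\, S^{A\to B}(\rho_{A'A})\bigr) \leq D_{\max}(C_T\|C_S).
\]
Taking the supremum over $\rho_{A'A}$ and $d_{A'}$ on the left gives $E_{\max}(T)\leq D_{\max}(C_T\|C_S)$, and taking the infimum over entanglement breaking channels $S$ on the right gives $E_{\max}(T)\leq B_{\max}(T)$, as desired.

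There is no real obstacle: the proof is essentially a one-line use of Choi--Jamio\l kowski together with the definition of entanglement breaking channels. The only point that requires any care is the support condition for $D_{\max}$ to be finite, which is handled automatically by restricting to entanglement breaking $S$ with $\text{supp}(C_T)\subseteq\text{supp}(C_S)$ (otherwise the bound is vacuous).
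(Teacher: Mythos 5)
Your proof is correct and is essentially identical to the paper's argument: both use that $C_T\leq 2^{\lambda}C_S$ means $2^{\lambda}S-T$ is completely positive (Choi's theorem), hence $T^{A\to B}(\rho_{A'A})\leq 2^{\lambda}S^{A\to B}(\rho_{A'A})$ for every input, and that the separable output of the entanglement breaking channel $S$ is a feasible point in the minimization defining $E^{A':B}_{\max}$. The only cosmetic difference is the direction of presentation (you start from the Choi-matrix inequality and push it to channel outputs, the paper starts from the outputs and reduces to the Choi matrices), plus your explicit handling of the infinite case, which is harmless.
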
 

\begin{proof}

Let $S:\M_{d_A}\ra\M_{d_B}$ be an entanglement breaking channel. For any bipartite quantum state $\rho_{A'A}\in\D\lb\C^{d_{A'}\otimes \C^{d_A}}\rb$ the output state $S^{A\ra B}\lb\rho_{A'A}\rb$ is separable. Therefore we have
\begin{align*}
E^{A':B}_{\max}\lb T^{A\ra B}\lb \rho_{A'A}\rb\rb &= \inf\lset D_{\max}\lb T^{A\ra B}\lb\rho_{A'A}\rb\|\sigma_{A'B}\rb ~:~ \sigma_{A'B}\in\text{Sep}_{A':B}\lb\C^{d_{A'}}\otimes \C^{d_B}\rb\rset \\
&\leq D_{\max}\lb T^{A\ra B}\lb\rho_{A'A}\rb\|S^{A\ra B}\lb\rho_{A'A}\rb\rb \\
& = \inf\lset \lambda\geq 0 : T^{A\ra B}\lb\rho_{A'A}\rb \leq 2^{\lambda}S^{A\ra B}\lb\rho_{A'A}\rb\rset.
\end{align*}
The condition in the last infimum is certainly fulfilled if the linear map $2^{\lambda}S - T$ is completely positive (in this case the condition holds for any state $\rho_{A'A}$). Expressing complete positivity of this linear map in terms of the Choi matrix~\cite{choi1975completely} yields 
\[
E^{A':B}_{\max}\lb T^{A\ra B}\lb \rho_{A'A}\rb\rb \leq \inf\lset \lambda\geq 0 : C_T \leq 2^{\lambda}C_S\rset = D_{\max}\lb C_T\| C_S\rb,  
\]
where $C_T = T^{A\ra B}\lb\omega_{A'A}\rb$ denotes the Choi matrix of $T$ (and $C_S$ the Choi matrix of $S$). As the previous bound holds for any input state $\rho_{A'A}$ and any entanglement breaking channel $S:\M_{d_A}\ra\M_{d_B}$ the proof is finished.

\end{proof}

\section{Applications}

\subsection{Flower channels}
\label{sec:Flower}

Here we will compare the bound from Corollary \ref{cor:RegUpperBoundP2} to previously known bounds. Numerical computations show that for the qubit depolarizing channel, the qubit erasure channel, and the qubit amplitude damping channel our bound does not outperform the transposition bound. It should also be noted that for channels implementable from their image (see Definition \ref{defn:ImpleFrIm} in Appendix A) including all teleportation-covariant channels, the bound based on the relative entropy of entanglement (see \cite{pirandola2015ultimate}) performs better than our bound (based on the max-relative entropy of entanglement). However, for many important quantum channels (e.g.~the channels considered in this section and in Section \ref{sec:non-rep}) it is currently not known whether they are implementable from their image. Moreover, in Appendix A we provide an example of a quantum channel which cannot be implemented from its image. Instead of estimating our bound for the commonly used standard examples, we will consider a particular construction of quantum channels in high dimensions. This exploits the non-lockability of our bound to outperform the previously known bounds. As the transposition bound (see Theorem \ref{thm:transpBound}) only upper bounds $\mathcal{Q}_\leftrightarrow$ and not $\mathcal{P}_\leftrightarrow$ we will only consider the former quantity in this section.

Here we will use a particular family of channels (so called flower channels) for which the transposition bound (see Theorem \ref{thm:transpBound}), the bound based on the squashed entanglement (see Theorem \ref{thm:squashBound}), and thereby also the entanglement cost bound (see Theorem \ref{thm:ECostBound}) perform exceptionally badly. The reason of this bad performance is that all these bounds are lockable~\cite{horodecki2005locking}. The new bound based on the max-relative entropy is non-lockable (cf. Corollary \ref{cor:RedChanUp}), which leads to an improvement compared to the other bounds. Moreover, the improvement can be made arbitrarily large by increasing the dimension of the channels.  

For $d\in\N$ consider the so called ``flower'' states given by
\begin{equation}
\rho^f_{AA'BB'} = \frac{1}{2d}\sum^d_{i,k=1}\sum^2_{j,l=1} \bra{k}U^\dagger_{l}U_j\ket{i}~ \proji{ii}{kk}{AB}\otimes\proji{jj}{ll}{A'B'}\in\D\lb\C^{d}\otimes \C^{2}\otimes\C^{d}\otimes \C^2\rb
\label{equ:flowerState}
\end{equation}
where $U_1 = \one_d$ and $U_2$ is the quantum Fourier transformation with entries 
\[
(U_2)_{j,k} = \frac{1}{\sqrt{d}}e^{2\pi ijk/d}
\]
for $j,k\in\lset 1,\ldots ,d\rset$. In \cite{horodecki2005locking} and \cite{christandl2005uncertainty} several entanglement measures have been computed for these states. The squashed entanglement (see \eqref{equ:squash}) is given by (see \cite[Proposition 4]{christandl2005uncertainty})
\begin{equation}
E^{AA':BB'}_{\text{sq}}\lb \rho^f_{AA'BB'}\rb = 1 + \frac{1}{2}\log_2(d),
\label{equ:FlowerSquash}
\end{equation}
and the logarithmic negativity is given by (see \cite[p. 2]{horodecki2005locking}) 
\begin{equation}
\log_2\lb\|(\rho^f_{AA'BB'})^{T_{BB'}}\|_1\rb = \log_2\lb\sqrt{d}+1\rb.
\label{equ:FlowerTrans}
\end{equation}
Note that the previous quantities are unbounded in the limit $d\ra \infty$. However, the actual entanglement in the states $\rho^f_{AA'BB'}$ is small, because tracing out the 2-dimensional system $B'$ leads a separable state 
\[
\rho^f_{AA'B} = \frac{1}{2d}\sum^d_{i=1}\sum^2_{j=1} \proji{ii}{ii}{AB}\otimes \proji{j}{j}{A'} .
\]
 
The two marginals of a flower state $\rho^f_{AA'BB'}$ fulfill $\rho^f_{BB'} = \rho^f_{AA'} = \one_{2d}/(2d)$. Therefore, by the Choi-Jamiolkowski isomorphism~\cite{choi1975completely} there is a unital quantum channel $T^{AA'\ra BB'}_f:\M_{2d}\ra\M_{2d}$ with Choi matrix $\rho^f_{AA'BB'}$. We call this channel a flower channel. Note that the reduced channel $T_f^{AA'\ra B} = \text{tr}_{B'}\circ T_f^{AA'\ra BB'}$ is entanglement breaking as its Choi matrix is the separable state $\rho^f_{AA'B}$. This implies that $E_{\max}\lb T_f^{AA'\ra B}\rb = 0$ and using Corollary \ref{cor:RedChanUp} and the non-regularized bound from Corollary \ref{cor:RegUpperBoundP2} we get 
\begin{equation}
Q_2\lb T_f^{AA'\ra BB'}\rb\leq E_{\max}\lb T_f^{AA'\ra BB'}\rb\leq 2 + E_{\max}\lb T_f^{AA'\ra B}\rb = 2.
\label{equ:FlowerMaxBound}
\end{equation}
We can also estimate the transposition bound (see Theorem \ref{thm:transpBound}) and the bound based on the squashed entanglement (see Theorem \ref{thm:squashBound}). By \eqref{equ:FlowerTrans} and \eqref{equ:FlowerSquash} we have 
\[
\log_2\lb\| \vartheta_{BB'}\circ T_f^{AA'\ra BB'}\|_\diamond\rb \geq \log_2\lb\|(\rho^f_{AA'BB'})^{T_{BB'}}\|_1\rb = \log_2\lb\sqrt{d}+1\rb
\]
\[
E_{sq}\lb T_f^{AA'\ra BB'}\rb \geq E^{AA':BB'}_{\text{sq}}\lb \rho^f_{AA'BB'}\rb = 1 + \frac{1}{2}\log_2(d). \]
These computations show that \eqref{equ:FlowerMaxBound} improves upon the squashed entanglement and by the discussion following Theorem \ref{thm:ECostBound} also upon the entanglement cost bound for $d>2$. For $d>9$ our bound also improves upon the transposition bound. All these improvements can be made arbitrary large by increasing the dimension $d$.

\subsection{Non-repeatable private capacity}
\label{sec:non-rep}

In \cite{bauml2015limitations} a general paradigm has been introduced for sharing key using several quantum states sequentially connecting communication nodes to bridge a possibly long distance between the communicating parties $A$ and $B$. Consider the case where only one intermediate node $C$ connected to $A$ and $B$ by quantum states $\rho^{(1)}_{AC}$ and $\rho^{(2)}_{CB}$ is available. The supremum of rates with which private key can be established between $A$ and $B$ using arbitrary LOCC-operations acting on many copies of the two states is the repeatable key rate $\mathcal{K}_{A\leftrightarrow C\leftrightarrow B}\lb\rho^{(1)}_{AC},\rho^{(2)}_{CB}\rb$ (see \cite{bauml2015limitations}). 

It is clear that in the same scenario any pair of states with distillable entanglement~\cite{bennett1996concentrating} can be used to create entanglement between $A$ and $B$ by first distilling maximally entangled states between connecting $A$, $C$ and $C$, $B$ and then using a standard repeater protocol. A similar statement is false when distillable key (instead of distillable entanglement) is considered. In particular there are bipartite quantum states $\rho_d\in\D\lb\C^{d}\otimes\C^{2}\otimes \C^{d}\otimes \C^{2}\rb$ (see \cite{bauml2015limitations}) from which private key can be extracted at rate close to $1$, but for which the repeatable key rate fulfills $\mathcal{K}_{A\leftrightarrow C\leftrightarrow B}\lb\rho_d,\rho_d\rb\approx 0$.

Here we introduce the private repeater capacity of a pair of quantum channels. This is a channel-version of the repeatable key rate with one intermediate node. Again the two parties $A$ and $B$ communicate via an intermediate communication node $C$ but now use two quantum channels (from $A$ to $C$ and from $C$ to $B$) and arbitrary classical communication (between all three parties) to establish their secret key. 

Note that this is a more realistic scenario than the state-version of \cite{bauml2015limitations}. It is conceivable that in an actual communication scenario the communicating parties have quantum channels to establish the quantum correlations for the creation of a secret key. But then it would be artificial to restrict possible protocols to those creating a number of copies of a fixed quantum state which are then used to obtain a secret key (see \cite{bauml2015limitations}). Here we consider general protocols allowing for different inputs for the quantum channels at each stage of the protocol possibly depending on measurement outcomes and classical information shared at earlier stages. 

Even in this general framework there are channels with non-repeatable private capacity. In particular we give an example of quantum channels (which are derived from the family of states considered in~\cite{bauml2015limitations}) with private capacity $\mathcal{P}_\leftrightarrow$ close to $1$, but arbitrarily small private repeater capacity. We begin with the definition of the private repeater capacity.

\begin{defn}[Repeater coding schemes assisted by classical communication]\hfill \\
Let $T_1:\M_{d_{A'}}\to\M_{d_{C'}}$ and $T_2:\M_{d_{C''}}\to\M_{d_{B'}}$ denote two quantum channels where $C'$ and $C''$ denote systems controlled by a party $C$. A $(k,m^1, m^2,\epsilon)$-repeater coding scheme for private state generation assisted by classical communication (see Figure \ref{fig:repCodScheme}) is given by a word $w\in\lset 1,2\rset^{m}$ for $m=m^1 + m^2$ with $|\lset i: w_i = 1\rset| = m^1$ (and $|\lset i: w_i = 2\rset| = m^2$), a separable initial state 
\[
\rho^{(1)}\in \begin{cases}\text{Sep}_{A_1 A':C_1:B_1}\lb\C^{d_{A_1}d_{A'}}\otimes \C^{d_{C_1}}\otimes \C^{d_{B_1}}\rb , ~\text{if }w_1 = 1  \\
\text{Sep}_{A_1:C_1C':B_1}\lb\C^{d_{A_1}}\otimes \C^{d_{C_1}d_{C'}}\otimes \C^{d_{B_1}}\rb, ~\text{if }w_1 = 2
\end{cases}
\]
and a set of LOCC-operations (w.r.t. the bipartition into $A,B$ and $C$ systems)
\[
L_i:\M_{d_{A_i}}\otimes \M_{d_{C_i}}\otimes \M_{d_{B_i}}\otimes \M_{d_{D_{w_i}}}\ra \M_{d_{A_{i+1}}}\otimes\M_{d_{C_{i+1}}}\otimes \M_{d_{B_{i+1}}}\otimes \M_{d_{E_{w_i}}} 
\]  
for each $i\in\lset 1,\ldots ,m-1\rset$ and 
\[
L_{m}:\M_{d_{A_m}}\otimes \M_{d_{C_m}}\otimes \M_{d_{B_m}}\otimes\M_{d_{D_{w_m}}} \ra \M_{d_{\tilde{A}}}\otimes \M_{d_{\tilde{B}}}.
\]
Here we set $D_{1} = A'$ (i.e.\ a system at party $A$) and $D_{2} = C''$ (i.e.\ a system at party $C$) and in the same way $E_{1} = C'$ (i.e.\ a system at party $C$) and $E_{2} = B'$ (i.e.\ a system at party $B$). The dimensions $d_{A_i}, d_{B_i}, d_{C_i}\in\N$ and $d_{\tilde{A}} = d_{\tilde{B}}$ are arbitrary. Furthermore, we require the output state
\[
\phi^{(m^1,m^2)}_{\tilde{A}\tilde{B}} = L_{m}\circ\prod^{m-1}_{i=1} \lb T_{w_i}^{D_{w_i}\ra E_{w_i}}\circ L_i\rb\circ T_{w_1}^{D_{w_1}\ra E_{w_1}}\lb\rho^{(1)}\rb
\]
to fulfill 
\[
\epsilon = \frac{1}{2}\| \phi^{(m^1,m^2)}_{\tilde{A}\tilde{B}} - \gamma_{\tilde{A}\tilde{B}}\|_1 
\]
for a private state $\gamma_{\tilde{A}\tilde{B}}$ with $2^k$-dimensional key part (see Definition \ref{defn:PrivState}).

\label{defn:RepCodingScheme}
\end{defn}
   
Note that the order (and number) of channel applications (specified by the word $w$) in the protocols from Definition \ref{defn:RepCodingScheme} is deterministic in the sense, that it cannot depend on outcomes of measurements made during the protocol. This is to avoid the complications from determining the rate of a protocol where the order and number of channel applications is not fixed.

\begin{figure}
\includegraphics[scale=0.6]{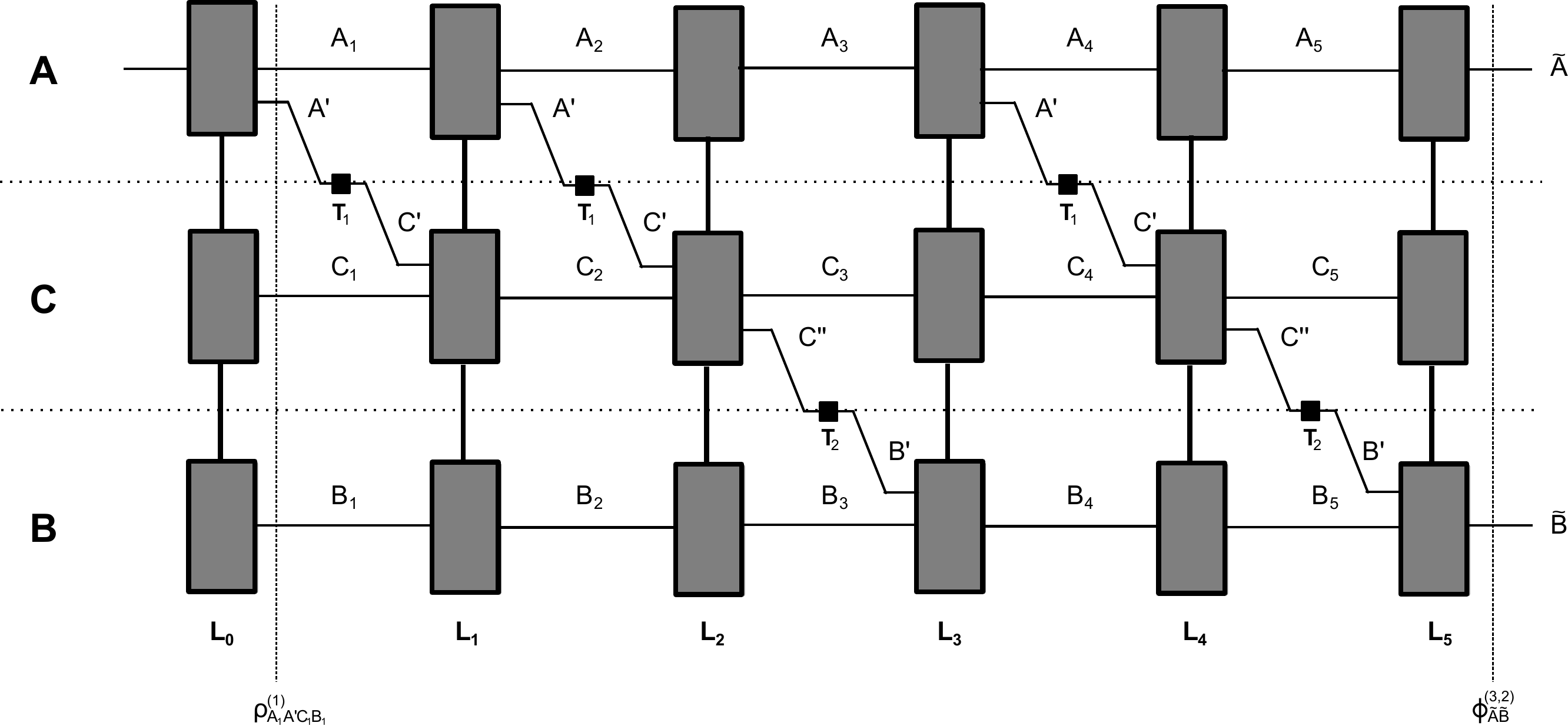}
\caption{Repeater coding scheme for private state generation assisted by classical communication (cf. Definition \ref{defn:RepCodingScheme}) in the case of $m^1=3$ uses of the channel $T_1:\M_{d_{A'}}\ra\M_{d_{C'}}$ and $m^2=2$ uses of the channel $T_2:\M_{d_{C''}}\ra \M_{d_{B'}}$ and channel order $w = (1,1,2,1,2)$. Here $L_0$ denotes an LOCC-operation used to create the separable initial state $\rho^{(1)}_{A_1 A' B_1}$.}
\label{fig:repCodScheme}
\end{figure}

\begin{defn}[Repeated private capacity assisted by classical communication]
We call $R\in\R^+$ an achievable rate for repeated private communication over the quantum channels $T_1:\M_{d_{A'}}\ra\M_{d_{C'}}$ and $T_2:\M_{d_{C''}}\ra\M_{d_{B'}}$ assisted by classical communication iff for each $\nu\in\N$ there exists a $(k_\nu,m^1_\nu,m^2_\nu,\varepsilon_\nu)$-repeater coding scheme for private state generation assisted by classical communication (as in Definition \ref{defn:RepCodingScheme}) with $m^1_\nu, m^2_\nu\ra\infty$ as $\nu\ra\infty$ such that $R=\min\lb\lim_{\nu\to\infty}\frac{k_\nu}{m^1_\nu}, \lim_{\nu\to\infty}\frac{k_\nu}{m^2_\nu}\rb$ and $\lim_{\nu\to\infty}\varepsilon_\nu=0$. The repeated private capacity $\mathcal{P}_{A\leftrightarrow C\leftrightarrow B}(T_1,T_2)$ is defined to be the supremum of all such achievable rates.
\label{defn:PrivCap}
\end{defn}

Before stating our main result we will discuss some properties of the repeated private capacity. For quantum channels $T_1:\M_{d_{A'}}\ra\M_{d_{C'}}$ and $T_2:\M_{d_{C''}}\ra\M_{d_{B'}}$ consider a sequence of coding schemes for $\mathcal{P}_{A\leftrightarrow C\leftrightarrow B}(T_1,T_2)$ achieving a rate $R>0$. By combining the parties $A$ and $C$ (or $C$ and $B$) any such sequence can be transformed into a sequence of coding schemes for $\mathcal{P}_\leftrightarrow\lb T_2\rb$ (or $\mathcal{P}_\leftrightarrow\lb T_1\rb$) achieving at least the same rate $R>0$. Therefore the following bound holds
\begin{equation}
\mathcal{P}_{A\leftrightarrow C\leftrightarrow B}(T_1,T_2) \leq \min\lb \mathcal{P}_\leftrightarrow\lb T_1\rb, \mathcal{P}_\leftrightarrow\lb T_2\rb\rb.
\label{equ:trivBoundRep}
\end{equation}
We also have the following lemma similar to \cite[Lemma 12]{bauml2015limitations}:

\begin{lem}[Transposition trick]

Let $T_1:\M_{d_{A'}}\ra\M_{d_{C'}}$ and $T_2:\M_{d_{C''}}\ra\M_{d_{B'}}$ be two quantum channels such that $\vartheta_{d_{C'}}\circ T_1$ and $T_2\circ \vartheta_{d_{C''}}$ are quantum channels as well (here $\vartheta_d:\M_{d}\ra\M_d$ denotes the matrix transposition in any fixed basis). Then we have
\[
\mathcal{P}_{A\leftrightarrow C\leftrightarrow B}(T_1,T_2) = \mathcal{P}_{A\leftrightarrow C\leftrightarrow B}(\vartheta_{d_{C'}}\circ T_1,T_2\circ \vartheta_{d_{C''}}).
\]
\label{lem:transpTrick}
\end{lem}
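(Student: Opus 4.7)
My plan is to prove equality by giving an explicit parameter-preserving transformation between $(k,m^1,m^2,\epsilon)$-coding schemes for $(T_1,T_2)$ and for $(\tilde{T}_1,\tilde{T}_2):=(\vartheta_{d_{C'}}\circ T_1,\, T_2\circ \vartheta_{d_{C''}})$; since $\vartheta$ is self-inverse and both modified channels are again channels by hypothesis, the same construction handles both directions, yielding the claimed equality of capacities. The central idea, in the spirit of Lemma 12 of \cite{bauml2015limitations}, is that partial transpositions applied to $C$'s quantum systems can be freely inserted along a protocol, absorbed on one side into the channels (producing the extra $\vartheta$'s that turn $T_1, T_2$ into $\tilde{T}_1, \tilde{T}_2$) and on the other side into $C$'s local actions inside the LOCC operations and the initial state.

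More concretely, I would write $\Theta_i^-$ for the partial transposition (in a fixed basis) on all of $C$'s systems held immediately before the LOCC $L_i$ in the original protocol and $\Theta_i^+$ for its analogue immediately after $L_i$; note that $\Theta_m^+ = \ident$, since $L_m$ outputs only on $\tilde{A}\tilde{B}$. The transformed protocol for $(\tilde{T}_1,\tilde{T}_2)$ takes initial state $\tilde{\rho}^{(1)} = \Theta_0^+(\rho^{(1)})$ and modified LOCC operations $\tilde{L}_i = \Theta_i^+\circ L_i\circ \Theta_i^-$. The key computation is the identity
\[
\Theta_i^-\circ \tilde{T}_{w_i}^{D_{w_i}\ra E_{w_i}}\circ \Theta_{i-1}^+ \;=\; T_{w_i}^{D_{w_i}\ra E_{w_i}},
\]
verified case-by-case in $w_i\in\{1,2\}$ by splitting the surrounding $\Theta$'s into the factor on the newly created (resp.\ consumed) $C$-system, which cancels the explicit $\vartheta$ inside $\tilde{T}_{w_i}$, and the factor on the $C$-systems untouched by $T_{w_i}$, which commutes past the channel and squares to $\ident$. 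Telescoping this identity, together with $\Theta_m^+ = \ident$, reduces the output of the new protocol to exactly the original $\phi^{(m^1,m^2)}_{\tilde{A}\tilde{B}}$, so the error $\epsilon$ is preserved.

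What remains is verifying admissibility of the transformed protocol. Separability of $\tilde{\rho}^{(1)}$ is immediate from invariance of tripartite separability under partial transposition on one party. For $i<m$, the operation $\tilde{L}_i=\Theta_i^+\circ L_i\circ \Theta_i^-$ is tripartite LOCC because any Kraus implementation of $L_i$ remains valid if each Kraus operator $K^C_j$ used by $C$ is replaced by its entrywise conjugate $\overline{K^C_j}$, which is precisely the effect of sandwiching $L_i$ by $\Theta_C$. The subtle case, and the main obstacle I anticipate, is $\tilde{L}_m = L_m\circ \Theta_m^-$, where a raw, non-CP transposition appears at the input. I plan to handle it by writing $L_m = \tr_{C_{\mathrm{out}}}\circ L_m^{\mathrm{ext}}$ for an LOCC $L_m^{\mathrm{ext}}$ that retains a nontrivial $C$-output, so that, using $\tr\circ\vartheta=\tr$,
\[
L_m \circ \Theta_m^- \;=\; \tr_{C_{\mathrm{out}}}\circ \lb\Theta_{C_{\mathrm{out}}}\circ L_m^{\mathrm{ext}}\circ \Theta_m^-\rb,
\]
and the bracketed map is LOCC by the same conjugation-by-$\vartheta$ argument, while the trailing partial trace preserves LOCC. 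This shows that every achievable rate for $(T_1,T_2)$ is achievable for $(\tilde{T}_1,\tilde{T}_2)$ with the identical error, and the converse direction follows by repeating the construction with $\vartheta$ reinstalled.
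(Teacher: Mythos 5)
Your proof is correct and follows essentially the same route as the paper's: sandwich the initial state, the LOCC operations and the channels with partial transpositions on $C$'s systems, check the telescoping identity $\vartheta_{\hat{C}'_i}\circ \tilde{T}_{w_i}\circ\vartheta_{\hat{C}_i} = T_{w_i}$, and use the Kraus-conjugation argument to see that the sandwiched operations are still LOCC. Your extension-and-trace treatment of the final step $\tilde{L}_m = L_m\circ\Theta_m^-$ carefully justifies a point the paper only asserts (that this composition is LOCC because there is no $C$-output), but this is a refinement within the same argument rather than a different approach.
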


\begin{proof}

The proof goes by transforming any protocol for the channels $T_1$ and $T_2$ into a protocol for the channels $\tilde{T}_1 = \vartheta_{d_{C'}}\circ T_1$ and $\tilde{T}_2 = T_2\circ \vartheta_{d_{C''}}$ leaving the output state unchanged. For $m^1,m^2\in\N$ consider a word $w\in\lset 1,2\rset^{m}$ for $m=m^1 + m^2$ with $|\lset i: w_i = 1\rset| = m^1$ (and $|\lset i: w_i = 2\rset| = m^2$). Now consider a protocol for repeated private state generation over the quantum channels $T_1:\M_{d_{A'}}\ra\M_{d_{C'}}$ and $T_2:\M_{d_{C''}}\ra\M_{d_{B'}}$ assisted by classical communication as in Definition \ref{defn:RepCodingScheme} where $w$ specifies the order of channel uses. This protocol is given by a the set of LOCC-operations $\lset L_i\rset^{m}_{i=1}$ (w.r.t. to the parties $A,B$ and $C$) and initial state $\rho^{(1)}$, creating the output state (see Definition \ref{defn:RepCodingScheme})
\[
\phi^{(m^1,m^2)}_{\tilde{A}\tilde{B}} = L_{m}\circ\prod^{m-1}_{i=1} \lb T_{w_i}^{D_{w_i}\ra E_{w_i}}\circ L_i\rb\circ T_{w_1}^{D_{w_1}\ra E_{w_1}}\lb\rho^{(1)}\rb. 
\]
For each $i\in\lset 1,\ldots ,m-1\rset$ we can define new LOCC-operations by 
\[
\tilde{L}_i = \vartheta_{\hat{C}_{i+1}}\circ L_i\circ \vartheta_{\hat{C}'_i}
\] 
where we denote by $\hat{C}'_{i}$ all systems at party $C$ in step $i$ after the channel (either $T_1$ or $T_2$) has been applied (see Definition \ref{defn:RepCodingScheme}). Similarly we denote by $\hat{C}_{i+1}$ all systems at party $C$ before the channel has been applied. The $\tilde{L}_i$ are indeed LOCC-operations, which can be seen from writing $L_i$ in its Kraus-decomposition (according to \eqref{LOCCkraus}) and applying the partial transpositions. In the final step we define 
\[
\tilde{L}_{m} = L_{m}\circ \vartheta_{\hat{C}'_{m}},
\]
which is again LOCC (w.r.t. to the $A,B$ and $C$ systems) as there is no $C$ system at the output of this map. We also define a new initial state $\tilde{\rho}^{(1)}$ by
\begin{equation}
\tilde{\rho}^{(1)} = \vartheta_{\hat{C}_1}\lb\rho^{(1)}\rb, 
\label{equ:initialStateTransf}
\end{equation}
which is a state since $\rho^{(1)}$ was chosen to be separable (see Definition \ref{defn:RepCodingScheme}).

Now note that the LOCC-operations $\lset \tilde{L}_i\rset^{m}_{i=1}$ with initial state $\tilde{\rho}^{(1)}$ define a new protocol for repeated private state generation (with the same word $w$ as before) for the transposed channels $\tilde{T}_1= \vartheta_{d_{C'}}\circ T_1$ and $\tilde{T}_2 = T_2\circ \vartheta_{d_{C''}}$. The output state of the new protocol can be computed and is given by
\begin{align*}
\tilde{\phi}^{(m^1,m^2)}_{\tilde{A}\tilde{B}} &= \tilde{L}_{m}\circ\prod^{m-1}_{i=1} \lb \tilde{T}_{w_i}^{D_{w_i}\ra E_{w_i}}\circ \tilde{L}_i\rb\circ \tilde{T}_{w_1}^{D_{w_1}\ra E_{w_1}}\lb\tilde{\rho}^{(1)}\rb \\ 
&= L_{m}\circ \vartheta_{\hat{C}'_{m}}\circ \prod^{m-1}_{i=1} \lb \tilde{T}_{w_i}^{D_{w_i}\ra E_{w_i}}\circ \vartheta_{\hat{C}_{i+1}}\circ L_i\circ \vartheta_{\hat{C}'_i}\rb\circ \tilde{T}_{w_1}^{D_{w_1}\ra E_{w_1}}\lb\vartheta_{\hat{C}_1}\lb\rho^{(1)}\rb\rb \\
&= L_{m}\circ\prod^{m-1}_{i=1} \lb T_{w_i}^{D_{w_i}\ra E_{w_i}}\circ L_i\rb\circ T_{w_1}^{D_{w_1}\ra E_{w_1}}\lb\rho^{(1)}\rb = \phi^{(m^1,m^2)}_{\tilde{A}\tilde{B}}
\end{align*}
where we used that 
\[
\vartheta_{\hat{C}'_i}\circ \tilde{T}_{w_i}^{D_{w_i}\ra E_{w_i}}\circ\vartheta_{\hat{C}_i} = T^{D_{w_i}\ra E_{w_i}}_{w_i}
\]
for each $i\in\lset 1,\ldots ,m\rset$. This shows that any protocol for the channels $T_1$ and $T_2$ corresponds to a protocol for the channels $\vartheta_{d_{C'}}\circ T_1$ and $T_2\circ \vartheta_{d_{C''}}$ with the same output state and hence the same error. Therefore, the achievable rates for both scenarios are the same and so are their capacities. 

\end{proof}

We will need a particular state constructed in \cite{bauml2015limitations}. Consider the state $\rho_d\in\D\lb\C^{d_{A'}}\otimes \C^{d_A}\otimes \C^{d_{B'}}\otimes \C^{d_B}\rb$ for $d_{A}= d_{B} = d$ and $d_{A'}=d_{B'} = 2$ defined as 
\begin{equation}
\rho_d = \frac{1}{2}\begin{pmatrix}
(1-p(d))\frac{\one_d}{d}\otimes \frac{\one_d}{d} & 0 & 0 & (1-p(d))X \\
0 & p(d)\sqrt{YY^\dagger} & 0 & 0 \\
0 & 0 & p(d)\sqrt{Y^\dagger Y} & 0 \\
(1-p(d))X^\dagger & 0 & 0 & (1-p(d))\frac{\one_d}{d}\otimes \frac{\one_d}{d} 
\end{pmatrix}.
\label{equ:approxPBit}
\end{equation}
Here we used $p(d) = \frac{1}{\sqrt{d}+1}$ and matrices 
\begin{align*}
X &= \frac{1}{d\sqrt{d}}\sum^d_{i,j = 1} u_{ij}\proj{ij}{ji} \\
Y &= \frac{1}{d}\sum^d_{i,j = 1} u_{ij}\proj{ii}{jj}
\end{align*}
where $U = (u_{ij})_{ij}$ denotes the quantum Fourier transform given by 
\[
U\ket{k} = \frac{1}{\sqrt{d}}\sum^d_{j=1} e^{2\pi ijk/d}\ket{j}. 
\]
The state $\rho_d$ has been constructed such that it has positive partial transpose, but it is also close to a private state. More specifically we have
\[
\|\rho_d - \gamma_2\|_1 \leq 2p(d) = \frac{2}{\sqrt{d}+1}
\] 
for the private state
\[
\gamma_2 = \frac{1}{2}\begin{pmatrix}
\frac{\one_d}{d}\otimes \frac{\one_d}{d} & 0 & 0 & X \\
0 & 0 & 0 & 0 \\
0 & 0 & 0 & 0 \\
X^\dagger & 0 & 0 & \frac{\one_d}{d}\otimes \frac{\one_d}{d} 
\end{pmatrix}
\]
with 2-dimensional key part (see \cite{bauml2015limitations}). Now we can state the main result of this section:

\begin{thm}[Non-repeatable private capacity]

There is a quantum channel $T_d:\M_{2}\otimes\M_{d}\ra\M_{2}\otimes \M_{d}$ such that
\[
\mathcal{P}_{\leftrightarrow}\lb T_d\rb \geq 1 - h_2\lb\frac{1}{\sqrt{d} + 1}\rb\ra 1,
\]
but 
\[
\mathcal{P}_{A\leftrightarrow C \leftrightarrow B}\lb T_d,T_d\rb\leq \log_2\lb 1+\frac{1}{\sqrt{d} + 1}\rb \ra 0
\]
as $d\ra\infty$. Here $h_2(x) = -x\log_2(x)-(1-x)\log_2(1-x)$ denotes the binary entropy.

\end{thm}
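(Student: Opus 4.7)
The plan is to build $T_d$ directly from the PPT approximation $\rho_d$ to a private state recalled in \eqref{equ:approxPBit} from \cite{bauml2015limitations}. Because $\rho_d$ has uniform marginals $\rho_d^{A'A} = \rho_d^{B'B} = \one_{2d}/(2d)$ (manifest from the block form), the Choi--Jamiolkowski correspondence defines a unital quantum channel $T_d : \M_2 \otimes \M_d \to \M_2 \otimes \M_d$ whose Choi matrix is $\rho_d$. The proof then splits into a direct-coding lower bound on $\mathcal{P}_\leftrightarrow(T_d)$ and a strong-converse upper bound on $\mathcal{P}_{A \leftrightarrow C \leftrightarrow B}(T_d, T_d)$.

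For the lower bound, I would note that feeding half of a maximally entangled state through $T_d$ produces the Choi state $\rho_d$ between $A$ and $B$, so that $\mathcal{P}_\leftrightarrow(T_d)$ is at least the distillable secret key of $\rho_d$. Since $\rho_d$ is $2p(d)$-close in trace norm to the private state $\gamma_2$ with two-dimensional key part, a standard hashing / Devetak--Winter argument (carried out explicitly for this family of states in \cite{bauml2015limitations}) yields a distillable key rate of at least $1-h_2(p(d))$: measuring in the key basis on both sides of $\rho_d$ produces a classical bit pair that agrees up to an error of probability at most $p(d)$, and asymptotic one-way privacy amplification then extracts $1 - h_2(p(d))$ secret bits per copy.

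For the upper bound, the crucial fact is that $\rho_d$ is PPT. Consequently the partial transposes of the Choi matrix of $T_d$ on either side are positive operators, so both $\vartheta \circ T_d$ and $T_d \circ \vartheta$ are CPTP maps. The transposition trick of Lemma \ref{lem:transpTrick}, combined with the trivial single-link bound \eqref{equ:trivBoundRep} and the strong-converse bound of Corollary \ref{cor:RegUpperBoundP2}, then gives
$$\mathcal{P}_{A \leftrightarrow C \leftrightarrow B}(T_d, T_d) \;=\; \mathcal{P}_{A \leftrightarrow C \leftrightarrow B}(\vartheta \circ T_d,\, T_d \circ \vartheta) \;\leq\; \mathcal{P}_\leftrightarrow(\vartheta \circ T_d) \;\leq\; E_{\max}(\vartheta \circ T_d).$$

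The main obstacle is therefore to show $E_{\max}(\vartheta \circ T_d) \leq \log_2\bigl(1 + p(d)\bigr)$. My plan is to invoke the simplified bound of Theorem \ref{thm:SimplerUpperBound}, reducing the task to exhibiting an entanglement-breaking channel $S : \M_2 \otimes \M_d \to \M_2 \otimes \M_d$ whose Choi matrix $C_S$ is separable across $(A'A):(B'B)$, has the uniform marginal required for entanglement-breaking extensions, and satisfies the operator inequality
$$\rho_d^{T_{B'B}} \;\leq\; \bigl(1 + p(d)\bigr)\, C_S.$$
A natural candidate for $C_S$ is the incoherent (block-diagonal in the key basis) part of $\rho_d^{T_{B'B}}$ appropriately renormalized; zeroing out the off-diagonal $X^{T_{B}}$ blocks produces a manifestly classical-classical, hence separable, state, and the ``damage'' on the shield blocks is controlled by the norm of $X$, which is exactly of order $p(d)$ by the design of $\rho_d$ in \cite{bauml2015limitations}. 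Verifying the operator inequality should reduce, via a Schur-complement argument on each $2 \times 2$ key block of $\rho_d^{T_{B'B}}$, to precisely this norm bound. This verification is the principal technical step; once it is in hand, combining the two sides gives the theorem, since $h_2(p(d))$ and $\log_2(1+p(d))$ both tend to zero as $d \to \infty$.
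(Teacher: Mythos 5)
Your overall architecture coincides with the paper's proof: take $T_d$ to be the channel with Choi matrix $\rho_d$, get the lower bound from $\mathcal{P}_\leftrightarrow(T_d)\geq\mathcal{K}_\leftrightarrow(\rho_d)\geq 1-h_2(p(d))$ via \cite{bauml2015limitations}, and get the upper bound from the chain $\mathcal{P}_{A\leftrightarrow C\leftrightarrow B}(T_d,T_d)=\mathcal{P}_{A\leftrightarrow C\leftrightarrow B}(\vartheta\circ T_d,T_d\circ\vartheta)\leq\mathcal{P}_\leftrightarrow(\vartheta\circ T_d)\leq E_{\max}(\vartheta\circ T_d)\leq B_{\max}(\vartheta\circ T_d)\leq D_{\max}\lb\rho_d^{T_{B'B}}\|C_S\rb$. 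That is exactly the paper's route.

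The gap is in the step you yourself flag as the principal one: your candidate $C_S$ does not work. After partial transposition the coherence of $\rho_d^{T_{B'B}}$ sits in the $(01,10)$ key blocks, with off-diagonal block $p(d)Y$ and diagonal blocks $p(d)\sqrt{YY^\dagger}$ and $p(d)\sqrt{Y^\dagger Y}$. Writing $Y=U|Y|$ and conjugating the middle $2\times 2$ block by $\mathrm{diag}(U^\dagger,\one)$, the inequality $\rho_d^{T_{B'B}}\leq 2^{\lambda}\,C_S^{\mathrm{diag}}$ against the (already normalized) block-diagonal part $C_S^{\mathrm{diag}}$ reduces to $\begin{pmatrix}(2^\lambda-1)|Y| & -|Y|\\ -|Y| & (2^\lambda-1)|Y|\end{pmatrix}\geq 0$, which forces $\lambda\geq 1$. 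So your choice yields only $D_{\max}\leq 1$, i.e.\ a bound of one bit on the repeater capacity, which does not tend to zero and does not prove the theorem. Your heuristic is also pointing at the wrong quantity: the off-diagonal block has the \emph{same} operator norm as the diagonal blocks, so no norm bound on $X$ (or $Y$) controls the damage; what is small is the \emph{trace weight} of the coherent key blocks, namely $p(d)$ each. The paper exploits this by doubling the $(01,01)$ and $(10,10)$ diagonal blocks inside $C_S$ (so the middle $2\times2$ block inequality holds with $2^\lambda=1+p$ already built in) and paying for the doubling with an overall renormalization $1/(1+p(d))$, which is cheap precisely because those blocks carry total weight $2p(d)$. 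With that reweighted $C_S$ one checks directly that it is still a separable Choi matrix of a trace-preserving (entanglement-breaking) map and that $\rho_d^{T_{B'B}}\leq(1+p(d))\,C_S$, giving $D_{\max}\leq\log_2(1+p(d))$ as required. Everything else in your proposal is sound.
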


\begin{proof}

Note that $\text{tr}_{BB'}\lb\rho_{d}\rb = \frac{\one_2}{2}\otimes \frac{\one_d}{d}$, which implies that $\rho_d$ (see \eqref{equ:approxPBit}) is the Choi matrix~\cite{choi1975completely} of a quantum channel $T_d$. Moreover, since $\rho_d$ has positive partial transpose both linear maps $\vartheta_{2d}\circ T_d$ and $T_d\circ\vartheta_{2d}$ are also quantum channels. The private capacity of $T_d$ fulfills
\[
\mathcal{P}_{\leftrightarrow}\lb T_d\rb\geq \mathcal{K}_{\leftrightarrow}\lb\rho_d\rb\geq 1 - h_2(\frac{1}{\sqrt{d} + 1}) 
\]
where the second inequality has been proved in \cite[p. 27]{bauml2015limitations}. In order to show the second statement in the theorem we note that by Lemma \ref{lem:transpTrick} and \eqref{equ:trivBoundRep} 
\[
\mathcal{P}_{A\leftrightarrow C \leftrightarrow B}\lb T_d,T_d\rb = \mathcal{P}_{A\leftrightarrow C \leftrightarrow B}\lb \vartheta_{2d}\circ T_d,T_d\circ\vartheta_{2d}\rb\leq \mathcal{P}_{\leftrightarrow}\lb \vartheta_{2d}\circ T_d\rb.
\]  
By the non-regularized bound from Corollary \ref{cor:RegUpperBoundP2} and the simpler bound from Theorem \ref{thm:SimplerUpperBound} we have
\[
\mathcal{P}_{\leftrightarrow}\lb \vartheta_{2d}\circ T_d\rb\leq E_{\max}\lb \vartheta_{2d}\circ T_d\rb \leq B_{\max}\lb \vartheta_{2d}\circ T_d\rb\leq D_{\max}\lb \rho^{T_{B'B}}_{d}\| C_S\rb
\] 
where we choose the separable Choi matrix 
\[
C_S = \frac{1}{2(1+p(d))}\begin{pmatrix}
(1-p(d))\frac{\one_d}{d}\otimes \frac{\one_d}{d} & 0 & 0 & 0 \\
0 & 2p(d)\sqrt{YY^\dagger} & 0 & 0 \\
0 & 0 & 2p(d)\sqrt{Y^\dagger Y} & 0 \\
0 & 0 & 0 & (1-p(d))\frac{\one_d}{d}\otimes \frac{\one_d}{d} 
\end{pmatrix}.
\]
It can be easily checked that $C_S$ is the Choi matrix of an entanglement-breaking channel $S$. Note that
\[
\rho^{T_{B'B}}_{d} = \frac{1}{2}\begin{pmatrix}
(1-p(d))\frac{\one_d}{d}\otimes \frac{\one_d}{d} & 0 & 0 & 0 \\
0 & p(d)\sqrt{YY^\dagger} & p(d)Y & 0 \\
0 & p(d)Y^\dagger & p(d)\sqrt{Y^\dagger Y} & 0 \\
0 & 0 & 0 & (1-p(d))\frac{\one_d}{d}\otimes \frac{\one_d}{d}. 
\end{pmatrix}
\] 
and a straightforward computation shows that $D_{\max}\lb \rho^{T_{B'B}}_{d}\| C_S\rb\leq \log_2(1+p(d))$. This implies that 
\[
\mathcal{P}_{A\leftrightarrow C \leftrightarrow B}\lb T_d,T_d\rb\leq \log_2(1+p(d)) = \log_2\lb 1+\frac{1}{\sqrt{d}+1}\rb.
\] 

\end{proof}

\section{Conclusion}

We established a new inequality involving the sandwiched $\alpha$-R\'{e}nyi divergences and used it to study private communication via quantum channels assisted by classical communication. Specifically, we proved a strong-converse bound on the private capacity assisted by unlimited classical two-way communication. Moreover, this is the first such bound that is non-lockable. We exploited this fact to provide examples of quantum channels for which our bound improves on the transposition bound (Theorem \ref{thm:transpBound}), the squashed entanglement bound (Theorem \ref{thm:squashBound}) and the entanglement cost bound (Theorem \ref{thm:ECostBound}). Furthermore, we used the bound to analyze a quantum repeater version of the private capacity. 

There are some open problems and directions of future research. The main open problem is to show that the relative entropy of entanglement of a quantum channel (instead of the max-relative entropy of entanglement, see \eqref{equ:maxRelEntrQChan}) is an upper bound (and possibly a strong-converse bound) on $P_\leftrightarrow$. So far, this bound has only been shown for teleportation-covariant quantum channels~\cite{pirandola2015ultimate}. Such a result might be obtained from the bound in Theorem \ref{thm:SCBound} (or Corollary \ref{cor:RegUpperBoundP2}) using a smoothing technique (cf. \cite{brandao2010generalization}).    

It should be noted that quantities similar to \eqref{equ:maxRelEntrQChan} for different entanglement measures (replacing the max-relative entropy of entanglement) based on the sandwiched $\alpha$-R\'{e}nyi divergences have been studied before. In \cite{7282883} the $\alpha$-Rains information of a quantum channel (based on a generalization of the Rains bound on distillable entanglement~\cite{rains2001semidefinite}) has been introduced. Here instead of optimizing over separable states leading to an relative entropy of entanglement (cf. Definition \ref{defn:alphaRelEntrEnt}) the optimization runs over a larger set (the so called Rains set) of positive matrices (see \cite{7282883} for details). To our knowledge it is not known whether the $\alpha$-Rains information (for any $\alpha\geq 1$) gives a strong converse bound (or even an upper bound) on $\mathcal{Q}_\leftrightarrow$. For $\alpha = \infty$ this follows almost from our work. The only problem seems to be in the final part of the proof of Theorem \ref{thm:Shifting}, where we cannot reduce the quantity involving the three systems $A,B'$ and $B$ to the Rains information (only involving two systems).

Finally, we should say that the main results from this paper can be extended to infinite dimensional systems using the general framework of non-commutative $L_p$-spaces~\cite{pisier2003non}. This will be contained in future work. 


\section{Acknowledgements}

We thank Mario Berta, Roberto Ferrara, Jedrzej Kaniewski, Christian Majenz, Milan Mosonyi and David Reeb for useful comments and interesting discussions.
We acknowledge financial support from the European Research Council (ERC Grant Agreement
no 337603), the Danish Council for Independent Research (Sapere Aude), the Swiss National
Science Foundation (project no PP00P2 150734) and VILLUM FONDEN via the QMATH Centre of Excellence (Grant No. 10059).

\appendix

\section{Implementability of quantum channels via LOCC operations}

Here we study the class of quantum channels implementable via LOCC-operations from a bipartite state shared between the communicating parties. For such channels the interactive protocols of Definition \ref{defn:CodingScheme} reduce to protocols involving only LOCC-operations performed on copies of the fixed state used for the implementation (see \cite{bennett1996mixed,pirandola2015ultimate}). It is easy to see \cite{pirandola2015ultimate,wilde2016converse} that the distillable entanglement (key) of this state gives an upper bound on the performance of such protocols in the cases of quantum (private) communication.

The reduction of protocols described above is especially interesting when the state used for implementation of the quantum channel is itself preparable using the quantum channel exactly once (see below for a precise definition). This holds e.g.~for teleportation-covariant channels (see \cite{pirandola2015ultimate}). In this case the capacities $\mathcal{Q}_\leftrightarrow$ and $\mathcal{P}_\leftrightarrow$ of the channel are equal to (not only upper bounded by) the distillable entanglement and distillable key respectively of the state used for implementation. Moreover, since this state can be produced using the channel, entanglement measures (e.g.~squashed entanglement, relative entropy of entanglement, etc.) of the state can be related to the corresponding quantities of the channel (see also Theorem \ref{thm:TeleportEntMeas} below). In this way \cite{pirandola2015ultimate, wilde2016converse} derive their upper bounds on the private capacity for particular classes of channels. 

The quantum channels implementable from states using the teleportation protocol have been characterized in~\cite{bowen2001teleportation}. However, in the case of general protocols such a characterization is still missing, and it is not known which quantum channels can be implemented in this way. Here we give an example of a quantum channel, which cannot be implemented by any LOCC-protocol using a state preparable by only a single use of the quantum channel itself. We begin with a definition:  

\begin{defn}
We call a quantum channel $T:\M_{d_A}\ra\M_{d_B}$ \textbf{implementable from its image} if there exists a bipartite quantum state $\sigma_{A''A'}\in\D\lb\C^{d_{A''}}\otimes \C^{d_{A'}}\rb$ for some $d_{A''}\in\N$ and $d_{A'} = d_{A}$ and an LOCC-operation $\Lambda: \M_{d_{A}d_{A''}}\otimes \M_{d_{B'}}\ra \M_{d_B}$ for $d_{B'}=d_{B}$ with respect to the bipartition into $A$ and $B$ systems such that 
\begin{equation}
T^{A\ra B}\lb\rho_{A}\rb = \Lambda^{AA'':B'\ra B}\lb\rho_{A} \otimes (\ident_{A''}\otimes T^{A'\ra B'})\lb\sigma_{A''A'}\rb \rb
\label{equ:Imple}
\end{equation}
for any $\rho_A \in\D_{d_A}$.
\label{defn:ImpleFrIm}
\end{defn} 

Consider an LOCC-monotone $E^{A:B}:\D\lb\C^{d_A}\otimes \C^{d_B}\rb\ra\R_0^+$ for bipartite states. Formally, $E^{A:B}$ is a family of functions depending on the dimensions $d_A$ and $d_B$ decreasing under LOCC-operations applied to the input (LOCC with respect to the chosen bipartition $A:B$). To simplify notation we will omit the dependence on the dimensions. Now we define an associated quantity for quantum channels $T:\M_{d_A}\ra\M_{d_B}$ by setting
\[
E\lb T\rb = \sup_{\rho_{A'A}} E^{A':B}\lb\lb\ident_{A'}\otimes T^{A\ra B}\rb\lb\rho_{A'A}\rb\rb.
\]  
where the supremum is over states $\rho_{A'A}\in\mathcal{D}\lb\C^{d_{A'}}\otimes \C^{d_A}\rb$ with arbitrary $d_{A'}\in\N$ (note that this quantity is not finite in general, but it will be in the examples we consider). We have the following simple consequence for quantum channels implementable from their image:

\begin{thm}
For any LOCC-monotone $E^{A:B}$ and any quantum channel $T:\M_{d_A}\ra\M_{d_B}$ implementable from its image, i.e. of the form \eqref{equ:Imple} for some state $\sigma_{A''A'}\in\D\lb \C^{d_{A''}}\otimes \C^{d_{A'}}\rb$, we have
\[
E\lb T\rb = E^{A'':B'}\lb \lb\ident_{A''}\otimes T^{A'\ra B'}\rb\lb\sigma_{A''A'}\rb\rb.
\]
\label{thm:TeleportEntMeas}
\end{thm}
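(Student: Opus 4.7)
The plan is to prove the two inequalities $\geq$ and $\leq$ separately, both essentially direct consequences of the LOCC-monotonicity of $E^{A:B}$ combined with the implementation formula \eqref{equ:Imple}.

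For the easy direction ($\geq$), I will simply plug the state $\sigma_{A''A'}$ into the supremum defining $E(T)$. Concretely, using $d_{A'}=d_A$, the state $\sigma_{A''A'}$ is an admissible choice of $\rho_{A'A}$ in the supremum (with $A''$ playing the role of the reference system $A'$ there and $A'$ playing the role of the input $A$), so $E(T) \geq E^{A'':B'}((\ident_{A''}\otimes T^{A'\to B'})(\sigma_{A''A'}))$ after renaming $B=B'$.

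For the harder direction ($\leq$), I will fix an arbitrary input state $\rho_{\tilde{A}A}$ (using $\tilde{A}$ to avoid a name clash with the $A'$ appearing in Definition \ref{defn:ImpleFrIm}) and apply the implementation formula \eqref{equ:Imple} on the $A$ subsystem. This rewrites
\[
(\ident_{\tilde{A}}\otimes T^{A\to B})(\rho_{\tilde{A}A}) = \lb\ident_{\tilde{A}}\otimes \Lambda^{AA'':B'\to B}\rb\lb \rho_{\tilde{A}A}\otimes (\ident_{A''}\otimes T^{A'\to B'})(\sigma_{A''A'})\rb.
\]
Since $\Lambda$ is LOCC with respect to the $AA'':B'$ bipartition, trivially extending it with the identity on $\tilde{A}$ (which sits on Alice's side) keeps it LOCC with respect to the $\tilde{A}AA'':B'$ bipartition, so LOCC-monotonicity of $E$ gives
\[
E^{\tilde{A}:B}\lb (\ident_{\tilde{A}}\otimes T^{A\to B})(\rho_{\tilde{A}A})\rb \leq E^{\tilde{A}AA'':B'}\lb \rho_{\tilde{A}A}\otimes (\ident_{A''}\otimes T^{A'\to B'})(\sigma_{A''A'})\rb.
\]

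To finish, I will observe that tensoring in a state $\rho_{\tilde{A}A}$ supported entirely on Alice's side is itself an LOCC operation (Alice prepares it locally), so LOCC-monotonicity again yields
\[
E^{\tilde{A}AA'':B'}\lb \rho_{\tilde{A}A}\otimes (\ident_{A''}\otimes T^{A'\to B'})(\sigma_{A''A'})\rb \leq E^{A'':B'}\lb (\ident_{A''}\otimes T^{A'\to B'})(\sigma_{A''A'})\rb.
\]
Chaining the two inequalities and taking the supremum over $\rho_{\tilde{A}A}$ gives the upper bound on $E(T)$, matching the lower bound and proving equality. The only thing one has to be careful about is the notational bookkeeping — matching up the reference systems in Definition \ref{defn:ImpleFrIm} with the ones in the supremum defining $E(T)$, and confirming that both steps (extending $\Lambda$ by $\ident_{\tilde{A}}$, and adjoining $\rho_{\tilde{A}A}$) really are LOCC with respect to the correct bipartition. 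Everything else is formal.
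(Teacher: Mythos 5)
Your proof is correct and follows essentially the same route as the paper's: the lower bound by choosing $\sigma_{A''A'}$ as the input state in the supremum, and the upper bound by rewriting the channel output via \eqref{equ:Imple}, applying LOCC-monotonicity to $\ident\otimes\Lambda$, and then discarding the uncorrelated locally prepared system (which the paper likewise justifies by noting that adding or removing a local uncorrelated system is LOCC). The notational bookkeeping you flag is handled the same way in the paper, which introduces a fresh reference system $A'''$ for the supremum.
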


\begin{proof}
The inequality ``$\geq$'' is clear. As $E^{A:B}$ is an LOCC-monotone we have
\begin{align*}
E\lb T\rb &= \sup_{\rho_{A'''A}} E^{A''':B}\lb\lb\ident_{A'''}\otimes T^{A\ra B}\rb\lb\rho_{A'''A}\rb\rb \\
&= \sup_{\rho_{A'''A}} E^{A''':B}\lb \lb\ident_{A'''}\otimes \Lambda^{AA'':B'\ra B}\rb\lb\rho_{A'''A} \otimes (\ident_{A''}\otimes T^{A'\ra B'})\lb\sigma_{A''A'}\rb \rb\rb \\
&\leq \sup_{\rho_{A'''A}} E^{A'''AA'':B'}\lb \rho_{A'''A} \otimes (\ident_{A''}\otimes T^{A'\ra B'})\lb\sigma_{A''A'}\rb \rb \\
& = E^{A'':B'}\lb(\ident_{A''}\otimes T^{A'\ra B'})\lb\sigma_{A''A'}\rb \rb.
\end{align*}
Here the last equality follows from the fact that removing or adding a local uncorrelated system is an LOCC-operation. 
\end{proof}

In the following we will only evaluate the LOCC-monotones $E_R$ and $E_{sq}$ on bipartite states where the systems in the bipartition are clear from context. Therefore, we will omit the indices denoting these systems to simplify notation. Now we can present the main result of this appendix:

\begin{thm}
There exists a quantum channel $T:\M_{d_A}\ra\M_{d_B}$ for some dimensions $d_A,d_B\in\N$ that is not implementable from its image, i.e.~there is no state $\sigma_{A''A'}$ and LOCC protocol $\Lambda$ such that $T$ can be written as in \eqref{equ:Imple}.
\label{thm:nonImpl}
\end{thm}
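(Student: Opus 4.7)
The plan is to argue by contradiction using Theorem \ref{thm:TeleportEntMeas}. Suppose $T:\M_{d_A}\ra\M_{d_B}$ is implementable from its image via a state $\sigma_{A''A'}$ and LOCC protocol $\Lambda$. Then for every LOCC-monotone $E$ one has
\[
E(T) = E^{A'':B'}\lb(\ident_{A''}\otimes T^{A'\ra B'})(\sigma_{A''A'})\rb.
\]
In particular, the \emph{single} output state $\mu := (\ident_{A''}\otimes T^{A'\ra B'})(\sigma_{A''A'})$ would have to simultaneously attain the supremum defining $E(T)$ for every LOCC-monotone $E$ at once. To disprove implementability for a specific $T$, it therefore suffices to exhibit two LOCC-monotones $E_1, E_2$ and thresholds $a_1, a_2$ such that $E_i(T)\geq a_i$ for $i=1,2$, but no bipartite state $\mu$ satisfies both $E_1(\mu)\geq a_1$ and $E_2(\mu)\geq a_2$ on the output bipartition.

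First I would pick a channel $T$ whose Choi matrix (or some derived object) exhibits a large separation between different entanglement measures, drawing on the constructions already used in the paper. A natural candidate is a variant of the flower channel from Section~\ref{sec:Flower}: by Corollary~\ref{cor:RedChanUp}, non-lockability forces $E_{\max}(T)$ to stay bounded by an absolute constant, while the squashed entanglement of the flower Choi state pushes $E_{\text{sq}}(T)$ above $1+\tfrac{1}{2}\log_2 d$. An alternative is the transposed channel $\vartheta_{2d}\circ T_d$ built from the approximate private states of Section~\ref{sec:non-rep}, for which Theorem~\ref{thm:SimplerUpperBound} (via an entanglement-breaking Choi construction) pins one measure from above while the distillable key of $\rho_d$ provides a matching lower bound on another LOCC-monotone. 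Having fixed $T$, I would select $(E_1, E_2)$ from $\{E_{\max}, E_R, E_{\text{sq}}, \text{negativity}\}$ so that the quantitative estimates already available in the paper give both inequalities $E_i(T)\geq a_i$.

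The second, and main, task is to certify the incompatibility (property~(ii) above) at the state level, which will be the main obstacle. This is delicate because the auxiliary dimension $d_{A''}$ in the definition of implementability is arbitrary, so no compactness or dimensional argument on the space of admissible $\mu$ is available; the obstruction has to be intrinsic. I would approach this by (a) using the locking/non-lockability machinery of Section~\ref{sec:Lock} to control one measure of $\mu$ on the relevant cut, (b) using known quantitative relations between the two measures (possibly after restricting to purifications or invoking Schmidt-rank arguments as in the simplification after \eqref{equ:maxRelEntrQChan}) to conclude that the second measure is forced below its required threshold, and (c) if a single pair $(E_1, E_2)$ is insufficient, combining several LOCC-monotones so that their joint constraints on $\mu$ become infeasible. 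Once this is in place, the contradiction is immediate: any implementing $\sigma_{A''A'}$ would produce a state $\mu$ forbidden by the joint constraint, hence no such $\sigma_{A''A'}$ and $\Lambda$ can exist, and the constructed $T$ is the desired non-implementable example.
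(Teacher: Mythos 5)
Your starting point is right: Theorem \ref{thm:TeleportEntMeas} does force a single state $\mu=(\ident_{A''}\otimes T^{A'\ra B'})(\sigma_{A''A'})$ to attain $E(T)$ for \emph{every} LOCC-monotone simultaneously, and the contradiction must come from playing two monotones against each other. But the incompatibility criterion you formulate --- ``no bipartite state $\mu$ satisfies both $E_1(\mu)\geq a_1$ and $E_2(\mu)\geq a_2$'' --- can essentially never hold for standard entanglement monotones with separately achievable thresholds, because a (near-)maximally entangled state on the output cut makes every faithful monotone large at once. The constraint that actually has bite is that $\mu$ lies in the \emph{image} of the channel, and one must engineer the channel so that, within its image, the two monotones cannot be large simultaneously. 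Your candidate channels do not do this: for the flower channel the Choi state itself already has $E_{\text{sq}}$ large and $E_R\leq 2$, i.e.\ it realizes both extremal values of the gap in a single state, so no contradiction arises; the same issue afflicts the PPT channel of Section \ref{sec:non-rep}. A gap between two measures for one channel is not an obstruction to implementability.

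The missing idea is the switch construction: the paper takes $T=T_0\otimes P_0+T_1\otimes P_1$, where $T_0$ (from tensor powers of the antisymmetric state) has $E_R$ large but $E_{\text{sq}}$ small, and $T_1$ (a flower channel) has $E_R\leq 2$ but $E_{\text{sq}}$ large. The classical switch register forces any image state to decompose as $p\,(\text{output of }T_0)\otimes\proj{0}{0}+(1-p)\,(\text{output of }T_1)\otimes\proj{1}{1}$, and then convexity of $E_R$ and $E_{\text{sq}}$ gives $E_R(T)\leq p\,E_R(T_0)+(1-p)E_R(T_1)$ and likewise for $E_{\text{sq}}$. Combined with $E_R(T_0)\leq E_R(T)$ and $E_{\text{sq}}(T_1)\leq E_{\text{sq}}(T)$ and the quantitative separations \eqref{equ:dichotomy1}--\eqref{equ:dichotomy2}, this forces $p=1$ and $p=0$ simultaneously. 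One also needs that $T_0$ is teleportation-implementable (Weyl covariance of the antisymmetric state) so that $E(T_0)=E(\tau_0)$ exactly, pinning down both measures of $T_0$. None of your proposed tools (locking machinery, inter-measure inequalities, Schmidt-rank arguments) substitutes for this block-decomposition-plus-convexity step, so as it stands the proposal does not yield a proof.
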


For the proof we will need some special states. The antisymmetric state $\alpha_d\in\D\lb\C^d\otimes \C^d\rb$ for $d\geq 2$ is defined as 
\[
\alpha_d = \frac{1}{d(d-1)}\lb\one_d\otimes \one_d - \mathbb{F}_d\rb.
\]
In \cite[Lemma 6]{christandl2012entanglement} it is shown that for even $d\in\N$
\begin{equation}
E_{sq}\lb\alpha_d\rb\leq \log_2\lb \frac{d+2}{d}\rb.
\label{equ:squashedEntAnti}
\end{equation}
It has also been shown in \cite[Corollary 3]{christandl2012entanglement} that for every $d\geq 2$  we have
\[
\lim_{n\ra\infty} \frac{1}{n}E_R\lb\alpha^{\otimes n}_d\rb \geq \log_2\lb\sqrt{\frac{4}{3}}\rb.
\]
Clearly, for any $\epsilon >0$ this implies the existence of an $N_\epsilon\in\N$ such that 
\begin{equation}
E_R\lb\alpha^{\otimes n}_d\rb \geq n \lb\log_2\lb \sqrt{\frac{4}{3}}\rb -\epsilon\rb
\label{equ:relativeEntAnti}
\end{equation}
for all $n\geq N$. We will also use the flower states $\rho^f_d$ from Section \ref{sec:Flower} considered as bipartite states with respect to the bipartition into A and B systems (both $2d$ dimensional, see \eqref{equ:flowerState}). Note that the squashed entanglement of the flower states has an easy formula (see \eqref{equ:FlowerSquash}). Furthermore, as the partial trace $\text{tr}_{B'}\lb\rho^f_d\rb$ over the $2$-dimensional $B'$ system is separable we have (using non-lockability of $E_R$, see \cite{horodecki2005locking}) that
\begin{equation}
E_R\lb\rho^f_d\rb \leq 2 .
\label{equ:relativeFlower}
\end{equation}

Finally observe that for any $n,l\in\N$ and dimension $d=2^n l^n$ we have
\begin{equation}
\tau_0 := \alpha^{\otimes n}_{2l}\in\D\lb\C^d\otimes \C^d\rb
\label{equ:tau0}
\end{equation}
and 
\begin{equation}
\tau_1 := \rho^f_{2^{n-1}l^n}\in\D\lb\C^d\otimes \C^d\rb.
\label{equ:tau1}
\end{equation}
Using the formulas for $E_R$ and $E_{sq}$ from above, and additivity of the squashed entanglement (see \cite[Proposition 4]{christandl2004squashed}) we compute (with $N_{\frac{1}{2}}$ defined before \eqref{equ:relativeEntAnti})
\begin{align*}
E_R\lb \tau_0\rb &\geq n\lb\log_2\lb\sqrt{\frac{4}{3}}\rb - \frac{1}{2}\rb \geq \frac{1}{2}n \text{ for all }n\geq N_{1/2} \\
E_R\lb\tau_1\rb & \leq 2 \\
E_{sq}\lb \tau_0\rb &= n E_{sq}\lb\alpha_{2l}\rb \leq n\log_2\lb 1 + \frac{1}{l}\rb \\
E_{sq}\lb \tau_1\rb &= \frac{1}{2}+\frac{1}{2}n + \frac{1}{2}n\log_2\lb l\rb.
\end{align*}
Therefore, choosing $n,l$ large enough we have
\begin{align}
E_R\lb\tau_1\rb &\ll E_R\lb\tau_0\rb 
\label{equ:dichotomy1}\\
E_{sq}\lb\tau_1\rb &\gg E_{sq}\lb\tau_0\rb.
\label{equ:dichotomy2}
\end{align}

\begin{proof}[Proof of Theorem \ref{thm:nonImpl}]
Choose $n,l\in\N$ large enough such that the states $\tau_0, \tau_1\in\D\lb\C^d\otimes \C^d\rb$ (see \eqref{equ:tau0} and \eqref{equ:tau1}) for $d=2^n l^n$ satisfy \eqref{equ:dichotomy1} and \eqref{equ:dichotomy2}. Now define two channels $T_0,T_1:\M_d\ra\M_d$ with Choi matrices $C_{T_0}=\tau_0$ and $C_{T_1} = \tau_1$ (note that these maps are indeed trace-preserving). 

The quantum channel $T_0$ is teleportation implementable (as the channel corresponding to the antisymmetric state is Weyl-covariant), i.e. it is of the form \eqref{equ:Imple} with $\sigma = \tau_0$ and $\Lambda$ the teleportation protocol (see \cite{bennett1996mixed}). Therefore, we can apply Theorem \ref{thm:TeleportEntMeas} to conclude that 
\begin{align*}
E_R\lb T_0\rb &= E_R\lb\tau_0\rb \\
E_{sq}\lb T_0\rb &= E_{sq}\lb \tau_0\rb
\end{align*}
Let $B'$ denote the 2-dimensional part of the output system of $T_1$ corresponding to the $B'$ system of $\tau_1$ (which is a flower state, see discussion following \eqref{equ:flowerState}). Then $\text{tr}_{B'}\circ T_1$ is entanglement-breaking (as its Choi matrix is separable) and using non-lockability of $E_R$ (see \cite{horodecki2005locking}) and the equations above we have
\begin{equation}
E_R\lb T_1\rb \leq 2 \ll E_R\lb\tau_0\rb = E_R\lb T_0\rb.
\label{equ:relEntrT0T1}
\end{equation} 
For the squashed entanglement we obtain
\begin{equation}
E_{sq}\lb T_0\rb = E_{sq}\lb \tau_0\rb\ll E_{sq}\lb\tau_1\rb\leq E_{sq}\lb T_1\rb.
\label{equ:squashedT0T1}
\end{equation}
Now consider the switch channel $T:\M_d\otimes \M_2\ra\M_{d}\otimes \M_2$ given by 
\[
T = T_0\otimes P_0 + T_1\otimes P_1
\]
with projectors $P_i:\M_2\ra\M_2$ given by $P_i(\rho) = \bra{i}\rho\ket{i}\proj{i}{i}$ for $i\in\lset 0,1\rset$. In the following we denote by ``$a$'' a 2-dimensional system at party $A$ and by ``$b$'' a 2-dimensional system at party $B$. These will denote the switch systems used for the quantum channel $T$. As 
\[
\lb\ident_{A'}\otimes T_i^{A\ra B}\rb\lb\rho_{A'A}\rb\otimes \proj{i}{i}_b = \lb\ident_{A'}\otimes T^{Aa\ra Bb}\rb\lb\rho_{A'A}\otimes \proj{i}{i}_{a}\rb ,
\]
for any $\rho_{A'A}\in\D\lb\C^{d_{A'}}\otimes \C^{d_A}\rb$ and $i\in\lset 0,1\rset$ we conclude that 
\begin{align}
E_R\lb T_i\rb\leq E_R\lb T\rb \label{equ:relEntOrder} \\
E_{sq}\lb T_i\rb \leq E_{sq}\lb T\rb .
\label{equ:squashOrder}
\end{align}

Assume now that $T$ is implementable from its image and let $\sigma_{A''A'a'}\in\D\lb \C^{d_{A''}}\otimes \C^{d_{A'}}\otimes \C^2\rb$ denote the state used for the implementation as in \eqref{equ:Imple}. Note that the dimension $d_{A''}$ is arbitrary and we consider the joint system $A'a'$ as the input (therefore taking the role of $A'$ in \eqref{equ:Imple}) for the channel. We can write 
\[
\sigma_{A''A'a'} = \sum^1_{i=0}\sum^1_{j=0} X_{A''A'}^{ij}\otimes \proj{i}{j}_{a'}
\]
with matrices $X_{A''A'}^{ij}\in \M_{d_{A''}}\otimes \M_{d_{A'}}$. Positivity of $\sigma_{A''A'a'}$ implies positivity of $X_{A''A'}^{00}$ and $X_{A''A'}^{11}$. Now we have
\[
\lb\ident_{A''}\otimes T^{A'a'\ra B'b'}\rb\lb\sigma_{A''A'a'}\rb = \sum^1_{i=0} \lb\ident_{A''}\otimes T_i^{A'\ra B'}\rb\lb X_{A''A'}^{ii}\rb\otimes \proj{i}{i}_{b'} .
\] 
As $\sigma_{A''A'a'}$ is normalized we can write
\begin{align}
&\lb\ident_{A''}\otimes T^{A'a'\ra B'b'}\rb\lb\sigma_{A''A'a'}\rb \nonumber \\
&= p\lb\ident_{A''}\otimes T_0^{A'\ra B'}\rb\lb \sigma_{A''A'}^{0}\rb\otimes \proj{0}{0}_{b'} + (1-p)\lb\ident_{A''}\otimes T_1^{A'\ra B'}\rb\lb \sigma_{A''A'}^{1}\rb\otimes \proj{1}{1}_{b'}
\label{equ:decompTeleState}
\end{align}
for $p = \text{tr}\lb X_{A''A'}^{00}\rb\in\lbr 0,1\rbr$ and states 
\[
\sigma_{A''A'}^i =\begin{cases} \frac{1}{\text{tr}\lb X_{A''A'}^{ii}\rb}X_{A''A'}^{ii}, &\text{ if }\text{tr}\lb X_{A''A'}^{ii}\rb\neq 0\\
0, &\text{ else.} \end{cases}
\]
Note that $p$ only depends on $\sigma$. Now applying Theorem \ref{thm:TeleportEntMeas} (as $E_R$ and $E_{sq}$ are LOCC-monotones) together with \eqref{equ:decompTeleState} and convexity of $E_R$ and $E_{sq}$ (see \cite[Proposition 3]{christandl2004squashed} for the latter) we obtain  
\begin{align}
E_R\lb T\rb &\leq p E_R\lb T_0\rb + (1-p)E_{R}\lb T_1\rb \label{equ:FinalRelEntr} \\
E_{sq}\lb T\rb &\leq p E_{sq}\lb T_0\rb + (1-p)E_{sq}\lb T_1\rb \label{equ:FinalSquash}
\end{align} 
Finally, it follows from \eqref{equ:relEntOrder}, \eqref{equ:FinalRelEntr} and \eqref{equ:relEntrT0T1} that 
\[
E_R\lb T_0\rb\leq E_R\lb T\rb \leq p E_R\lb T_0\rb + (1-p)E_R\lb T_1\rb\leq E_R\lb T_0\rb.
\]
As $E_R\lb T_1\rb\ll E_R\lb T_0\rb$ this implies that $p=1$. The same line of reasoning for the squashed entanglement using \eqref{equ:squashOrder}, \eqref{equ:FinalSquash} and \eqref{equ:squashedT0T1} gives
\[
E_{sq}\lb T_1\rb\leq E_{sq}\lb T\rb \leq p E_{sq}\lb T_0\rb + (1-p)E_{sq}\lb T_1\rb\leq E_{sq}\lb T_1\rb.
\]
As $E_{sq}\lb T_0\rb\ll E_{sq}\lb T_1\rb$ this implies that $p=0$ which is a contradiction to the previous derivation.
\end{proof}

Note that the quantum channel $T$ constructed in the previous example might be implementable using LOCC-operations and a state that can be prepared from two or more uses of the channel. This would be the case for example if the channel $T_1$ (coming from the flower state) would be implementable from its image. The reduction technique for interactive protocols (see \cite{pirandola2015ultimate}) would still apply then, however relating $\mathcal{Q}_\leftrightarrow$ (or $\mathcal{P}_\leftrightarrow$) to distillable entanglement (or distillable key) of a more complicated state. It is then not clear how to obtain e.g.~the bound based on the relative entropy of entanglement of the quantum channel from the methods of \cite{pirandola2015ultimate} without an additional factor depending on the number of channel uses to prepare this state.

\bibliographystyle{IEEEtran}

\end{document}